\documentclass[11pt]{article}
\pdfoutput=1 %for arxiv

\usepackage{lmodern}
\usepackage{booktabs}
\usepackage[margin=1in]{geometry}
\geometry{letterpaper}

\usepackage{amsfonts,amssymb,amsmath,amsthm,mathtools,thmtools}
\usepackage{microtype,xspace,bm,floatrow}
\usepackage{braket}
\usepackage{comment}
\usepackage{mdframed}

\usepackage{doi}
\usepackage{hyperref}
\usepackage[dvipsnames]{xcolor}
\hypersetup{
	colorlinks=true,
	urlcolor=MidnightBlue,
	linkcolor=MidnightBlue,
	citecolor=MidnightBlue,
	unicode
}

%Currently inputting YZ macros, probably has redundancy
%% credit: YZ24

\newcommand{\negl}{\mathsf{negl}}

\newcommand{\defeq}{:=}

\newcommand{\calX}{\mathcal{X}}

\newcommand{\FF}{\mathbb{F}}
\newcommand{\vv}{{v}}

\newcommand{\vx}{{x}}

\newcommand{\vz}{{z}}
\newcommand{\ve}{{e}}

\newcommand{\GRS}{\mathrm{GRS}}

\newcommand{\hw}{\mathsf{hw}}
\newcommand{\hwUF}{\mathsf{hw}}

\newcommand{\bad}{\mathsf{BAD}}
\newcommand{\good}{\mathsf{GOOD}}
\newcommand{\Tr}{\mathrm{Tr}}
\newcommand{\mfol}{(m)}
\newcommand{\listdecode}{\mathsf{ListDecode}}
\newcommand{\decode}{\mathsf{Decode}}

\newcommand{\RS}{\mathrm{RS}}
\newcommand{\vzero}{0}

\newcommand{\gooderrors}{\mathcal{G}}
\newcommand{\baderrors}{\mathcal{B}}
\newcommand{\dist}{\mathcal{D}}

%PoQRO

\newcommand{\QFT}{\mathsf{QFT}}

%equivocal collision-resistant hash%

%signature

%PKE

%games

\usepackage{enumitem}
\setlist{  
  listparindent=\parindent,
  parsep=0pt,
}

\usepackage[capitalise,nameinlink,noabbrev]{cleveref}
\crefname{equation}{}{}

\usepackage{algorithm}
\usepackage[noend]{algpseudocode}

%%% FIGURES SETTINGS
\usepackage{tikz}
\usepackage{graphicx}
\usetikzlibrary{positioning,arrows.meta,math,shapes.geometric,decorations.pathmorphing}

\usepackage[font=small,labelfont=bf]{caption} %customise text style on fig captions
\usepackage{subcaption}

\usepackage[english]{babel}
\addto\extrasenglish{}
\addto\extrasenglish{}

\usepackage{lpic}

% Bold math in section headings
\makeatletter
\DeclareRobustCommand\bfseries{%
  \not@math@alphabet\bfseries\mathbf
  \fontseries\bfdefault\selectfont\boldmath}
\makeatother

%% Theorem environments

\declaretheorem[name=Theorem]{theorem}

\declaretheorem[name=Lemma,sibling=theorem]{lemma}

\declaretheorem[name=Claim,sibling=theorem]{claim}
\declaretheorem[name=Definition, sibling=theorem, style=definition]{definition}
\declaretheorem[name=Fact,sibling=theorem]{fact}

\DeclareMathOperator{\poly}{poly}
\DeclareMathOperator{\codim}{codim}

%%% CLASSES

\newcommand{\newclass}[2]{\newcommand{#1}{{\text{\upshape\sffamily #2}}\xspace}}

\newclass{\NP}{NP}
\newclass{\cA}{A}
\newclass{\cB}{B}
\newclass{\BPP}{BPP}
\newclass{\BQP}{BQP}
\newclass{\TFNP}{TFNP}

%%% PROBLEMS

\newcommand{\newprob}[2]{\newcommand{#1}{{\text{\upshape\scshape #2}}\xspace}}
\newprob{\NC}{NullCodeword}
\newprob{\NCshort}{NC}
\newprob{\bNC}{BiNC}
\newprob{\TbNC}{T-BiNC}
\newprob{\YZ}{NC}
\newprob{\TYZ}{T-NC}
\newprob{\AZC}{AllZeroColumn}

%% Distribution notation
\renewcommand{\H}{\mathbf{H}}
\renewcommand{\Pr}{\mathbf{Pr}}
\newcommand{\Ex}{\mathbf{E}}
\newcommand{\Hmin}{\H_\infty}

%% Problem-specific notation

\newcommand{\eqdef}{\coloneqq}

\newcommand{\AlgYZ}{\mathcal{A}^{\mathrm{YZ}}}
\newcommand{\AlgT}{\mathcal{A}^{\mathrm{T}}}
\newcommand{\Uadd}{U_{\mathrm{add}}}
\newcommand{\Udecode}{U_{\mathrm{decode}}}

%% Mathcal letters
\newcommand{\cX}{\mathcal{X}}

%% Code related notation
\renewcommand{\FF}{\mathbb{F}}
\renewcommand{\RS}{\mathrm{RS}}
\renewcommand{\GRS}{\mathrm{GRS}}
\newcommand{\fold}[1]{{#1}^{\mathsf{f}}}
\newcommand{\unfold}[1]{{#1}^{\mathsf{u}}}

% To adjust bibliography spacing:
\let\OLDthebibliography\thebibliography
\renewcommand\thebibliography[1]{
  \OLDthebibliography{#1}
  \setlength{\itemsep}{.4ex}
}

\newcommand{\Xcomment}[1]{{}}

\newcommand{\E}{\mathbb{E}}
\newcommand{\Bool}{\{0,1\}}

\newcommand{\bias}[1]{#1}
\newcommand{\unf}[1]{\overline{#1}}
\newcommand{\suc}{\mathsf{suc}}
\newcommand{\CN}{\mathtt{N}}
\newcommand{\polylog}{\mathrm{polylog}}

% =========================================================================== %
\begin{document}
% --------------------------------------------------------------------------- %

\mbox{}\vspace{12mm}

\begin{center}
{\huge Quantum Communication Advantage in \TFNP}
\\[1cm] \large
	
\setlength\tabcolsep{1.4em}
\begin{tabular}{cccc}
Mika Göös&
Tom Gur&
Siddhartha Jain&
Jiawei Li\\[-1mm]
\small\slshape EPFL&
\small\slshape Cambridge &
\small\slshape UT Austin &
\small\slshape UT Austin

\end{tabular}
	
\vspace{2em}
	
\large
% {\itshape\color{Maroon} Draft of \today}
\today
	
\vspace{1em}
\end{center}
	
\begin{abstract}
\noindent
We exhibit a total search problem with classically verifiable solutions whose communication complexity in the quantum SMP model is exponentially smaller than in the classical two-way randomized model. Our problem is a bipartite version of a query complexity problem recently introduced by Yamakawa and Zhandry~({\footnotesize JACM 2024}). We prove the classical lower bound using the structure-vs-randomness paradigm for analyzing communication protocols.
\end{abstract}

\section{Introduction}

A foremost goal in quantum complexity is to demonstrate an exponential quantum advantage. A prevalent theme is that we know how to demonstrate exponential quantum advantage for ``structured'' problems. For unstructured search, Grover's algorithm famously only achieves a quadratic speedup, which we know to be tight~\cite{BBBV}. The canonical model for unstructured problems is \emph{total boolean functions}, $f\colon \Bool^n \rightarrow \Bool$, and we know that for any total boolean function the quantum queries needed to evaluate it is at most a 4th power of the number of classical queries \cite{ABKST21}.

The world of communication complexity is not immune to this difficulty, and a superpolynomial quantum communication speedup for total boolean functions is not known. The holy grail is again to have a speedup for a total boolean function, $f\colon \Bool^n \times \Bool^n \rightarrow \Bool$. Alas, we do not have any such candidates for a super-cubic speedup, and at the same time we cannot prove that quantum and classical communication complexities are polynomially related like in the query setting. 

The picture is completely different for partial functions and search problems, which have a promise about the structure of the problem input. A long line of work on partial problems~\cite{BCW98,Raz99,Yossef04, GKK+08,KR11,Gav16,Gav19,Gavinsky21,GirishRT22, li2024classical} has culminated in strong separations results: Gavinsky~\cite{Gav16} and Girish, Raz, and Tal~\cite{GirishRT22} exhibited partial functions with an exponential separation between the quantum SMP (simultaneous message passing) model and the classical two-way randomized communication model.

We employ a different way to impose structure on the problem: we consider total \NP search problems, which are relations lying in (the communication analogue of) the complexity class \TFNP. The seminal work of Bar-Yossef, Jayram, and Kerenidis~\cite{Yossef04} showed an exponential quantum--classical separation in the restricted setting of one-way protocols for a total \NP search problem (the \emph{hidden matching} problem). Since then, no progress has been made for $\TFNP$ problems.

In a recent breakthrough, Yamakawa and Zhandry~\cite{Yamakawa2022} introduced a new total \NP search problem that exhibits an exponential quantum speedup in the query complexity model. Their problem is also considered ``unstructured,'' because the quantum advantage holds over the \emph{uniform} distribution, a question raised by Aaronson and Ambainis~\cite{AA14}. In this work, we define a communication variant of their problem, called \emph{Bipartite NullCodeword}, and obtain the strongest quantum--classical communication separation so far for any total~\NP search problem.

\begin{restatable}[Main result]{theorem}{ThmMain}
\label{thm:main}
There exists a total two-party search problem $S\subseteq \{0,1\}^N\times\{0,1\}^N\times Q$ for $N = 2^{\Theta(n)}$ that
\begin{enumerate}
    \item admits a $\operatorname{poly}(n)$-bit protocol in the quantum SMP model, but
    \item requires $2^{n^{\Omega(1)}}$ bits of communication in the randomized two-way communication model (even under the uniform input distribution), and
    \item the relation is classically verifiable (in \NP) and thus the quantum advantage holds in \TFNP.
\end{enumerate}
\end{restatable}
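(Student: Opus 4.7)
The plan is to first define the \emph{Bipartite NullCodeword} problem $\bNC$. Indexing by $(i,a) \in [n]\times \Sigma$ for a large alphabet $\Sigma$, Alice holds $x \in \Bool^N$ and Bob holds $y \in \Bool^N$, with $N = n|\Sigma| = 2^{\Theta(n)}$. A valid output is a nonzero codeword $c$ in a Reed--Solomon-type code $C \subseteq \Sigma^n$ such that $x_{(i,c_i)} \oplus y_{(i,c_i)} = 0$ for every coordinate $i \in [n]$. The interpretation is that the random function $f\colon [n]\times \Sigma \to \Bool$ of Yamakawa--Zhandry is XOR-secret-shared between the two parties, so totality and $\NP$-verifiability of the bipartite version are inherited from the query-model analogue.

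For the quantum SMP upper bound, each party compresses its input into a $\poly(n)$-qubit quantum state that serves as a one-shot quantum oracle for its share of $f$, and sends it to the referee. The referee combines the two states so that applying them coherently reproduces the initial state that the Yamakawa--Zhandry quantum algorithm prepares when granted oracle access to $f = f_A \oplus f_B$. Running the Yamakawa--Zhandry algorithm on this combined state outputs a valid codeword with high probability, using only $\poly(n)$ qubits per party and no interaction.

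For the classical lower bound, the plan is to adapt the \textbf{structure-vs-randomness} paradigm to two-way randomized protocols under the uniform distribution on $(x,y)$. Fix a deterministic protocol of cost $T \le 2^{n^{o(1)}}$; every transcript $\pi$ carves out a rectangle $A_\pi \times B_\pi$, and the conditional distribution of $(x,y)$ on this rectangle is a product $\mu_\pi^A \otimes \mu_\pi^B$. One then partitions the coordinates $(i,a) \in [n]\times \Sigma$ into \emph{structured} coordinates, where at least one of the marginals has substantially deficient min-entropy, and \emph{pseudorandom} coordinates, where both marginals remain close to uniform so that $x\oplus y$ is near-uniform on them. A list-decoding bound for the Reed--Solomon code $C$ caps the number of codewords $c$ consistent with the structured coordinates; for each such candidate, pseudorandomness of the remaining coordinates makes the event that all $n$ null equations hold exponentially unlikely. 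A union bound over candidates and transcripts then finishes the argument.

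The hardest step will be calibrating the structured/pseudorandom decomposition so that both halves of the argument hold simultaneously: on one hand the structured part must contain few enough coordinates that the list-decoding bound for $C$ yields only moderately many candidate codewords; on the other hand the pseudorandom part must be both large in size and sufficiently entropic that XOR-mixing between the two marginals wipes out each fixed candidate codeword with overwhelming probability. The correct entropy bookkeeping, inside each of the $2^T$ transcript rectangles, is where the two-way lower bound will be won or lost; any coarse bound would be defeated by the fact that a single candidate codeword must only satisfy $n$ loosely correlated null equations.
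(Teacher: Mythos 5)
Your proposal departs from the paper at the very start by choosing a different bipartite decomposition: you XOR-secret-share the oracle $f$ between Alice and Bob, whereas the paper gives Alice the first $n/2$ oracles $H_1,\ldots,H_{n/2}$ and Bob the last $n/2$. This difference is fatal for the quantum upper bound. In the paper's construction, each state $\ket{\phi_i}\propto\sum_{e:\,H_i(e)=0}\ket{e}$ depends only on a single party's input and can therefore be prepared locally and sent to Charlie; no interaction and no entanglement is needed. In your XOR version, the state required by the Yamakawa--Zhandry algorithm is $\sum_{e:\,f_A(e)\oplus f_B(e)=0}\ket{e}$, which neither party can prepare alone. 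Charlie receives a separable state $\rho_A(x)\otimes\rho_B(y)$ on disjoint registers, and there is no way to coherently correlate Alice's $e$-register with Bob's $e$-register in the SMP model without pre-shared entanglement. (Also, the phrase ``one-shot quantum oracle for its share of $f$'' cannot be taken literally: a $\poly(n)$-qubit state cannot encode an oracle with $2^{\Theta(n)}$-bit truth table, by Holevo.) With shared EPR pairs one can make the XOR-sharing idea work --- this is essentially the Girish--Raz--Tal trick --- but that is a strictly weaker statement than what the theorem and the paper claim, where the QSMP protocol uses no entanglement.

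Two further gaps. First, totality: with XOR sharing the effective oracle $f=f_A\oplus f_B$ is an arbitrary function, and the Yamakawa--Zhandry relation over an arbitrary oracle is not total --- there may be no all-zero codeword at all. Your claim that totality is ``inherited from the query-model analogue'' is not right. The paper fixes this (\cref{defn:tbinc}) by allowing the parties to XOR with a hash from an $O(n)$-wise independent family and requiring $t$ parallel instances; the classical lower bound then needs a direct product theorem plus a union bound over hash keys, none of which your plan mentions. Second, the code itself: list-recoverability strong enough to bound the number of ``dangerous'' codewords conflicts with the dual decodability that the Yamakawa--Zhandry quantum algorithm needs. The paper resolves this by lowering the Reed--Solomon degree to $k=0.1\CN$ and compensating with a $p$-biased input distribution ($p=2^{-6}$). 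Your sketch treats the code and list-decoding as black boxes and never confronts this tension. Your lower-bound outline is otherwise in the right spirit (structure-vs-randomness via density-restoring partitions), though note that for the XOR version a bit $H_i(c_i)$ is ``fixed'' only if both $x_{(i,c_i)}$ and $y_{(i,c_i)}$ are fixed, so the correct definition of ``structured'' for list-recoverability is ``both marginals deficient,'' not ``at least one.''
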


The separation in \cref{thm:main} is qualitatively as strong as one would want. Recall that in a quantum SMP protocol, Alice and Bob send a quantum state to a third-party referee, Charlie, who outputs the answer without further interaction. Even the one-way model can simulate this since Bob can emulate Charlie. Furthermore, our quantum SMP protocol is \emph{computationally efficient}, i.e, all parties can be implemented using $\polylog (N)$ size quantum circuits with oracle access to the input. This is the same notion of efficiency as in the result of~\cite{GirishRT22}. Our protocol also does not use any shared entanglement or randomness, matching the result of Gavinsky~\cite{Gavinsky21}.

Besides being fundamental considerations in theoretical computer science, verifiability and efficiency are also desirable properties for a quantum--classical communication complexity separation to be used as an experiment to demonstrate \emph{unconditional} quantum advantage (a.k.a.\ quantum information supremacy \cite{ABK24}). We see our result as a theoretical feasibility result for designing such an experiment based on communication complexity. We compare our result to several other notable quantum--classical communication separations in \cref{tab:separations}.

\subsection{Other related work}
\paragraph{The Yamakawa--Zhandry problem.}

The Yamakawa--Zhandry problem~\cite{Yamakawa2022} was a breakthrough average-case query separation. Since then, it has proven to be useful in other settings, often by modifying the the problem slightly (like we do). Before our work, it was tweaked to make progress on the question of getting an oracle separation of \textsf{QMA} and \textsf{QCMA} \cite{li2024classical,BK24}. More relevant to our work, Li et al.~\cite{li2024classical} also use the Yamakawa--Zhandry problem to get a new separation between one-way quantum and one-way randomized communication complexity. However, their problem is easy in the classical two-way model since Bob's input is short.

\paragraph{Query-to-communication lifting.}

Most techniques in communication complexity literature which can distinguish quantum and classical communication complexity are tailored to promise problems. One technique that works for total problems (functions and relations) is query-to-communication lifting~\cite{GPW20,CFKMP21}.\footnote{Lifting theorems show that $\operatorname{CC}(f\circ g) = \Omega(\operatorname{Q}(f))$ where $g$ is a boolean function gadget, $\operatorname{CC}$ is a communication complexity measure and $\operatorname{Q}$ is a query complexity measure.} The current strongest separation for a \emph{total boolean function} between the quantum and classical two-way communication is cubic: This is achieved by lifting an analogous cubic separation in query complexity obtained by Sherstov, Storozhenko, and Wu~\cite[Theorem~1.5]{SSW23} and independently by Bansal and Sinha~\cite[Corollary 1.5]{BS21}.

The exponential quantum query speedup from the Yamakawa--Zhandry problem~\cite{Yamakawa2022} can also be lifted to communication (Lifted \NC in \cref{tab:separations}). Given this, we can obtain an exponential separation between quantum and classical two-way communication for an unstructured \emph{relation}. However, we do not know how to simulate queries to the input of the original relation for the lifted relation using less than two rounds of communication. This limits us to a two-way vs.\ two-way separation, as opposed to a quantum SMP vs.\ classical two-way separation. 

\paragraph{\NP relations vs.\ partial functions.}
A partial boolean function can always be converted to a total relation by allowing any output on inputs where the function is undefined; however, the resulting total relation has no guarantees about $\NP$ verifiability (even if we started with a verifiable partial function).\footnote{We thank an anonymous reviewer for this observation.}
Our separation for a total relation and the separation of~\cite{GirishRT22} for a partial function are formally incomparable because in our quantum upper bound, we require no entanglement, and our relation is additionally a total \NP relation. In a different context, a separation for a total \NP relation can be automatically translated to a separation for a partial function. Namely, Ben-David and Kundu~\cite{BenDavid24} show how to do this in the query complexity model by using a pointer-based construction. However, if we tried to implement a similar trick naively for our problem, we would lose the quantum easiness in the SMP model. 

\begin{table}[tb]
\centering

\newcommand{\strong}{\color{ForestGreen}}
\newcommand{\weak}{\color{BrickRed}}
\renewcommand{\arraystretch}{1.3}
\setlength\tabcolsep{0.29em}
\begin{tabular}{l l c c c c c}
\toprule[.5mm]
\bf Candidate problem & \bf Reference & \bf Quantum u.b. & \bf Classical l.b. &  \bf f / R & \bf Totality & \bf \NP \\
\midrule

Vector in Subspace&\cite{Raz99,KR11} & \weak{\bf one-way} &\strong{\bf two-way} &  \strong{\bf function} & \weak{\bf partial} & \weak \bf n\\
%\hline
Gap Hamming Relation&\cite{Gavinsky21} & \strong \bf SMP & \strong \bf two-way & \weak \bf relation & \weak \bf partial  & \weak \bf n
\\
$\textsc{Forrelation}\circ \textsc{Xor}$ & \cite{GirishRT22} & \strong \bf SMP & \strong \bf two-way & \strong \bf function & \weak \bf partial & \weak \bf n
\\ 
Hidden Matching&\cite{Yossef04} & \weak{\bf one-way} & \weak{\bf one-way} & \weak\bf  relation & \strong \bf total & \strong \bf y\\
Lifted \NC&\cite{Yamakawa2022,GPW20} & \weak\bf  two-way & \strong \bf two-way &  \weak \bf relation & \strong \bf total & \strong \bf y
\\
\midrule
Bipartite \NC~~ & This work & \strong\bf  SMP & \strong \bf two-way & \weak \bf relation & \strong{\bf total} & \strong \bf y
\\
\bottomrule[.5mm]
\end{tabular}
\caption{Several notable exponential quantum--classical separations. {\strong \bf Green text} indicates a strong result and {\weak\bf red text} indicates a weak result.}
\label{tab:separations}
\end{table}

\section{Technical overview}
\label{sec:overview}

We construct a relation to prove \cref{thm:main} in two stages. First, we define a natural bipartite analogue of the relation studied by Yamakawa--Zhandry which we call \emph{Bipartite NullCodeword} (\bNC). We show in \cref{sec: binc,sec:lower-bound} that for \bNC there is an average case separation between QSMP and two-way randomized communication for a certain code and distribution. We then convert this into a worst-case separation for a total relation in \cref{sec:total}.

\subsection{The Yamakawa--Zhandry problem and its bipartite variant}

We build upon the Yamakawa--Zhandry query complexity problem~\cite{Yamakawa2022}, which we call the \NC (\NCshort) problem.
In this problem, an error correcting code $C \subseteq \Sigma^n$ is chosen, where~$\Sigma$ is an exponential-size alphabet, namely $|\Sigma|=2^{\Theta(n)}$. The inputs are $n$ random oracles~$H_1, \ldots, H_n \colon \Sigma \to \Bool$, and the goal is to find a codeword $x \in C$ such that $H(x) = 0^n$, where~$H\colon \Sigma^n \to \Bool^n$ is the concatenation of all $H_i$s. 

\begin{lemma}[Informal, \cite{Yamakawa2022}]
    When $C$ is instantiated by a {folded Reed--Solomon} code with certain parameters, there exists a $\poly(n)$-query quantum algorithm (called the Yamakawa--Zhandry  algorithm) solving \NC, but any classical algorithm requires $2^{n^{\Omega(1)}}$ queries. 
\end{lemma}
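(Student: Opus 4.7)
I would prove the two directions separately, leveraging the product structure of $H = (H_1, \dots, H_n)$ in both.

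\textbf{Quantum upper bound.} The plan is to prepare a uniform quantum superposition over $T \defeq H^{-1}(0^n)$, measure, and then classically list-decode. Since $H$ factorizes across coordinates, the target state factorizes as $|T\rangle = |T_1\rangle \otimes \cdots \otimes |T_n\rangle$, where $|T_i\rangle$ is uniform over $T_i \defeq H_i^{-1}(0) \subseteq \Sigma$. For a random $H_i$ we have $|T_i| \approx |\Sigma|/2$, so Grover amplitude amplification on top of the uniform superposition over $\Sigma$ prepares $|T_i\rangle$ in $O(1)$ expected queries to $H_i$. Tensoring gives $|T\rangle$ with $O(n)$ total queries, and a computational-basis measurement produces a uniformly random $y \in T$. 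The key structural step is to show that, for folded Reed--Solomon codes with appropriately chosen rate $R$, folding parameter $s$, and alphabet $|\Sigma|=2^{\Theta(n)}$, such a random $y$ lies within the Guruswami--Rudra list-decoding radius $(1-R-\epsilon)n$ of some codeword $c \in C$ with constant probability; this follows from a volume-packing estimate showing Hamming balls around codewords cover a constant fraction of $\Sigma^n$ (and hence of $T$). Classically list-decoding $y$ yields a $\poly(n)$-size list, at least one of whose elements is a valid solution verified via $H(c) = 0^n$. Standard repetition boosts success to $1 - \negl(n)$.

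\textbf{Classical lower bound.} The plan is a transcript / deferred-sampling argument. Fix any deterministic $q$-query classical algorithm $A$ and sample each value $H_i(\sigma)$ lazily upon first query. Condition on $A$ outputting a codeword $c \in C$ with $H(c)=0^n$, and split the coordinates of $c$ into \emph{certified} ones (where the transcript witnesses $H_i(c_i)=0$) and \emph{lucky} ones (where $H_i(c_i)$ is independent of the transcript and happens to be $0$). Each lucky coordinate contributes an independent factor of $\tfrac{1}{2}$. The strong list-decodability of folded RS beyond the Johnson bound guarantees that any fixing of $\geq (1-\alpha)n$ coordinates is consistent with at most $\poly(n)$ codewords of $C$, so the algorithm's output on any transcript ranges over a small list. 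A union bound over the $\leq 2^q$ possible binary transcripts and these $\poly(n)$-size lists then forces either $q \geq 2^{n^{\Omega(1)}}$ or a success probability of $2^{-n^{\Omega(1)}}$.

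\textbf{Main obstacle.} The delicate point is choosing the folded RS parameters so that the list-decoding radius is large enough that a typical $y \in T$ is decodable (for the quantum upper bound), yet the rate and resulting list sizes are small enough that the classical adversary cannot enumerate consistent codewords (for the lower bound). Aligning both regimes for the same choice of $(R, |\Sigma|, s)$, and carefully handling the adaptivity of the classical algorithm via deferred sampling so that the ``lucky coordinates'' are genuinely independent of the transcript, is the technical heart of the Yamakawa--Zhandry argument.
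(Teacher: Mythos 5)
Your classical lower bound sketch is roughly on the right track (list-recoverability bounds the set of codewords the transcript has substantially fixed, the remaining coordinates contribute independent factors of $1/2$, union bound over the small list), and this is close in spirit to the subcube-protocol argument the paper develops. The quantum upper bound, however, is fundamentally wrong and does not resemble the Yamakawa--Zhandry algorithm.

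The flaw is the ``key structural step'': a uniformly random $y \in T = H^{-1}(0^n)$ is \emph{not} close to any codeword with constant (or even non-negligible) probability. Since each $T_i$ has size roughly $|\Sigma|/2$, $T$ has roughly $(|\Sigma|/2)^n = |\Sigma|^n/2^n$ elements, while the union of Hamming balls of radius $(1-R-\epsilon)n$ around all $|C| \approx |\Sigma|^{Rn}$ codewords has size at most roughly $|\Sigma|^{(1-\epsilon)n}$. For this to cover a constant fraction of $T$ you would need $\epsilon \lesssim 1/\log|\Sigma| = O(1/n)$, but the Guruswami--Rudra list size grows like $q^{\Omega(1/\epsilon)}$ and becomes exponential in that regime. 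So the volume-packing claim fails for every usable parameter setting. A second, independent gap: even if $y$ were near a codeword $c$, $c$ differs from $y$ on the error coordinates, and on those coordinates $H_i(c_i)$ is unconstrained, so nothing forces $c$ to satisfy $H(c)=0^n$.

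The actual Yamakawa--Zhandry algorithm does not sample from $T$ and decode. It prepares \emph{two} states --- a uniform superposition $|\psi\rangle \propto \sum_{x\in C}|x\rangle$ over the code, and the product state $|\phi\rangle = \bigotimes_i |T_i\rangle$ --- applies the quantum Fourier transform to both, and works in Fourier space. There $\widehat{V}$ is supported on the \emph{dual} code $C^\perp$, while $\widehat{W}$ is concentrated on low-Hamming-weight error patterns (weight roughly $n/2$ per coordinate in the uniform case, or $\approx pn$ in the biased case). A unitary version of the \emph{unique decoder for $C^\perp$} then lets one uncompute the first register, and applying $\QFT^{-1}$ produces a state concentrated on strings $z$ with $V(z)W(z)>0$, i.e.\ precisely on codewords in $T$. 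This is why the construction needs to balance two properties of the same folded RS code --- list-recoverability of $C$ (for the classical lower bound) against unique decodability of $C^\perp$ (for the quantum upper bound) --- a tension that is the main parameter-tuning challenge in the paper's \cref{lem:code}, and which your sketch, being entirely on the primal side, does not encounter at all.
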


Our communication problem is a bipartite analogue of \NC which we call \bNC. In \bNC, we give the first half of the oracles $H_1, \ldots, H_{n/2}$ as input to Alice, and the second half $H_{n/2+1}, \ldots, H_n$ as input to Bob. Their goal is still to find a codeword $x \in C$ such that $H(x) = 0^n$. The input length for Alice and Bob is $N \coloneqq n|\Sigma|/2 = 2^{\Theta(n)}$.

\paragraph{Quantum easiness of \bNC.}
One immediate advantage of our definition of \bNC is that it is fairly easy in the quantum communication setting. The key observation is that the Yamakawa--Zhandry  algorithm for \NC can be divided into two stages---it first makes \emph{non-adaptive queries} to each $H_1, \ldots, H_n$ individually, and then \emph{processes} the results without making any more queries. Therefore, we can get a QSMP protocol of \bNC by simulating the Yamakawa--Zhandry algorithm: Alice and Bob first simulate the non-adaptive queries to their halves of inputs and send their results to Charlie; Charlie then simulates the processing stage of the Yamakawa--Zhandry algorithm and outputs the solution. Furthermore, this QSMP protocol is computationally efficient and uses no shared entanglement/randomness.

\subsection{Classical lower bound for \bNC}

Proving a classical communication lower bound for \bNC turns out to be a more challenging task. 
First, most of the communication lower bound techniques we currently have only work for boolean functions (or relations with few or unique solutions), but \bNC has exponentially many solutions for a typical random input. Furthermore, the choice of the error correcting code $C$---which is a folded Reed--Solomon code---complicates the analysis for proving a classical lower bound. For example, information complexity seems like a viable approach if $C$ is a random code. But the analysis would become intimidating if a folded Reed--Solomon code is considered. 

Below, we give an overview of our proof. For $X\subseteq \Bool^N$ and $I\subseteq [N]$, we use $X_I\coloneqq \{ x_I\in\Bool^I : x\in X\}$ to denote the set of strings in $X$ restricted to the coordinates in $I$.

\paragraph{Simple special case: Subcube protocols.} We start by proving a lower bound for highly structured protocols. A rectangle $X\times Y \subseteq \Bool^N \times \Bool^N$ is said to be a subcube rectangle if there exist two index sets $I, J \subseteq [N]$ such that $X_I$ and $Y_J$ are \emph{fixed} strings and the sets $X_{\bar{I}}=\Bool^{\bar{I}}$ and~$Y_{\bar{J}}=\Bool^{\bar{J}}$ contain all possible strings. A subcube protocol is such that every node of the protocol tree corresponds to a subcube rectangle. This model is closely related to decision trees (but strictly more general). The below property of the code $C$ will be central to the lower bound.

\begin{definition}[List-recoverability: Simplified definition] \label{def:list-rec-simple}
    A code $C\subseteq \Sigma^n$ is \emph{list-recoverable} if for any collection of subsets $S_1, \ldots, S_n \subseteq \Sigma$ such that $\sum_i |S_i| \leq \ell$, it holds that 
\begin{equation}\label{eq:strong-lr}
    |\{(x_1, \ldots, x_n) \in C: |i \in [n]: x_i \in S_i| \geq 0.4 n\}| \leq 2^{o(n)} \:.
\end{equation}
\end{definition}

Suppose $\Pi$ is a subcube protocol that communicates at most $\ell \leq 2^{o(n)}$ bits. Assume also for simplicity that all rectangles in the subcube protocol fix at most $\ell$ coordinates.  It follows from~\eqref{eq:strong-lr} that for any rectangle $R$ in the protocol tree of $\Pi$, the number of codewords $x\in C$ such that more than $0.4n$ bits of $H(x)$ are fixed in $R$ is at most $2^{o(n)}$.

Let us now sketch how to use this property to bound the success probability of $\Pi$. Say that a codeword $x\in C$ is \emph{dangerous} for a rectangle $R$ if at least $0.4n$ of the bits of $x$ are fixed by $R$; intuitively, this means that the protocol has communicated a lot of information about $x$. Consider any leaf rectangle $R$ and suppose it is labeled with solution $x\in C$. Assuming for contradiction that~$\Pi$ never errs, then $x$ is a correct solution (all-0 codeword) for all inputs $H\in R$; in particular, $x$ is dangerous for $R$.  We have two key facts:
\begin{itemize}
    \item \emph{Few dangerous codewords:} By list-recoverability~\eqref{eq:strong-lr}, the number of codewords $x$ that ever become dangerous in a single execution of $\Pi$ is at most $2^{o(n)}$.
    \item \emph{Dangerous codewords rarely become solutions:} Consider a single execution of the protocol and suppose Alice sends a message that causes a given codeword $x$ to become dangerous for the first time at rectangle $R$. Then there are at least $0.1n$ unfixed input bits of~$x$ on Bob's side. Thus, at most $2^{-0.1n}$ fraction of inputs in $R$ would map $x$ to $0^n$.
\end{itemize}
We can conclude our lower bound for subcube protocols by a union-bound argument: None of the dangerous codewords will likely be a valid solution at the leaf and hence $\Pi$ errs often.

\paragraph{Structure-vs-randomness dichotomy.}

To prove a lower bound for general protocols, we use the structure-vs-randomness framework developed by Wang, Yang, and Zhang~\cite{YZ23, WYZ23} who build upon query-to-communication lifting techniques~\cite{GPW20}. We use this framework to convert any general communication protocol into a \emph{subcube-like protocol}. The nodes of a subcube-like protocol correspond to \emph{subcube-like rectangles} $X \times Y \subseteq \Bool^N \times \Bool^N$ defined such that
\begin{enumerate}
    \item $X_I, Y_J$ are \emph{fixed} strings for some index sets $I,J\subseteq[N]$. \hfill (\emph{structure})
    \item $X_{\bar{I}}, Y_{\bar{J}}$ are pseudorandom (they ``look'' like the uniform distribution). \hfill (\emph{randomness})
\end{enumerate}
We show that subcube-like protocols behave similarly enough to subcube protocols that we may simply re-do the proof sketched above for subcube protocols.

\paragraph{Biased input distribution.}

So far, we have been ignoring a non-trivial issue---the list-recoverability property in \cref{def:list-rec-simple} required by our argument is not known to hold for the folded Reed--Solomon code $C$ used for the original Yamakawa--Zhandry problem. To improve the parameters in the list-recoverability of the folded Reed--Solomon code, one could decrease its degree parameter $k$. However, a smaller degree $k$ would also weaken the list-decodability of its dual code, which is essential for the quantum upper bound of the \NC problem. This inherent tension between the list-recoverability and the dual-decodability poses the following dilemma: if we decrease the degree $k$ to an extent that list-recoverability is strong enough for our lower bound argument, the dual-decodability would be too weak to retain our quantum upper bound.

To resolve this, we tweak the input distribution of \NC and \bNC. Recall that the \NC problem in \cite{Yamakawa2022} is defined over uniform random oracles. We now consider the scenario where the input bits can be \emph{biased}, i.e., each symbol is (independently) mapped to $1$ with probability $p$ for a constant $p \in (0,1)$. When $p$ gets smaller, it becomes easier to find an all-0 codeword. By generalising the original analysis in \cite{Yamakawa2022}, we show that a weaker dual-decodability suffices for the quantum upper bound when $p$ is a small constant. Therefore, we can decrease the degree $k$ of the code to achieve the list-recoverability parameters we need for our lower bound analysis while maintaining the quantum upper bound under the biased input distribution.

It might be of independent interest that our analysis broadens the range of parameters for which there is an exponential quantum advantage in the Yamakawa--Zhandry~\cite{Yamakawa2022} problem.

\paragraph{Conversion to a total problem.} Our discussion so far has been about an average case separation of quantum and classical communication complexity. Our input distribution has a non-zero probability of having \emph{no solution}. Thus, we do not have totality yet. To fix this, we borrow a trick from the original \cite{Yamakawa2022} paper. The idea begins with the observation that an $O(n)$-wise independent distribution suffices for the quantum upper bound. Thus, instead of a uniform random input we could give arbitrary inputs and allow Alice and Bob to $\textup{XOR}$ their inputs with a string that they can choose from an $O(n)$-wise independent hash family. The canonical family is that of low-degree polynomials over finite fields~\cite{Vadhan12psr}, which can be efficiently implemented.

However, modifying the problem in this way also allows for a \emph{classical} upper bound! The protocol can simply pick an arbitrary codeword $x\in C$ and run Gaussian elimination to find a function $h$ from the family which maps the codeword $x$ to $0^n$. To get around this, we further modify the problem and require the protocol to solve $t$ instances of $\bNC$ that are $\textup{XOR}$ed with the \emph{same} hash function. 

To prove this lower bound, we first apply a direct product theorem for product distributions~\cite{braverman2013} and show that for each fixed choice of hash function, the success probability of any sub-exponential cost classical protocol is less than $\approx 2^{-t}$. 
Thus, by taking $t$ as a large polynomial in $n$, we can use a \emph{union bound} over all hash function from the family and show that the success probability is still $o(1)$ even when the protocol is given the freedom of picking a hash function.

\section{Preliminaries}

In this work, we study \emph{search} problems, formalized by relations. We introduce terminology below.

A search problem $R$ is a relation $R \subseteq [k]^N \times O$ where for any given $x \in [k]^N$ we want to find $y \in O$ such that $(x,y) \in R$. A two-party search problem $R$ is a bipartite relation $R \subseteq [k]^N \times [k]^N \times O$ where given $x, y \in [k]^N$ the two parties want to find a $z$ such that $(x,y,z) \in R$. When the two-party search problem is total and efficiently verifiable, it is said to be in communication-\TFNP (see \cite{GKRS19,Rezende2022} for a discussion of this class).

\subsection{Communication complexity}
We assume some familiarity with quantum computing and communication complexity. For more background refer to textbooks~\cite{NCbook,Kushilevitz1997,Rao2020}. We define QSMP and two-way randomized communication protocols below.

\begin{definition}[QSMP]
    A QSMP (without shared randomness/entanglement) protocol $\Pi$ for relation $R \subseteq X\times Y\times O$ is pair of quantum states $\ket{\phi_x}, \ket{\phi_y}$ for every input $(x,y) \in X\times Y$. Alice and Bob send $\ket{\phi_x}$ and $\ket{\phi_y}$ respectively to Charlie, and he outputs a solution $\Pi(x, y) = \bm{o}_{x,y}$. The protocol is said to succeed with probability $1 - \epsilon$ if the probability that $(x,y,\bm{o}_{x,y}) \not\in R$ is at most $\epsilon$. The communication cost (denoted $|\Pi|$) is the total number of qubits sent by Alice and Bob on the worst-case input.
\end{definition}

\begin{definition}[Two-way randomized]
    A (public-coin) two-way randomized protocol $\Pi$ is a sequence of messages exchanged between Alice and Bob which depend on their input, previous messages, and publicly sampled randomness after which Bob outputs $\Pi(x,y) = \bm{o}_{x,y}$. The protocol is said to succeed with probability $1 - \epsilon$ if the probability that $(x,y,\bm{o}_{x,y}) \not\in R$ is at most $\epsilon$. The communication cost (denoted $|\Pi|$) is the total number of bits sent by Alice and Bob on the worst-case input.
\end{definition}

\subsection{Error-correcting codes}

    For a prime power $q$, 
    we denote by $\FF_q$ the finite field of order $q$ and denote by $\FF_q^{\leq k}[x]$ 
    the set of polynomials over $\FF_q$ with degree at most $k$. For any vector $x$, define $\hw(x)$ as the Hamming weight of $x$, i.e., the number of non-zero elements in $x$.

    An error correcting code $C$ of length $\CN$ and distance $d$ over the alphabet $\Sigma$ is a subset of $\Sigma^{\CN}$ such that the Hamming distance of every $c_1,c_2 \in C$ is lower bounded by $d$. We will use the notation $\CN$ or $n$ for the code length in order to distinguish with the input length $N$ of the communication problem.
    
    We will focus on linear codes, which are codes that are linear subspaces.

    \begin{definition}[Linear code and its dual code]
        A code $C \subseteq \FF_q^{\CN}$ is a linear code if $C$ is a linear subspace of $\FF_q^{\CN}$. The dual code $C^{\bot}$ of a linear code $C$ is defined as its orthogonal complement:
        \[ C^{\bot} = \{x \in \FF_q^{\CN} : x \cdot y = 0, \forall y \in C \} . \]
    \end{definition}
    
    \begin{definition}[Generalized Reed--Solomon code]
        The Reed--Solomon code $\RS_{\FF_q, \gamma, k}$ over $\FF_q$ with generator $\gamma \in \FF_q^*$ and degree $0 \leq k \leq \CN$ has length $\CN \coloneqq q-1$, and is defined as:
        \[ \RS_{\FF_q, \gamma, k} \coloneqq \{(f(0), f(\gamma), \ldots, f(\gamma^{\CN-1})): f \in \FF_q^{\leq k}[x] \}. \]

        For any vector $v = (v_0, \ldots, v_{\CN-1}) \in {\FF_q^*}^{\CN}$, the generalized Reed--Solomon code $\GRS_{\FF_q, \gamma, k, v}$ is defined as:
        \[ \GRS_{\FF_q, \gamma, k, v} \coloneqq \{(v_0f(0), v_1f(\gamma), \ldots, v_{\CN-1}f(\gamma^{\CN-1})): f \in \FF_q^{\leq k}[x] \}. \]
    \end{definition}

    The Reed--Solomon code is a special case of the generalized Reed--Solomon code with $v = (1, \ldots, 1)$. The generalized Reed--Solomon code is a linear code, and the dual code of $\RS_{\FF_q, \gamma, k}$ is $\GRS_{\FF_q, \gamma, \CN-k-2, v}$ for some $v \in {\FF_q^*}^{\CN}$.

    Since multiplying a scalar on each coordinate does not affect the decodability, the classical list-decoding algorithm for the Reed--Solomon code also applies to the generalized Reed--Solomon code.

    \begin{lemma}[List-decodability of the GRS code~\cite{GS99}]\label{lem: GRS list-decode}
        There is a deterministic list-decoding algorithm $\listdecode$ for $\GRS_{\FF_q, \gamma, k,v}$ such that for any $z \in \FF_q^{\CN}$, $\listdecode(z)$ returns a list of all codewords with hamming distance at most $\CN - \sqrt{k\CN}$ to $z$ in $\poly(\CN)$ time.
    \end{lemma}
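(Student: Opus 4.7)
The plan is to reduce list-decoding of the GRS code to list-decoding of the underlying Reed--Solomon code of the same length and degree, and then invoke the Guruswami--Sudan algorithm. For the reduction, observe that because every $v_i$ is a unit in $\FF_q$, the coordinate-wise scaling $\varphi \colon z \mapsto (z_0/v_0, z_1/v_1, \ldots, z_{\CN-1}/v_{\CN-1})$ is a bijection on $\FF_q^{\CN}$ that sends the GRS codeword $(v_0 f(0), v_1 f(\gamma), \ldots, v_{\CN-1} f(\gamma^{\CN-1}))$ to the plain RS codeword $(f(0), f(\gamma), \ldots, f(\gamma^{\CN-1}))$. Since $z_i = v_i f(\gamma^i)$ if and only if $\varphi(z)_i = f(\gamma^i)$, the map $\varphi$ preserves positional agreement, hence Hamming distance. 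It therefore suffices, given $z' \coloneqq \varphi(z)$, to produce all $f \in \FF_q^{\leq k}[x]$ whose evaluation vector agrees with $z'$ on at least $\sqrt{k\CN}$ of the $\CN$ coordinates.

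For this RS list-decoding step I would invoke Guruswami--Sudan~\cite{GS99}, which proceeds in two stages. In the \emph{interpolation stage}, fix a multiplicity parameter $m$ and a $(1,k)$-weighted degree bound $D$, and solve the linear system demanding a nonzero bivariate polynomial $Q(x,y) \in \FF_q[x,y]$ of $(1,k)$-weighted degree at most $D$ that vanishes with multiplicity at least $m$ at every data point $(\gamma^i, z'_i)$. A standard dimension count shows such a $Q$ exists whenever the number of admissible monomials (roughly $D^2/(2k)$) exceeds the number of vanishing constraints $\CN \binom{m+1}{2}$. In the \emph{root-finding stage}, factor $Q(x,y)$ using Kaltofen's bivariate factoring algorithm (or specialize to $y-f(x)$ factors via Roth--Ruckenstein), extract all $f \in \FF_q^{\leq k}[x]$ with $(y-f(x)) \mid Q(x,y)$, and verify each candidate against the agreement threshold.

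Correctness is the standard multiplicity trick: if $f$ agrees with $z'$ on a set $T$ of coordinates, then $Q(x, f(x))$ is a univariate polynomial of degree at most $D$ vanishing with multiplicity at least $m$ at each $\gamma^i$ for $i \in T$, so it vanishes identically as soon as $m|T| > D$, forcing $y - f(x) \mid Q(x,y)$. Choosing $m = \Theta(\sqrt{\CN/k})$ and $D = \Theta(m\sqrt{k\CN})$ simultaneously satisfies the existence condition and drives the threshold $D/m$ down to $\sqrt{k\CN}$, matching the claimed radius; the list size is bounded by the $y$-degree of $Q$, namely $O(D/k) = \poly(\CN)$. There is no genuine obstacle here: the result is classical, and the only delicate step is balancing $(m,D)$ to approach the Johnson bound while keeping the linear system feasible. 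The total runtime is dominated by the linear-algebra step and bivariate factoring, both of which run in time $\poly(\CN, \log q)$.
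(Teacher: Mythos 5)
Your proposal is correct and takes essentially the same approach as the paper, which states this lemma as a citation to Guruswami--Sudan and notes only that coordinate-wise scaling by units reduces GRS list-decoding to RS list-decoding without affecting Hamming distances. Your write-up simply spells out the interpolation/root-finding internals of the Guruswami--Sudan algorithm that the paper treats as a black box.
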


    \begin{definition}[Folded linear codes]
        For a linear code $C \subseteq \FF_q^{\CN}$ and an integer $m$ that divides $\CN$, the $m$-folded code $C^{(m)}$ is a linear code over alphabet $\Sigma = \FF_q^m$ of length $n \coloneqq \CN / m$, defined as follow.
        \[ C^{(m)} = \{( (x_1, \ldots, x_m), (x_{m+1}, \ldots, x_{2m}), \ldots, (x_{\CN-m+1}, \ldots, x_\CN)): (x_1, \ldots, x_\CN) \in C \}. \]
    \end{definition}

    Note that $(C^{\bot})^{(m)} = (C^{(m)})^{\bot}$. In particular, the dual code of folded Reed--Solomon code $\RS_{\FF_q,\gamma,k}^{\mfol}$ is the folded generalized Reed--Solomon code $\GRS_{\FF_q,\gamma,d, \CN - k - 2, v}^{\mfol}$ for some $v \in \FF_q^{\CN}$.
    
    \begin{definition}[List-recoverability]
        A code $C \subseteq \Sigma^n$ is $(\zeta, \ell, L)$ list-recoverable if for any collection of subsets $S_1, \ldots, S_n \subseteq\Sigma$ such that 
        $|S_i| \leq \ell$ for any $i \in [n]$, it holds that        \[ |\{(x_1, \ldots, x_n) \in C: |i \in [n]: x_i \in S_i| \geq \zeta n\}| \leq L \:.\]
    \end{definition}

    The folded Reed--Solomon code is known to be list-recoverable in certain parameter regimes.

    \begin{lemma}[\cite{GR08, rudra2007list}]\label{lem: list recover of FRS}
        Let $\CN = q-1$. For positive integers $\ell, r, s, m$, where $s \leq m$, and a real number $\zeta \in (0,1)$, the code $\RS^{(m)}_{\FF_q, \gamma, k}$ is $(\zeta, \ell, q^s)$ list-recoverable if the following inequalities hold:
        \begin{equation}\label{equ: list recover 1}
            \frac{\zeta \CN}{m} \geq (1+\frac{s}{r}) \cdot \frac{(\CN\ell k^s)^{\frac{1}{s+1}}}{m-s+1}.
        \end{equation}
        \begin{equation}\label{equ: list recover 2}
            (r+s)\cdot (\CN\ell / k)^{\frac{1}{s+1}} < q.
        \end{equation}
    \end{lemma}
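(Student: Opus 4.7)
The plan is to follow the Guruswami--Rudra interpolation method adapted to folded Reed--Solomon codes. Every codeword of $\RS_{\FF_q,\gamma,k}^{(m)}$ corresponds to some $f\in\FF_q^{\leq k}[X]$. Call $f$ \emph{good} if at least $\zeta n$ folded positions $j\in[n]$ satisfy $(f(\gamma^{(j-1)m}),\ldots,f(\gamma^{jm-1}))\in S_j$. The goal is to show that the number of good $f$ is at most $q^s$.

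\emph{Step 1 (Interpolation).} I would construct a nonzero polynomial $Q(X,Y_1,\ldots,Y_s)\in \FF_q[X,Y_1,\ldots,Y_s]$ whose $X$-degree is bounded by some $D$ and whose $(Y_1,\ldots,Y_s)$-degree is controlled by some $r$, subject to the vanishing conditions
\[
Q\bigl(\gamma^{(j-1)m+t},\,y_{t+1},\,y_{t+2},\,\ldots,\,y_{t+s}\bigr)=0
\]
for every folded position $j\in[n]$, every window shift $t\in\{0,1,\ldots,m-s\}$, and every tuple $(y_1,\ldots,y_m)\in S_j$. These form a linear system with at most $n(m-s+1)\ell$ equations in roughly $D\cdot r^s$ unknowns, so once we choose $D$ and $r$ compatibly the system is underdetermined and a nonzero $Q$ exists. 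Inequality~\cref{equ: list recover 2} is precisely the statement (after optimal choice of $r$) that $r$ fits within $q$ and that enough degrees of freedom are available.

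\emph{Step 2 (Root-forcing).} For any good $f$, consider the univariate reduction $R(X)\defeq Q(X,f(X),f(\gamma X),\ldots,f(\gamma^{s-1}X))$. Its degree is at most $D+srk$. By construction, for each of the $\zeta n$ good blocks $j$ and each of the $m-s+1$ window shifts $t$, the value $X=\gamma^{(j-1)m+t}$ is a zero of $R$; moreover, all these evaluation points are distinct because $\gamma$ is a primitive element of $\FF_q^{*}$. Hence $R$ has at least $\zeta n(m-s+1)$ distinct roots. Inequality~\cref{equ: list recover 1} is exactly the bookkeeping required to make this root count exceed $D+srk$, forcing $R\equiv 0$.

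\emph{Step 3 (Structural count).} Finally, I would bound the number of $f\in\FF_q^{\leq k}[X]$ with $Q(X,f(X),\ldots,f(\gamma^{s-1}X))\equiv 0$ by $q^s$. This is the algebraic heart of the argument: working inside the quotient $\FF_q[X]/(X^{\CN}-1)$, the $s$-tuples of the form $(f(X),f(\gamma X),\ldots,f(\gamma^{s-1}X))$ that annihilate $Q$ lie on a variety parametrised by at most $s$ free $\FF_q$-coordinates, exploiting the cyclic-shift action of $\gamma$ on the evaluation grid in the Guruswami--Rudra fashion. This yields at most $q^s$ surviving good~$f$.

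\textbf{Main obstacle.} Steps 1 and 2 are routine parameter bookkeeping; the interplay that yields the $(1+s/r)$ factor in \cref{equ: list recover 1} comes from balancing the interpolation dimension $Dr^s$ against the degree contribution $srk$, optimised over $r$ subject to \cref{equ: list recover 2}. The genuinely delicate part is Step 3, the passage from the univariate identity $R\equiv 0$ to an $s$-dimensional parametrisation of good $f$ over $\FF_q$; this is where the proof relies on the algebraic structure specific to folded Reed--Solomon codes (namely the Frobenius-like cyclic action of the $\gamma$-shift on $\FF_q[X]/(X^{\CN}-1)$), and why the list size bound takes the exponential form $q^s$ rather than something smaller.
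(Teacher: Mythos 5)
The paper does not prove this lemma --- it is cited verbatim from Guruswami--Rudra \cite{GR08} and Rudra's thesis \cite{rudra2007list} --- so there is no in-paper proof to compare against. Your sketch reproduces the correct three-stage skeleton of the cited argument (interpolation, root-forcing, and counting the surviving polynomials), and you correctly flag the last step as the algebraic crux. However, two details are wrong in ways that matter.

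First, the parameter $r$ in the lemma is the \emph{vanishing multiplicity} of $Q$ at each interpolation point, not a bound on the $Y$-degree. In the cited argument, $Q$ is constrained to vanish to order $r$ at every point, and its size is measured by a single $(1,k,\ldots,k)$-weighted degree bound $D$; the $(1+s/r)$ factor in \eqref{equ: list recover 1} and the $(r+s)$ term in \eqref{equ: list recover 2} both trace back to the per-point constraint count $\binom{r+s}{s+1}$. With the weighted-degree setup, the degree of $R(X)=Q(X,f(X),\ldots,f(\gamma^{s-1}X))$ is simply $\leq D$ (not $D+srk$), while each good block contributes $r(m-s+1)$ roots counted with multiplicity. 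Your bookkeeping, which treats $r$ as a $Y$-degree, does not produce the $(1+s/r)$ factor and cannot be made to match the stated inequalities.

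Second, the quotient ring in your Step~3 is wrong: it should be $\FF_q[X]/(X^{q-1}-\gamma)$, not $\FF_q[X]/(X^{q-1}-1)$. Because $\gamma$ is a primitive element of $\FF_q^*$, the polynomial $E(X)=X^{q-1}-\gamma$ is irreducible, so the quotient is the extension field $\FF_{q^{q-1}}$, and modulo $E(X)$ one has $X^q\equiv \gamma X$, hence $f(X)^q=f(X^q)\equiv f(\gamma X)$. This Frobenius identity is precisely what turns $Q(X,f(X),\ldots,f(\gamma^{s-1}X))\equiv 0$ into a univariate polynomial equation in the single variable $Y=f\bmod E(X)$ over the extension field, whose root count yields the $q^s$ list-size bound. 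The ring $\FF_q[X]/(X^{q-1}-1)$ you wrote is not even an integral domain --- $X^{q-1}-1$ splits completely into linear factors over $\FF_q$ --- and in that ring $X^q\equiv X$, so the Frobenius collapse that powers the whole argument fails.
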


    \paragraph{Quantum fourier transform over the code space.}

    The quantum Fourier transform over the finite field $\FF_q$ where $q = p^r$ for a prime $p$, is a unitary defined as follows:
    \[\QFT_{\FF_q} \ket{x} = \frac{1}{\sqrt{q}} \sum_{z \in \FF_q} \omega_p^{\Tr(x \cdot z)} \ket{z}, \]
    where the trace function is defined as $\Tr(x) \coloneqq \sum_{i=0}^{r-1} x^{p^i}$ for any $x \in \FF_q$.

    The quantum Fourier transform over the alphabet $\Sigma = \FF_q^m$ is defined as the $m$-tensor product of $\QFT_{\FF_q}$:

    \[ \QFT_{\Sigma} \ket{x} \coloneqq \QFT_{\FF_q}^{\otimes m} \ket{x_1}\ket{x_2} \ldots \ket{x_m} =  \frac{1}{\sqrt{|\Sigma|}} \sum_{z \in \Sigma} \omega_p^{\Tr(x \cdot z)} \ket{z}.\]

    Note that the second equality uses the linearity of $\Tr(\cdot)$. Similarly, the quantum Fourier transform over the code space $\Sigma^n$ is

    \[\QFT\ket{x} \coloneqq \QFT_{\Sigma}^{\otimes n}\ket{x} = \frac{1}{|\Sigma|^{n/2}} \sum_{z \in \Sigma^{n}} \omega_p^{\Tr(x \cdot z)} \ket{z}.\]

    For a quantum state $\ket{\phi} = \sum_{x} f(x) \ket{x}$, we use $\hat{f}$ to denote its fourier coefficient, i.e., 
    \[\QFT \ket{\phi} = \sum_{z} \hat{f}(z) \ket{z}.\] 

\subsection{The Yamakawa--Zhandry problem}
The Yamakawa--Zhandry problem is a search problem in which, given a fixed code and query access to random oracles, the goal is to find a codeword that is null for the oracles.

\begin{definition}[\NC~\cite{Yamakawa2022}]\label{def:YZ}
    Fix a family of codes $C = \{C_n\}$, where $C_n \subseteq \Sigma^n$. For each $i \in [n]$, let $H_i\colon \Sigma \to \Bool$, and let $H\colon \Sigma^n \to \Bool^n$ be the concatenation $H(x) = (H_1(x_1), \ldots, H_n(x_n))$.
    %, where $x = (x_1, \ldots, x_n) \in \Sigma^n$.
    The relation $\YZ^C_n \subseteq \Bool^{n|\Sigma|} \times \Sigma^n$ is defined as:
    \[ \YZ^C_n = \{(H,x) : x \in C \land H(x) = 0^n  \}. \]
    
    The \NC problem $\YZ^C$ is defined as the following search problem: Given $n$ and the oracle access to $H \in \Bool^{n|\Sigma|}$, find a string $x$ such that $(H, x) \in \YZ^C_n$.
\end{definition}

    A folded Reed--Solomon code $C$ with certain parameters is chosen in \cite{Yamakawa2022}, which allows $\YZ^C$ to have a quantum algorithm with $\poly(n)$ queries and at the same time be exponentially hard for any classical (randomized) algorithm, when the input $H$ is given by a uniformly random oracle.

\section{The Bipartite NullCodeword problem}\label{sec: binc}
We present the formal definition of the Bipartite NullCodeword communication problem (\bNC) in \autoref{sec: binc: def}. We then, in \autoref{sec: binc: code}, instantiate the problem with the folded Reed-Solomon code, with a careful parameterization that provides the properties needed for our techniques. Finally, we provide an efficient quantum SMP protocol for \bNC in \autoref{sec: binc: quantum easiness}. Throughout this section, we will work with a biased-input distribution. In \autoref{sec:total}, we will convert this into a total problem.

\subsection{Definition of \bNC} \label{sec: binc: def}

To transform the \NC problem (\autoref{def:YZ}) into a communication problem, the mapping $H\colon \Sigma^{n} \to \Bool^{n}$ is divided into two halves $H_A, H_B: \Sigma^{n/2} \to \Bool^{n/2}$ as the input of Alice and Bob respectively.

\begin{definition}[Bipartite NullCodeword, \bNC]
\label{defn:binc}
    Fix a family of codes $C = \{C_n\}$, where $C_n \subseteq \Sigma^n$. Let $H_1 \ldots, H_n$ be $n$ mappings from $\Sigma$ to $\Bool$,
    define $H_A, H_B\colon \Sigma^{n/2} \to \Bool^{n/2}$, where $H_A = (H_1,\ldots,H_{n/2})$ and $H_B = (H_{n/2 +1},\ldots,H_{n})$.
    The bipartite relation
    $\bNC^C_n \subseteq\Bool^{n|\Sigma|/2} \times \Bool^{n|\Sigma|/2} \times \Sigma^n$ is defined as:
    \[ \bNC^C_n = \{(H_A, H_B, x) : x \in C_n \:\land\: H_A(x_A) = H_B(x_B) = 0^{n/2}  \}\:, \]
    where $x_A = (x_1,\ldots,x_{n/2})$ and $x_B = (x_{n/2+1},\ldots,x_{n})$.

\end{definition}

The superscript $C$ may be omitted in $\bNC^C$ if the choice of code $C$ is clear from the context.
We will consider a distributional version of $\bNC^C$ where the input is given by a \emph{biased random oracle}.

\begin{definition}[{$p$}-biased input distribution]
\label{defn: biased oracle}
    A distribution over $H = (H_1, H_2, \ldots, H_n)$ where each $H_i$ maps every element of $\Sigma$ to $1$ with probability $p$ independently.
\end{definition}

We will pick the constant $p \coloneqq 2^{-6}$ for our setting, though we note that for any constant value below 1/2, we can get a non-trivial tradeoff and preserve the classical lower bound.

\subsection{A suitable code: Trading off dual-decodability for list-recoverability}\label{sec: binc: code}

    We modify the code used in the \NC problem~\cite{Yamakawa2022} to achieve a stronger list-recoverability, which we will need to prove the classical two-way communication lower bound. We do this by reducing the \emph{degree} of the polynomials in the construction of the Reed--Solomon code. This hampers the decodability of the dual code, but we show that over the biased input distribution (\cref{defn: biased oracle}), we can still decode the dual code with high probability, which is necessary for our upper bound. We instantiate the code we will use for the remainder of this paper below.

    \begin{definition}\label{def: code parameters}
        Define code $C = \{C_n\}$ as the folded Reed--Solomon codes $\RS^{(m)}_{\FF_q, \gamma, k}$ 
        with the following choice of parameters: 
        \begin{itemize}
            \item the length of the code $n = 2^t - 1$ for some $t \in \mathbb{\CN}$; 
            \item $q = 2^{2t}$, and thus, $\CN = q-1$ and $m = \CN/n = 2^t+1$;
            \item $\gamma$ is an arbitrary generator of $\FF_q^*$;
            \item degree $k = 0.1\CN$.
        \end{itemize}
        
        By definition, $|C_n| = q^{k+1} =  2^{\Theta(n^2)}$ and $|\Sigma| = |\FF_q^m| = q^m = 2^{\Theta(n)}$.
    \end{definition}
    
    Only the degree parameter $k$ is different compared with the original \NC problem~\cite{Yamakawa2022}, in which $k$ is chosen to be $\alpha \CN$ for some constant $\alpha \in (5/6, 1)$. 
    Below, we state a lemma with the parameters for which our code is list-recoverable and dual-decodable over the $p$-biased input distribution.
    
    \begin{lemma}[Modified from Lemma 4.2 in \cite{Yamakawa2022}]
    \label{lem:code}
        The code $C$ %specified in \autoref{def: code parameters} 
        satisfies:
        \begin{enumerate}   
            \item $C_n$ is $(\zeta, \ell, L)$-list-recoverable, where $\zeta = 0.4, \ell = 2^{n^{0.8}}, L = 2^{n^{0.9}}$. \hfill (List-Recoverability)
            \item Let $\mathcal{D}_n$ be the product distribution on $\Sigma^n$ where each symbol is $0$ with probability $1 - p$ and otherwise a uniformly random element of $\Sigma \setminus \{0\}$. If $p \leq 0.04$, there exists a $\poly(n)$ time deterministic decoding algorithm $\decode$ for the dual code $C^{\bot}_n$ such that 
            \begin{equation}\label{equ: unique decode}
                \mathop{\Pr}_{e \sim \mathcal{D}_n}[\forall x \in C^{\bot}_n, \decode(x + e) = x] = 1 - 2^{-\Omega(n)}.
            \end{equation}
            \hfill (Unique-Decodability of the Dual)
        \end{enumerate}
    \end{lemma}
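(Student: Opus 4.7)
The plan is to prove the two properties separately using classical coding-theoretic tools. The technical content concentrates on choosing the degree $k = 0.1\CN$ in a \emph{sweet spot}: a larger $k$ weakens list-recoverability, while a smaller $k$ inflates the dual degree $k' = \CN - k - 2$ and shrinks the unique-decoding radius of the dual code.

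\textbf{Part 1 (List-recoverability).} I would apply \cref{lem: list recover of FRS} with a carefully chosen $(r,s)$, aiming to satisfy both \eqref{equ: list recover 1} and \eqref{equ: list recover 2} while keeping the output list size $q^s \leq L = 2^{n^{0.9}}$. Using $\zeta = 0.4$, $k = 0.1\CN$ and $\CN = mn$, the inequality \eqref{equ: list recover 1} simplifies (since $s \ll m$ for our choices) to
\[
0.4\,n \;\geq\; (1+s/r)\cdot 0.1\,n\cdot \ell^{1/(s+1)}\cdot (1 + o(1)).
\]
Setting $s = \lceil n^{0.85}\rceil$ forces $\ell^{1/(s+1)} = 2^{n^{0.8}/(s+1)} = 1 + o(1)$, so with $r = s$ the right-hand side is $(0.2 + o(1))\,n$, well below $0.4\,n$. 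Condition \eqref{equ: list recover 2} becomes $(r+s)(10\ell)^{1/(s+1)} < q$; its left side is $O(n^{0.85})$ while $q = \Theta(n^2)$, so it holds. Finally $q^s = 2^{2ts} = 2^{O(n^{0.85}\log n)} \leq 2^{n^{0.9}}$, matching $L$.

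\textbf{Part 2 (Unique decoding of the dual).} The dual of $\RS^{(m)}_{\FF_q,\gamma,k}$ is $\GRS^{(m)}_{\FF_q,\gamma,k',v}$ with $k' = \CN - k - 2 \approx 0.9\,\CN$. I define $\decode(y)$ to unfold $y$ to $\FF_q^{\CN}$, invoke the list-decoder $\listdecode$ of \cref{lem: GRS list-decode} for $\GRS_{\FF_q,\gamma,k',v}$, and return an arbitrary element of the resulting list (failing if the list is not a singleton). The list-decoding radius is $\CN - \sqrt{k'\CN} = \CN(1-\sqrt{0.9}) \approx 0.0513\,\CN$ and the unfolded minimum distance of the dual is $\CN - k' = k + 2 \approx 0.1\,\CN$. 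The correctness argument has three steps: (i) bound the unfolded Hamming weight $|e|_H = \sum_{i=1}^n |e_i|_H$, a sum of $n$ independent variables in $[0,m]$ of mean $\leq pm$, via Hoeffding's inequality to obtain $|e|_H \leq p\CN + 0.005\,\CN \leq 0.045\,\CN$ with probability $1 - 2^{-\Omega(n)}$; (ii) conclude that $|e|_H$ lies below the list-decoding radius, so the true codeword $x$ is in the returned list; (iii) for any dual codeword $x' \neq x$, the triangle inequality in Hamming distance gives $d(x', x+e) \geq d(x, x') - |e|_H \geq 0.1\,\CN - 0.045\,\CN = 0.055\,\CN$, which exceeds $0.0513\,\CN$, so no other codeword is returned. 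Hence the list is exactly $\{x\}$, and $\decode(x+e) = x$.

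\textbf{Main obstacle.} The delicate point is the three-way tension between $k$, $p$, and the two radii. Writing $\rho = k'/\CN$, unique decoding requires $p < \sqrt{\rho} - \rho$, while strong list-recoverability with $\zeta = 0.4$ forces $k/\CN = 1 - \rho + o(1)$ to stay bounded away from $0$. Over the uniform distribution ($p = 1/2$) the inequality $p < \sqrt{\rho} - \rho$ is unsatisfiable because $\max_{\rho}(\sqrt{\rho} - \rho) = 1/4$; this is the structural reason for introducing the biased distribution and is the main departure from the original Yamakawa--Zhandry analysis. Once $p$ is fixed to a small enough constant (e.g.\ $2^{-6} < 0.0487$), the window for $k/\CN$ becomes nonempty and $k = 0.1\,\CN$ sits safely inside, after which both parts reduce to verifying the stated coding lemmas with the numerical values above.
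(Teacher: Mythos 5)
Your proof is correct and takes essentially the same approach as the paper's: apply the Guruswami--Rudra list-recoverability lemma with carefully chosen $(s,r)$ for Part 1, and for Part 2 build the dual decoder by unfolding, running Guruswami--Sudan list-decoding at radius $\CN - \sqrt{d\CN}$, and combining a concentration bound on the unfolded error weight with the dual code's minimum distance to show the list collapses to the true codeword. The only inessential differences are the parameter choices ($s=r=\lceil n^{0.85}\rceil$ vs.\ the paper's $s=n^{0.89}$, $r=n^{0.95}$), the concentration step (Hoeffding on the unfolded weight vs.\ Chernoff on the symbol-level weight), and the decoder's success criterion (requiring a singleton list vs.\ requiring a unique codeword within radius $(p+\varepsilon)\CN$); all are equivalent in this regime.
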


    \begin{proof}
    The proof of (1), the list-recoverability property, follows immediately from \autoref{lem: list recover of FRS}. Namely,
    we apply \autoref{lem: list recover of FRS} with $s \coloneqq n^{0.89}, r \coloneqq n^{0.95}$. It holds that $\frac{\zeta \CN}{m}  = 0.4n$, and
    \[(1+\frac{s}{r}) \cdot \frac{(\CN\ell k^s)^{\frac{1}{s+1}}}{m-s+1} = (1 + n^{-0.06})\cdot \frac{(10k\cdot 2^{n^{0.8}} \cdot k^{s})^{\frac{1}{s+1}}}{n-n^{0.89}+3} \approx \frac{k}{n} = 0.1n, \]
    which proves inequality~\eqref{equ: list recover 1}. And for inequality~\eqref{equ: list recover 2}, 
    \[ (r+s)\cdot (\CN\ell / k)^{\frac{1}{s+1}} = (n^{0.95} + n^{0.89}) \cdot (10 \cdot 2^{n^{0.8}})^{\frac{1}{s+1}} < n < q. \]
    Finally, note that $L = 2^{n^{0.9}} \geq q^s = 2^{n^{0.89}\log q}$.

    It remains to show the unique decodability of the dual code. 
    Note that the dual code $C^\bot_n$ is the folded generalized Reed--Solomon code $\GRS_{\FF_q,\gamma,d,v}^{\mfol}$ for some $v \in \FF_q^{\CN}$, where $d \coloneqq \CN - k - 2 = 0.9\CN - 2$. 
        Let $\varepsilon \coloneqq 0.01$.
    For any vector $x \in \FF_q^{\CN}$, let $\fold{x} \in \Sigma^n$ be the $m$-folded version of $x$; similarly, for any vector $z \in \Sigma^n$, define $\unfold{z} \in \FF_q^{\CN}$ by unfolding each letter in $z$. So, if $z = x_1, \ldots, x_m \in \FF^n_q$ then $\unfold{z} = (x_1, \ldots, x_m) \in \FF^{\CN}_q$.

    The algorithm $\decode(z)$ works as follows: 
    On input $z \in \Sigma^n$, it first unfolds $z$ into $\unfold{z} \in \FF_q^{\CN}$. It then runs the list-decoding algorithm $\listdecode(\unfold{z})$ (\autoref{lem: GRS list-decode}) for the generalized Reed--Solomon code $\GRS_{\FF_q,\gamma,d,v}$ to get a list of codewords. If there is a unique $x \in \FF_q^{\CN}$ in the list such that $\hw(\unfold{z}-x)\leq (p+\varepsilon)\CN$, it outputs $\fold{x}$ (interpreted as a codeword of $\GRS_{\FF_q,\gamma,\CN-k-2,\vv}^{\mfol}$), 
    and otherwise outputs $\bot$.   

    We now analyze the success probability of $\decode$. 
    Define a subset $\gooderrors\subseteq \Sigma^n$ of \emph{good} error terms, leading to unique decoding, as follows.
    \begin{equation*}
     \gooderrors \coloneqq \{e \in \Sigma^n: \hw(e)\leq (p+\varepsilon)n \}.  
    \end{equation*}
    Applying a Chernoff bound
    to the Hamming weight of $e$, we know $e$ is good with high probability. 
    \begin{equation}\label{equ: good errors}
    \Pr_{\ve \sim \dist_n}[e \in \gooderrors] \geq 1 - 2^{-\Omega(n)}.
    \end{equation}
    Therefore, it suffices to prove that the algorithm $\decode$ always succeeds on a good error. Towards this end, we first show a useful property of set $\gooderrors$. 
        \begin{claim}\label{lem: good error}
            For any $\ve\in \gooderrors$ and $y \in C^\perp_n \setminus \{ 0 \}$, we have 
            $\hwUF(\unfold{e}-\unfold{y})>(p+\varepsilon)\CN$.
        \end{claim}
        \begin{proof}
            For any $\ve\in \gooderrors$, $\hwUF(\unfold{e})\leq (p+\varepsilon)\CN < 0.05N$.
            Note that for any non-zero codeword $y \in C^\perp_n$, it holds that $\hwUF(\unfold{y}) \geq \CN - d \geq 0.1\CN$, because a degree $d$ non-zero polynomial could have at most $d$ roots. By the triangle inequality, we have $\hwUF(\unfold{e}-\unfold{y}) > 0.05\CN \geq (p+\varepsilon)\CN$.
        \end{proof}

    Using \autoref{lem: good error}, we proceed to show the correctness of the algorithm $\decode$.
    \begin{claim}\label{lem: unique decodability}
        For any $\vx\in C^\perp_n$ and $\ve\in \gooderrors$, $\decode(\vx+\ve)=\vx$.
    \end{claim} 
    
    \begin{proof}
        Fix any $\vx\in C^\perp_n$ and $\ve\in \gooderrors$.
        Recall that $d = 0.9\CN - 2$, it holds that $\CN-\sqrt{d\CN} > 0.05 \CN \geq (p+\varepsilon)\CN$. By \autoref{lem: GRS list-decode}, $\unfold{x}$ is contained in the list of codewords outputted by the list-decoding algorithm for $\GRS_{\FF_q,\gamma,d,v}$.
        For the uniqueness, consider any $y \in C^\perp_n, y \neq x$, since $C^\perp_n$ is a linear code, we have
        \[(x + e) - y = e - (y - x) = e - y'\] for some $y' \in C^\perp_n \setminus \{ 0 \}$.
        By \autoref{lem: good error}, we have that $\hwUF(\unfold{(x+e)}-\unfold{y}) = \hwUF(\unfold{e}-\unfold{(y')}) >(p+\varepsilon)\CN$. Thus, $\unfold{x}$ is the only codeword in the list whose Hamming distance from $\unfold{(x+e)}$ is smaller than or equal to $(p+\varepsilon)\CN$.
        We conclude that $\decode(\vx+\ve)=\vx$.
    \end{proof}
    
    The proof of the second part of \autoref{lem:code} then follows from \autoref{lem: unique decodability} and inequality~\eqref{equ: good errors}.
    \end{proof}

\paragraph{Remark.} In the following subsection and \cref{sec:total}, we will denote \bNC to be the problem with $C$ fixed to be the code in \cref{def: code parameters}, dropping the superscript. 

\subsection{Quantum SMP easiness of \bNC}\label{sec: binc: quantum easiness}
    We prove the quantum easiness of $\bNC$ by adapting the query algorithm for the \NC problem to the biased bipartite setting, noting that the protocol can be implemented under the constraints of the quantum SMP model.

    \begin{restatable}{theorem}{QuantumEasiness}\label{thm: quantum easiness}
        $\bNC$ has a quantum SMP protocol using $\poly(n)$ qubits over the $p$-biased distribution.

    \end{restatable}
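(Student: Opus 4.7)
The proof exhibits the QSMP protocol by mimicking the Yamakawa--Zhandry (YZ) query algorithm for $\NC$~\cite{Yamakawa2022}, whose structure is compatible with the SMP model in a very natural way.  The YZ algorithm executes in two stages: first it performs $\poly(n)$ \emph{non-adaptive} quantum queries that factor cleanly across the $n$ oracle coordinates, producing a quantum state on $n$ registers of $\log|\Sigma|=\Theta(n)$ qubits each; then it performs a global $\QFT$ over $\Sigma^n$ and a classical decoding step using the dual code $C_n^{\perp}$, with no further oracle access.

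\textbf{SMP simulation.}  For $i\in[n]$, the YZ query routine for register $i$ uses only $H_i$.  Starting from $\frac{1}{\sqrt{|\Sigma|}}\sum_{x\in\Sigma}|x\rangle|0\rangle$, one applies the oracle to get $\frac{1}{\sqrt{|\Sigma|}}\sum_{x\in\Sigma}|x\rangle|H_i(x)\rangle$ and measures the second register; under the $p$-biased distribution with $p=2^{-6}$, this returns $0$ with probability $1-p$, leaving $|\psi_i\rangle\propto\sum_{x:H_i(x)=0}|x\rangle$.  Running $O(\log n)$ independent attempts in parallel (without any communication or shared entanglement) and keeping one success reduces the per-register failure probability to $1/\poly(n)$, so by a union bound every $|\psi_i\rangle$ is available with probability $1-o(1)$.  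Alice, holding $H_A=(H_1,\dots,H_{n/2})$, prepares and transmits $|\psi_1\rangle\otimes\cdots\otimes|\psi_{n/2}\rangle$ to Charlie; Bob does the analogous task on $H_B$.  Charlie forms $|\Psi\rangle=\bigotimes_i|\psi_i\rangle$, applies the YZ post-processing (a $\QFT$ over $\Sigma^n$, a measurement, and a call to the decoder $\decode$ from \cref{lem:code}(2)), and outputs the resulting codeword of $C_n$.  The total communication is $n\cdot O(\log n)\cdot\log|\Sigma|=\poly(n)$ qubits, and each party's computation is implementable by $\polylog(N)$-size quantum circuits.

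\textbf{Correctness and main obstacle.}  Correctness of the post-processing reduces, following the template of~\cite[\S4]{Yamakawa2022}, to the Unique-Decodability of the Dual established in \cref{lem:code}(2): the measurement outcome after $\QFT$ is distributed as $x^{\perp}+e$ with $x^{\perp}\in C_n^{\perp}$ and $e$ drawn (up to negligible perturbation from post-selection) from $\dist_n$, so $\decode$ recovers $x^{\perp}$ except with probability $2^{-\Omega(n)}$, after which a codeword of $C_n$ satisfying $H(x)=0^n$ is obtained by a standard linear-algebra step.  The only substantive deviation from the analysis of~\cite{Yamakawa2022} is reconciling the reduced degree $k=0.1\CN$ of our code (dictated in \cref{def: code parameters} by the list-recoverability needed for the classical lower bound) with the dual-decodability required here: shrinking $k$ makes the dual's minimum distance small and rules out the worst-case dual decoding used in the original YZ proof, which is why the statement is restricted to a biased input distribution.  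The hard part is to check that the bias $p\leq 0.04$ concentrates $e$ well inside the GRS list-decoding radius $\CN-\sqrt{d\CN}$ so that the decoded list contains a unique nearby codeword; this is exactly what \cref{lem:code}(2) delivers, and the remainder of the proof is a careful transcription of the YZ correctness argument with this average-case decoding bound substituted in place of the worst-case bound.
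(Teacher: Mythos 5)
Your overall plan matches the paper's (\cref{sec: binc: quantum easiness} and \autoref{sec: full proof of quantum easiness}): simulate the two-stage Yamakawa--Zhandry algorithm in the SMP model, with Alice and Bob producing the per-coordinate states $\ket{\phi_i}\propto\sum_{e:H_i(e)=0}\ket e$ and sending them to Charlie, and use the $p$-biased distribution to compensate for the weakened dual-decodability that results from lowering the degree $k$. You correctly identify \cref{lem:code}(2) as the crucial ingredient.

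However, your description of Charlie's post-processing is wrong and, implemented as written, would not produce a solution. You say Charlie applies ``a $\QFT$ over $\Sigma^n$, a measurement, and a call to the decoder'', and that after recovering $x^\perp\in C_n^\perp$ a codeword with $H(x)=0^n$ ``is obtained by a standard linear-algebra step.'' Two things are missing or misstated. First, Charlie must also prepare the uniform superposition $\ket\psi\propto\sum_{x\in C}\ket x$ over the primal code; the received state $\ket\phi=\bigotimes_i\ket{\phi_i}$ alone carries no information about $C$, so no amount of processing extracts a codeword from it. Second, the decoder is applied \emph{coherently}, as the unitary $\Udecode$ sandwiched between the adder $\Uadd$ and an inverse $\QFT$ on the second register, not as classical post-processing after a measurement. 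Concretely, after $\QFT\otimes\QFT$ and $\Uadd$ the two-register state is $\sum_{x,e}\hat V(x)\hat W(e)\ket x\ket{x+e}$; $\Udecode$ (correct with high probability by \cref{lem:code}(2)) uncomputes the first register to $\ket 0$; the inverse $\QFT$ on the second register then yields $\propto\sum_z V(z)W(z)\ket z$ by the convolution theorem, which is supported exactly on $z\in C$ with $H(z)=0^n$, and the measurement outcome is the solution directly. There is no ``linear-algebra step'': a single $x^\perp\in C_n^\perp$ carries no information about $H$ and does not determine any $x\in C$ satisfying $H(x)=0^n$. The correct protocol is \cref{alg:smp_protocol}, and the proof of correctness goes through \autoref{lem: main technical} (the coherent-decoder lemma from Yamakawa--Zhandry), not a measure-then-decode argument.
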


\begin{algorithm}[ht]
\caption{Quantum SMP Protocol for $\bNC$}\label{alg:smp_protocol}
\begin{algorithmic}[0]
\State \textbf{Notation:} 
    \State For each $i \in [n]$, define the state
    \[
    \ket{\phi_i} \propto \sum_{e \in \Sigma: \; H_i(e) = 0} \ket{e}. 
    \]

\State Let $\Uadd, \Udecode$ be unitaries defined as follows:
\[
\ket{x}\ket{e} \xrightarrow{\Uadd} \ket{x}\ket{x+e} \xrightarrow{\Udecode} \ket{x - \decode(x+e)}\ket{x+e}.
\]

\Statex \noindent \rule{\linewidth}{0.4pt}

\State \textbf{Alice:} 
    \State Prepare $\ket{\phi_1}, \ldots, \ket{\phi_{n/2}}$ using $H_1, \ldots, H_{n/2}$.
    \State Send $\ket{\phi_1}, \ldots, \ket{\phi_{n/2}}$ to Charlie.

\Statex

\State \textbf{Bob:}
    \State Prepare $\ket{\phi_{n/2+1}}, \ldots, \ket{\phi_n}$ using $H_{n/2+1}, \ldots, H_n$.
    \State Send $\ket{\phi_{n/2+1}}, \ldots, \ket{\phi_n}$ to Charlie.

\Statex

\State \textbf{Charlie:}
    \State After receiving $\ket{\phi} \coloneqq \ket{\phi_1} \otimes \ldots \otimes \ket{\phi_n}$, perform the following steps:
    \begin{enumerate}
        \item Prepare a state:
        \[
        \ket{\psi} \propto \sum_{x \in C} \ket{x}.
        \]
        \item Apply $\QFT$ to both $\ket{\psi}$ and $\ket{\phi}$. Now, Charlie holds the state:
        \[
        \ket{\eta} \coloneqq \QFT\ket{\psi} \otimes \QFT\ket{\phi}.
        \]
        \item Apply $(I \otimes \QFT^{-1}) \Udecode \Uadd$ to $\ket{\eta}$.
        \item Measure the second register and output the measurement result.
    \end{enumerate}
\end{algorithmic}
\end{algorithm}

    We provide a proof sketch here and refer the reader to \autoref{sec: full proof of quantum easiness} for the full proof of the correctness of our quantum SMP protocol, as captured in \autoref{thm: quantum easiness}.

    \begin{proof}[Proof Sketch]
        As pointed out by Liu~\cite{Liu23}, the original \NC problem can be solved by a two-stage quantum algorithm $\AlgYZ$:
        \begin{description}
            \item[Query] Make $\poly(n)$ queries to each of $H_1, \ldots, H_n$, and store the raw results from those queries into states $\ket{\phi_1}, \ldots, \ket{\phi_n}$, where $\ket{\phi_i}$ stores the query results from $H_i$.
            \item[Process] Process $\ket{\phi_1}, \ldots, \ket{\phi_n}$ and output the solution without making any more queries to $H$.
        \end{description}

        Our quantum SMP protocol for $\bNC^C$ emulates the query protocol as follows. Alice and Bob first prepare states $\ket{\phi_1}, \ldots, \ket{\phi_{n/2}}$ and $\ket{\phi_{n/2+1}}, \ldots, \ket{\phi_n}$ by simulating the \textbf{Query} stage of $\AlgYZ$ on their half of inputs, and then send them to Charlie. Then, Charlie runs the same \textbf{Process} stage of $\AlgYZ$ with states $\ket{\phi_1}, \ldots, \ket{\phi_n}$ and outputs the solution. See \cref{alg:smp_protocol} for the quantum SMP protocol in detail.
        
        It remains to argue that the algorithm $\AlgYZ$ still works for $\YZ^C$ after the following two changes:
        \begin{enumerate}
            \item The code $C$ specified in \autoref{def: code parameters} has a lower degree.
            \item Each $H_i$ is drawn from a $p$-biased distribution outputting $1$ with probability $p = 2^{-6}$ rather than being uniformly random.
        \end{enumerate}

        As shown in \autoref{lem:code}, the first change weakens the unique-decodability of the dual code $C^\bot$. In particular, we can only guarantee the unique-decodability against a milder distribution $\mathcal{D}_n$ of errors, where each location is scrambled with probability $p = 2^{-6} \leq 0.04$. However, in the analysis of the original Yamakawa--Zhandry problem, the dual code needs to deal with the distribution of errors where each location is scrambled with probability $1/2$. 

        Fortunately, the second change fixes this problem once the biased input distribution is plugged into the analysis.
        Here, we briefly sketch the idea: Recall that the state $\ket{\hat{\phi_i}} = \QFT\ket{\phi_i}$ is generated in the second step of Charlie for each $i \in [n]$. In expectation over $H_i$ drawn from the biased distribution, the state $\ket{\hat{\phi_i}}$ has weight $1-p$ on $0$, and the remaining weight of $p$ is uniformly distributed over $\Sigma \backslash \{0\}$. Therefore, at a high level, before Charlie applies $\Udecode$ in the third step, one can think of the error $e$ is now drawn from the distribution $\mathcal{D}_n$ defined in \autoref{lem:code}, which can be uniquely decoded with high probability by \autoref{lem:code}.
        
        The remaining analysis in \cite{Yamakawa2022} is not affected by these two changes. This concludes the sketch of proof of correctness of our quantum SMP protocol.
    \end{proof}
 
\section{Classical lower bound for \bNC}
\label{sec:lower-bound}
In this section, we prove the  distributional lower bound for the $\bNC$ problem.
\begin{theorem}
\label{thm:binc lb}
    Let $C\subseteq \Sigma^n$ be a $(\zeta, \ell, L)$-list-recoverable code with $\zeta \leq 0.4$ and $L = 2^{o(n)}$. Then $\bNC^C_n$ has randomized communication complexity $\Omega(\ell)$ over the $p$-biased input distribution (\cref{defn: biased oracle}) for any constant $p\in(0,1)$.
\end{theorem}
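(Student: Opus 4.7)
The plan is to follow the two-step outline sketched in the technical overview. First, I would invoke the structure-vs-randomness framework of \cite{YZ23,WYZ23} (which adapts the query-to-communication lifting technology of \cite{GPW20}) to reduce any randomized two-way protocol for $\bNC^C_n$ to a subcube-like protocol with only an $o(1)$ loss in success probability. Concretely, a protocol $\Pi$ of cost $\ell' = o(\ell)$ would be replaced by a subcube-like protocol $\Pi'$ whose leaves correspond to rectangles $R = X \times Y$ specified by index sets $I, J \subseteq [N]$ of total size $|I|+|J| = O(\ell')$ on which $X_I$ and $Y_J$ are fixed to specific strings, while the conditional distribution of the input on $\bar I \cup \bar J$ is $2^{-\omega(n)}$-close (in statistical distance) to the $p$-biased product distribution. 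I would then analyze $\Pi'$ directly.

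For each rectangle $R$ of $\Pi'$ and each code coordinate $i \in [n]$, define the ``fixed-symbol'' set $S_i^R \coloneqq \{\sigma \in \Sigma : H_i(\sigma) \text{ is fixed in } R\}$, so that $\sum_i |S_i^R| \leq |I|+|J| \leq \ell' < \ell$. By list-recoverability with threshold $\zeta \leq 0.4$, the number of codewords $x \in C$ with $|\{i : x_i \in S_i^R\}| \geq \zeta n$---the \emph{dangerous} codewords at $R$---is at most $L = 2^{o(n)}$. For each leaf $R$ with output $\bm{o}_R$, I would bound $\Pr[H \in R \wedge H(\bm{o}_R) = 0^n]$ by splitting on whether $\bm{o}_R$ is dangerous at $R$. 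If $\bm{o}_R$ is not dangerous, at least $(1-\zeta)n \geq 0.6n$ bits of $H(\bm{o}_R)$ are ``free'' (not forced by $R$), and the pseudorandomness provided by Step~1 yields $\Pr[H(\bm{o}_R) = 0^n \mid H \in R] \leq (1-p)^{0.6n}(1+o(1)) = 2^{-\Omega(n)}$; summing over leaves gives a total contribution of $2^{-\Omega(n)}$.

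The dangerous case requires a more delicate argument, handled by a union bound over the codewords that ever become dangerous. For each such codeword $x$, let $R^*(x)$ be the unique ancestor rectangle along the path where $x$ first becomes dangerous. Whichever party (Alice or Bob) triggered this transition leaves the \emph{other} side's fixed-bit count for $x$ unchanged, and by definition that count was strictly less than $\zeta n$ at the parent of $R^*(x)$, hence also at $R^*(x)$ itself. That opposite side therefore has at least $n/2 - \zeta n = 0.1n$ free bits of $H(x)$ at $R^*(x)$, giving $\Pr[H(x) = 0^n \mid H \in R^*(x)] \leq (1-p)^{0.1n}(1+o(1))$. The crucial observation is that any codeword dangerous at some ancestor remains dangerous at the final reached leaf (fixed-symbol sets only grow), so for every fixed execution the number of codewords that ever become dangerous is at most $L$; hence $\sum_{x \in C} \Pr[x \text{ becomes dangerous during execution}] \leq L$. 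The total contribution of the dangerous case is therefore at most $L \cdot (1-p)^{0.1n}(1+o(1)) = 2^{o(n)} \cdot 2^{-\Omega(n)} = 2^{-\Omega(n)}$. Combining both cases and the $o(1)$ loss from Step~1 gives $\Pr[\Pi \text{ succeeds}] = o(1)$, contradicting constant success and forcing $\ell' = \Omega(\ell)$.

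The main obstacle, as is typical in this framework, is Step~1: the structure-vs-randomness decomposition must be adapted from the uniform to the $p$-biased product distribution on the large-alphabet encoding of the oracles, and the pseudorandomness guaranteed on $\bar I \cup \bar J$ must be strong enough to survive the further conditioning on events of the form $H(x) = 0^n$, which freezes $\Omega(n)$ specific free bits to $0$ on top of the rectangle constraints. Once this decomposition is available, the Step~2 analysis is essentially mechanical, and the constants ($\zeta \leq 0.4$ giving $n/2 - \zeta n \geq 0.1n$) are chosen precisely so that both the dangerous and the non-dangerous case yield $2^{-\Omega(n)}$.
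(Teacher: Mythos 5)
Your proposal follows the same high-level strategy as the paper's proof: reduce to subcube-like protocols via the structure-vs-randomness framework, define dangerous codewords via list-recoverability, and close with a union bound. The dangerous-case computation (locating the first node $R^*(x)$ at which $x$ becomes dangerous, observing the opposite party still has at least $n/2-\zeta n\geq 0.1n$ free coordinates of $H(x)$, and bounding $\sum_{x}\Pr[x\text{ ever dangerous}]\leq L$ using monotonicity and list-recoverability at the leaf) is exactly the paper's argument (observations (i)--(iii) and \cref{lem:subcube bound}). One genuine, and perfectly valid, structural difference: the paper first applies a cleanup lemma (\cref{lem:wlog}) to make the protocol zero-error, so that any non-$\bot$ output must be dangerous at the leaf, and hence only the dangerous case needs to be analyzed. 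You instead keep the protocol as is and handle the non-dangerous outputs by a direct bound $\Pr[H(\bm{o}_R)=0^n\mid H\in R]\leq 2^{-\Omega(n)}$ summed over leaves. That route works and avoids one preprocessing step, though you still need the cleanup lemma's codimension bound at leaves (communication cost alone does not cap $|I|+|J|$; it takes a Markov argument and truncation, as in \cref{lem:wlog}\ref{it:three}) for the list-recoverability hypothesis to apply.

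Two caveats worth flagging. First, the pseudorandomness guarantee you ascribe to the structure-vs-randomness step is too strong: the framework of~\cite{GPW20,YZ23,WYZ23} yields $\gamma$-density of the conditional distribution on $\bar I\cup\bar J$ (a min-entropy lower bound on every marginal), \emph{not} $2^{-\omega(n)}$ statistical closeness to the ambient product distribution; the two are genuinely different, and only density is actually obtained or needed. Consequently the bound should read $2^{-\gamma\cdot 0.1n}$ from density, not $(1-p)^{0.1n}(1+o(1))$ as under an honest product distribution. For the uniform case both are $2^{-\Omega(n)}$, so your conclusion is unaffected. Second, for general $p$ the paper does not attempt to build a $p$-biased density-restoring partition directly; it reduces to the uniform case by ``unfolding'' each $p=2^{-k}$-biased bit into the AND of $k$ uniform bits (\cref{sec:any-p}), then runs the uniform framework on the unfolded input. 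You correctly identify the $p$-biased adaptation as the main obstacle but leave it open; the unfolding trick is how the paper resolves it, so you should fold that in rather than hope for a biased-measure density lemma from scratch.
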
 
In particular, we can later instantiate \cref{thm:binc lb} with the code from \cref{lem:code} (for which we proved the quantum upper bound). We present the proof of \cref{thm:binc lb} in stages. We first (\cref{sec:warmup}) prove the theorem for highly structured communication protocols that we call \emph{subcube protocols}, which are closely related to decision trees. We then (\cref{sec:sub-like recs}) generalize that proof to a more powerful class of \emph{subcube-like protocols}. Finally (\cref{sec:preproc}), we use structure-vs-randomness paradigm~\cite{GPW20, YZ23} to show that any communication protocol can be converted (with modest loss) into a subcube-like protocol.

\paragraph{Input distribution: Simplification.}
Recall that the $p$-biased input distribution assigns all bits \emph{i.i.d.}\ such that the probability of a 1-bit is $p$. We first prove \cref{thm:binc lb} for the uniform distribution, $p=1/2$, and then indicate (\cref{sec:any-p}) how to generalize it to any $p\in(0,1)$.

\subsection{Lower bound for subcube protocols}
\label{sec:warmup}
We start by defining the notion of subcube protocols. We study exclusively \emph{deterministic} protocols---run on a random distribution over inputs---as this is enough to prove a lower bound against randomized protocols~(by Yao's principle).
\begin{definition}[Subcubes]
    A set $X \subseteq \Bool^N$ is a \emph{subcube} if there exists $I \subseteq [N]$ such that $X = \{x\in\Bool^N \colon x_I = a\}$ for some partial assignment $a \in \Bool^{I}$. The \emph{codimension} of $X$ is defined as $|I|$. Moreover, if~$i\in I$ then we say the corresponding variable $x_i$ is \emph{fixed} in $X$. 
\end{definition}

\begin{definition}[Subcube protocols]
A rectangle $R = X \times Y \subseteq\Bool^N\times\Bool^N$ is said to be a \emph{subcube rectangle} if $X$ and $Y$ are subcubes. The codimension of $R$ is then the sum of codimensions of $X$ and $Y$.
A communication protocol $\Pi$ is said to be a \emph{subcube protocol} if for every node in the protocol tree of $\Pi$, the associated rectangle is a subcube rectangle. We denote by $|\Pi|$ the worst-case communication cost of $\Pi$ (maximum length of the transcript).
\end{definition}

The simplest example of a subcube protocol is when the players send individual input bits to each other. For example, Alice can send $x_1\in\Bool$ to Bob, which corresponds to splitting Alice's input domain $\{0,1\}^n$ to two subcubes where $x_1 = 1$ or $x_1 = 0$. Subcube protocols that send one bit at a time correspond precisely to decision trees.

More generally, in each communication round of a subcube protocol, the length of the message can be larger than 1 bit if a node $v$ in the tree has large arity (number of its children). The correspondence between subcube protocols and decision trees breaks when the protocol tree has nodes with arity larger than~2. For example, suppose Alice on input $x\in\Bool^N$ sends the index~$i\in[N]$ of her first 1-bit (or indicates her input is all-0). This is a $\log(N+1)$-bit message that induces a partition of $\Bool^N$ into $N+1$ subcubes. Such a protocol is not efficiently simulated by a shallow subcube protocol sending one bit at a time, that is, a decision tree. (See~\cite{Ambainis16} for more separations between subcube partitions and decision trees.) 

\paragraph{Cleanup.}
We first show that any subcube protocol for $\bNC^C_n$ (or for any communication-$\TFNP$ problem) can be converted into a \emph{zero-error} protocol with bounded codimension at the leaves.

\begin{lemma}[Subcube protocol cleanup] \label{lem:wlog}
Suppose $\Pi$ is a subcube protocol solving $\bNC^C_n$ with error~$\epsilon>0$ over the uniform distribution. Then there exists another subcube protocol $\Pi'$ such that
\begin{enumerate}[label=(C\arabic*),leftmargin=3em]
\item \label{it:one}
$|\Pi'| \leq O(|\Pi|)$.
\item \label{it:two}
$\Pi'$ is \emph{zero-error}: on a uniform random input, it outputs the symbol $\bot$ (``I don't know'') with probability at most $2\epsilon$, and otherwise it outputs a valid solution.
\item \label{it:three}
Every leaf rectangle of $\Pi'$ has codimension $O(|\Pi|/\epsilon)$.
\end{enumerate}
\end{lemma}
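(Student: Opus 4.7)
I would produce $\Pi'$ from $\Pi$ by two leaf-level modifications that preserve the underlying tree (giving (C1) immediately) and together establish (C2) and (C3). The enabling estimate is a Kraft-type bound on expected leaf codimension: at each internal node $v$, the children of a subcube protocol partition $R_v$ into subcubes with codim-increases $e_1,\ldots,e_k$ satisfying $\sum_i 2^{-e_i} = 1$, so the expected increase at $v$ under a uniform input equals the entropy $\sum_i 2^{-e_i} e_i = H(\{2^{-e_i}\}) \leq \log k$, which is at most the length of the message sent at $v$. Summing along a root-to-leaf path gives $\E[\codim(R_\ell)] \leq |\Pi|$, and Markov's inequality yields $\Pr[\codim(R_\ell) > T] \leq |\Pi|/T$ for every threshold $T$.

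\textbf{Zero-error relabeling for (C2).} At each leaf $R = X \times Y$ with codeword label $x \in C$, inspect the $n$ input coordinates on which the condition $H(x) = 0^n$ depends---the bit $(i, x_i)$ of $H_A$ for $i \leq n/2$ and of $H_B$ for $i > n/2$. If all of them are fixed to $0$ inside $R$, then $x$ is a valid solution for every input in $R$ and we keep the label; otherwise we relabel the leaf $\bot$. The key point is that on any relabeled leaf the original protocol errs on at least half of the inputs: if some relevant bit is fixed to $1$ the error is $1$, and if instead $u \geq 1$ relevant bits are unfixed (the rest fixed to $0$) the error is exactly $1 - 2^{-u} \geq 1/2$. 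So the don't-know mass introduced by the relabeling is at most twice the original error mass on the relabeled leaves, which is at most $2\epsilon$ overall; kept leaves contribute zero error, establishing (C2).

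\textbf{Codim cap for (C3) and the main obstacle.} Setting $T = O(|\Pi|/\epsilon)$, I would further relabel every leaf whose rectangle has $\codim > T$ as $\bot$; by the Markov estimate the mass touched is at most $|\Pi|/T = O(\epsilon)$, absorbed into the constants. A preemptive truncation---stopping a root-to-leaf path at the last node whose rectangle still has codim $\leq T$, just before a message whose refinement would push some child over $T$---turns this into a pointwise guarantee that every leaf rectangle of $\Pi'$ (including $\bot$ leaves) has codim at most $T$, while keeping the subcube structure. The main conceptual obstacle is precisely why this has to be done probabilistically: in subcube protocols, codimension is \emph{not} pointwise bounded by the transcript length---a single $k$-bit message can refine a rectangle into subcubes whose codim-increases far exceed $k$ (the ``first-$1$'' protocol is a clean example), with the Kraft relation only binding on average. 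So one cannot hope to bound $\codim(R_\ell)$ deterministically by $|\Pi|$; the tail-trim via Markov is essential, and (C3) must be paid for in don't-know mass rather than in raw communication.
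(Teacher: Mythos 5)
Your proposal is correct in structure and takes a genuinely different route for the zero-error step (C2). For the expected-codimension bound, the paper writes it in one line---under the uniform distribution $\Pr[\bm{R}=R]\leq 2^{-\codim(R)}$, hence $\E[\codim(\bm{R})]\leq \H(\bm{R})\leq |\Pi|$---which is the same entropy fact you unwind node-by-node via the Kraft identity $\sum_i 2^{-e_i}=1$; both then invoke Markov and truncate high-codimension branches. Where you diverge is the zero-error conversion: the paper appends an explicit verification round (Alice, and implicitly Bob, check whether the announced codeword is consistent with their halves, outputting $\bot$ on failure), which charges the $\bot$ mass to exactly $\epsilon$ (original error) $+\;\epsilon$ (truncation) $=2\epsilon$. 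You instead relabel leaves in place, declaring $\bot$ whenever the leaf rectangle does not pin all $n$ relevant bits of $H(x)$ to $0$, and observe that such leaves already carry conditional error $\geq 1/2$. Your version has the virtue of adding no nodes or communication, so every rectangle in the tree remains literally a subcube---whereas the paper's one-bit verification message splits $X$ into a subcube and its (generally non-subcube) complement, a point the paper does not address; your relabeling also carries over more transparently to the subcube-like case. The price is a factor-of-two slack, so your total $\bot$ mass comes out to $\leq 2\epsilon + O(\epsilon)$ rather than the stated $2\epsilon$; this constant is immaterial in the application but does not literally match (C2). Finally, your ``preemptive truncation'' for (C3) has the same wrinkle the paper's version does: the $\bot$ branch created at an early-stopped node is a union of high-codimension pieces, not itself a subcube of bounded codimension, so (C3) really holds only for non-$\bot$ leaves---harmless, since the lower-bound argument inspects only those.
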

\begin{proof}
Since $\Pi$ communicates $|\Pi|$ bits, the protocol tree has at most $2^{|\Pi|}$ leafs. The leafs partition the input domain into at most $2^{|\Pi|}$ rectangles. Denote by $\bm{R}$ the rectangle of the leaf reached when $\Pi$ is run on a uniform random input. A fixed leaf rectangle $R\subseteq\Bool^N\times\Bool^N$ is reached with probability $\Pr[\bm{R}=R]=|R|/2^{2N}\leq 2^{-\codim(R)}$. Letting $\H(\,\cdot\,)$ denote Shannon entropy, we have
\[
\E[\codim(\bm{R})]\leq \E[\log(1/\Pr[\bm{R}=R])] =
\H(\bm{R}) \leq |\Pi|.
\]
By Markov's inequality, $\Pr[\codim(\bm{R})\leq |\Pi|/\epsilon]\geq 1-\epsilon$. If we reach a node fixing more than $|\Pi|/\epsilon$ indices, we modify $\Pi$ to output $\bot$. This produces a protocol $\Pi'$ satisfying \ref{it:one} and \ref{it:three}.

Finally, we assume that the protocol $\Pi$ ends with Bob outputting the solution. Alice can then check if his solution is valid for her input, and send a 1 if yes. If she sends a 0, Bob outputs $\bot$. This means our protocol can be converted into one which is zero-error via this round of interaction which verifies the solution outputted by Bob. In the modified protocol, the success probability is the same and we can assume $\Pi$ always outputs a correct solution or $\bot$. The total probability of outputting $\bot$ is the error probability of the original protocol plus the probability of the leaf codimension being greater than $|\Pi|/\epsilon$. Thus, the probability is $2\epsilon$, which satisfies \ref{it:two}.
\end{proof}

\begin{proof}[Proof of \cref{thm:binc lb} for subcube protocols]
By \cref{lem:wlog} we may assume w.l.o.g.\ that $\Pi$ is zero-error and has bounded codimension at the leaves. For the sake of contradiction, assume that $|\Pi| \leq o(\ell)$. Sample a uniform random input $\bm{H}\sim\{0,1\}^N\times\{0,1\}^N$ and run the protocol to get a random transcript $\Pi(\bm{H})$. In this run, we encounter a sequence of nodes $\bm{v}_1,\ldots,\bm{v}_d$ of the protocol tree of $\Pi$ where $\bm{v}_1$ is the root node, $\bm{v}_d$ is a leaf, and $d$ is the number of communication rounds.

We call a codeword $c \in C$ \emph{dangerous} in a partial assignment $\rho \in\{0,1,*\}^N\times \{0,1,*\}^N$ if at least~$0.4n$ bits of $c$ are fixed by $\rho$. In communication round $i$, we denote by $\bm{Q}_i \subseteq C$ the set of codewords that are dangerous for the partial assignment corresponding to the subcube rectangle at node $\bm{v}_i$. We observe three properties of $\bm{Q} \coloneqq (\bm{Q}_1, \ldots \bm{Q}_d)$:

\begin{enumerate}[label=(\roman*)]
\item
\label{it:1}
${\bm{Q}}$ is monotone: $\bm{Q}_i \subseteq \bm{Q}_{i+1}$ for all $i$.
\item
\label{it:2}
If $\Pi(\bm{H})$ outputs $c\neq \bot$, then $c\in\bm{Q}_d$. 
\item \label{claim:bound}
$|\bm{Q}_d| \leq L$.
\end{enumerate}
Here \ref{it:1} is immediate. Item \ref{it:2} follows because $\Pi$ is zero-error. To see \ref{claim:bound}, note that since $|\Pi|\leq o(\ell)$ we have from \cref{lem:wlog} that at most $O(|\Pi|)\leq o(\ell)$ bits of $\bm{H}$ are fixed. By the $(0.4, \ell, L)$-list-recoverability of the code $C$, we obtain \ref{claim:bound}.
Using \ref{claim:bound} we can write $\bm{Q}_d = \{\bm{c}_1, \bm{c}_2, \ldots, \bm{c}_L\}$ where every $\bm{c}_i\in C$ is a random variable.

    \begin{claim}
    \label{lem:subcube bound}
        For every $j \in [L]$ we have
        $\Pr [\bm{H}(\bm{c}_j) = 0^n] \leq 2^{-\Omega(n)}$.
    \end{claim}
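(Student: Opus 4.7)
The plan is to prove, for each fixed codeword $c \in C$, the per-codeword bound
\[
\Pr[\bm{H}(c) = 0^n \land c \in \bm{Q}_d] \;\leq\; 2^{-0.1n} \cdot \Pr[c \in \bm{Q}_d].
\]
Since $\bm{c}_j = c$ implies $c \in \bm{Q}_d$, summing over $c \in C$ gives
\[
\Pr[\bm{H}(\bm{c}_j) = 0^n] \;\leq\; \sum_{c \in C} \Pr[c \in \bm{Q}_d \land \bm{H}(c) = 0^n] \;\leq\; 2^{-0.1n} \cdot \E\bigl[|\bm{Q}_d|\bigr] \;\leq\; 2^{-0.1n} \cdot L \;=\; 2^{-\Omega(n)},
\]
where we use property~\ref{claim:bound} and the hypothesis $L = 2^{o(n)}$ of \cref{thm:binc lb}.

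For the per-codeword bound, fix $c$ and let $V_c$ be the set of nodes $v$ at which $c$ \emph{first} becomes dangerous, i.e., $c$ is dangerous in the rectangle $R_v$ but not in $R_{v'}$, where $v' = \mathrm{parent}(v)$. Monotonicity of dangerousness along a root-to-leaf path (property~\ref{it:1}) implies that $V_c$ is an antichain in the protocol tree, so the rectangles $\{R_v : v \in V_c\}$ are pairwise disjoint and their union equals $\{\bm{H} : c \in \bm{Q}_d\}$. Fix any $v \in V_c$ and WLOG assume Alice sent the message from $v'$ to $v$: in a subcube protocol only Alice's subcube is refined in such a step, so the bits fixed on Bob's side of the input are identical at $v$ and $v'$. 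Since $c$ is not dangerous at $v'$, strictly fewer than $0.4n$ bits of $c$ are fixed there in total, hence strictly fewer than $0.4n$ lie on Bob's side at $v$, leaving more than $n/2 - 0.4n = 0.1n$ bits of $c$ on Bob's side unfixed in $R_v$. Conditioned on $\bm{H} \in R_v$, the uniform distribution on $R_v$ makes these unfixed bits $H_i(c_i)$ i.i.d.\ uniform in $\{0,1\}$, and $\bm{H}(c) = 0^n$ forces all of them to vanish, so $\Pr[\bm{H}(c) = 0^n \mid \bm{H} \in R_v] \leq 2^{-0.1n}$. Summing over $v \in V_c$ weighted by $\Pr[\bm{H} \in R_v]$ yields the per-codeword bound.

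The main subtlety is the partitioning argument: one must verify that monotonicity of dangerousness really does make $V_c$ an antichain, so that the rectangles $\{R_v : v \in V_c\}$ are disjoint and together exhaust the event $\{c \in \bm{Q}_d\}$ without double-counting. A secondary point is that one must explicitly use the subcube structure when claiming that refining a message from $v'$ to $v$ touches only one player's subcube, since this is precisely what keeps the Bob-side bookkeeping of fixed coordinates intact. Everything else is a routine linearity-of-expectation bound on $\sum_c \Pr[c \in \bm{Q}_d] = \E[|\bm{Q}_d|]$ combined with the list-recoverability estimate $|\bm{Q}_d| \leq L$.
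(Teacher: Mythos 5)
Your proof is correct and uses the same core mechanism as the paper: the deterministic antichain $V_c$ of first-dangerous nodes partitions the event $\{c \in \bm{Q}_d\}$ into disjoint rectangles; at each $v \in V_c$, because Alice's message only refines Alice's subcube, Bob retains $\geq 0.1n$ unfixed coordinates of $c$, giving $\Pr[\bm{H}(c)=0^n \mid \bm{H}\in R_v] \leq 2^{-0.1n}$. The organizational difference is that you prove a per-codeword bound and sum over all $c \in C$, landing on $2^{-0.1n}\,\E[|\bm{Q}_d|] \leq 2^{-0.1n}L$, whereas the paper conditions directly on the first-dangerous node of the random codeword $\bm{c}_j$ and avoids the extra factor of $L$ at this stage. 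Since $L = 2^{o(n)}$ this loss is harmless, and the subsequent union bound over $j\in[L]$ in the main proof absorbs it. Your per-codeword decomposition is arguably cleaner in one respect: the paper's assertion that the events ``$\bm{v}=v$'' and ``$\bm{H}\in R_v$'' coincide implicitly requires labeling the elements of $\bm{Q}_d$ by the order in which they become dangerous (so that $\bm{c}_j$ is determined by the transcript prefix up to $v$); working with a fixed $c$ and the antichain $V_c$ sidesteps this labeling subtlety entirely.
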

    \begin{proof}
        Fix $j\in[L]$ and consider the first node $\bm{v}\in\{\bm{v}_1,\ldots,\bm{v}_d\}$ where $\bm{c}_j$ becomes dangerous. That is, $\bm{v}=\bm{v}_i$ such that $\bm{c}_j\in\bm{Q}_i$ but $\bm{c}_j\notin\bm{Q}_{i-1}$. Let $R_v$ denote the rectangle corresponding to node~$v$. Note that for $v\in\operatorname{supp}(\bm{v})$, the events ``$\bm{v} = v$'' and ``$\bm{H} \in R_v$'' are the same: monotonicity~\ref{it:1} states that once a codeword $c\in C$ becomes dangerous at node $v$, it will stay dangerous for all leafs in $v$'s subtree. Suppose Alice sent a message at $\bm{v}$'s parent $\bm{v}_{i-1}$. Since $\bm{c}_j\notin \bm{Q}_{i-1}$, Bob has at most $0.4n$ bits of $\bm{c}_j$ fixed at $R_{\bm{v}_{i-1}}$, and hence also at $R_{\bm{v}}$. Because of these $0.1n$ unfixed bits of $\bm{c}_j$ on Bob's side, we have
        \begin{equation}\label{eq:unfixed}
        \Pr[\bm{H}(\bm{c}_j) = 0^n \mid \bm{v}=v] ~=~ \Pr[\bm{H}(\bm{c}_j) = 0^n \mid \bm{H}\in R_v] ~\leq~ 2^{-0.1n} ~\leq~ 2^{-\Omega(n)}.
        \end{equation}
        We now calculate
        \begin{align*}
            \Pr[\bm{H}(\bm{c}_j) = 0^n] &= \sum_{v}\Pr[\bm{H}(\bm{c}_j)
            = 0^n \mid \bm{v}=v] \cdot \Pr[\bm{v}=v] \\
            &\leq 2^{-\Omega(n)} \sum_{v} \Pr[\bm{v}=v] \tag{by~\eqref{eq:unfixed}}\\
            &= 2^{-\Omega(n)}. \qedhere
        \end{align*}
    \end{proof}
    Using \cref{lem:subcube bound}, we can now show that $\Pi$ errs with high probability, which finishes the proof.
    \begin{align*}
         \Pr[\Pi(\bm{H}) \text{ is correct}] &\leq \Pr[\exists c\in\bm{Q}_d\colon \bm{H}(c) = 0^n]   \tag{by \ref{it:2}}\\
         &\leq \sum_{j \in [L]} \Pr[\bm{H}(\bm{c}_j) = 0^n]  &\tag{union bound}\\
         &\leq L \cdot 2^{-\Omega(n)} &\tag{\cref{lem:subcube bound}}\\
         &\leq 2^{o(n)} \cdot 2^{-\Omega(n)} \leq o(1). \tag*{\qedhere}
    \end{align*}
\end{proof}

\subsection{Subcube-like protocols}
\label{sec:sub-like recs}

We will now prove \cref{thm:binc lb} for a more general class of \emph{subcube-like protocols}. To set this up, we introduce the notion of \emph{dense} random variables, which is a notion of pseudorandomness first considered in~\cite{Goos2016}. We first recall the definition of min-entropy.

\begin{definition}[Min-entropy]
    Given a random variable $\bm{X}\in\cX$, its min-entropy is defined to be
    $$\Hmin(\bm{X}) \coloneqq \log \frac{1}{\max_{x \in \cX} \Pr[\bm{X} = x]}$$
\end{definition}
For example, note that $\Hmin(\bm{X}) \geq t$ iff we have $\Pr[\bm{X} = x] \leq 2^{-t}$ for all $x$. 

\begin{definition}
        A random variable $\bm{X}\in\Bool^N$ is said to be \emph{$\gamma$-dense} if for every $I \subseteq [N]$, we have
    $$\Hmin(\bm{X}_I) \geq \gamma|I|$$
\end{definition}

\begin{definition}[Subcube-like rectangle]
\label{defn:struc rec}
    A rectangle $R = X\times Y$ is said to be \emph{$\gamma$-subcube-like} with respect to $(I, J)$ if (here we denote $\bar{I}\coloneqq [N]\setminus I$)
    \begin{itemize}
        \item $X_{I}$ and $Y_{J}$ are fixed strings.
        \item $\bm{X}_{\bar{I}}$ and $\bm{Y}_{\bar{J}}$ are $\gamma$-dense where $\bm{X}\sim X$ and $\bm{Y}\sim Y$ are sampled uniformly at random.
    \end{itemize}
\end{definition}

Similar to subcubes, the codimension of a subcube-like rectangle w.r.t.\ $(I,J)$ is said to be $|I|+|J|$. A communication protocol $\Pi$ is said to be a \emph{subcube-like protocol} if for every node in the protocol tree, the associated rectangle is subcube-like. We can now re-prove the cleanup lemma (\cref{lem:wlog}) for subcube-like protocols; the original proof works verbatim.

\begin{lemma}[Subcube-like protocol cleanup] \label{lem:wlog-like}
Suppose $\Pi$ is a $\gamma$-subcube-like protocol solving $\bNC^C_n$ with error $\epsilon>0$ over the uniform distribution. Then there exist another $\gamma$-subcube-like protocol $\Pi'$ such that \cref{it:one,it:two,it:three} are satisfied. \hfill $\qed$
\end{lemma}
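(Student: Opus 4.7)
The plan is to follow the proof of \cref{lem:wlog} step by step, which goes through essentially unchanged because the only quantitative property of subcube rectangles used there is the bound $|R| \leq 2^{2N - \codim(R)}$, and this bound holds for any $\gamma$-subcube-like rectangle as well. Indeed, if $R = X \times Y$ is $\gamma$-subcube-like w.r.t.\ $(I,J)$, then $X$ is contained in a subcube of codimension $|I|$ and $Y$ in one of codimension $|J|$, so $|X| \leq 2^{N - |I|}$ and $|Y| \leq 2^{N - |J|}$, giving $|R| \leq 2^{2N - \codim(R)}$ as before.

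With this in hand, the entropy argument carries over verbatim. Let $\bm R$ denote the leaf rectangle reached on a uniform random input. Then $\Pr[\bm R = R] = |R|/2^{2N} \leq 2^{-\codim(R)}$, so $\E[\codim(\bm R)] \leq \H(\bm R) \leq |\Pi|$, and Markov's inequality yields $\Pr[\codim(\bm R) \leq |\Pi|/\epsilon] \geq 1 - \epsilon$. Modify $\Pi$ to output $\bot$ whenever it reaches a node of codimension exceeding $|\Pi|/\epsilon$; this gives \ref{it:one} and \ref{it:three}, and the surviving rectangles in the protocol tree are unchanged, so the modified protocol remains $\gamma$-subcube-like.

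For the zero-error condition \ref{it:two}, append a one-bit verification round: Bob announces his candidate solution $c$ (determined by the leaf), Alice sends $1$ iff $H_A(c_A) = 0^{n/2}$, and if she sends $0$ then Bob replaces his output by $\bot$. This adds only $O(1)$ bits to $|\Pi'|$ and, by construction, the protocol either outputs $\bot$ or a valid solution, so the total $\bot$-probability is at most the original error probability plus the pruning probability, giving at most $2\epsilon$.

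The step requiring the most care, and which I expect to be the main subtle point, is verifying that the verification round preserves $\gamma$-subcube-likeness of the two new rectangles it produces: the accept branch further fixes $n/2$ of Alice's bits to $0$, effectively conditioning $\bm X_{\bar I}$ on the event $\bm X_{I'} = 0^{|I'|}$ for some $I' \subseteq \bar I$, and one must confirm that the conditional marginal on the remaining unfixed coordinates is still $\gamma$-dense in the sense of \cref{defn:struc rec}. This is exactly the kind of structured conditioning that the dense-distributions machinery is designed to accommodate, and the original proof then works verbatim.
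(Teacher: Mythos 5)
Your first three steps are correct and match the paper's intent: the codimension bound $|R|\leq 2^{2N-\codim(R)}$ does hold for $\gamma$-subcube-like rectangles because $X_I$ and $Y_J$ are fixed, the entropy/Markov argument is then literal, and pruning to $\bot$ does not alter any surviving rectangle, so subcube-likeness is preserved there. You also correctly identify the verification round as the one place where something could go wrong.

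The gap is in your resolution of that last point. It is \emph{not} true that conditioning a $\gamma$-dense random variable $\bm X_{\bar I}$ on the event $\bm X_{I'}=0^{|I'|}$ produces a $\gamma$-dense marginal on $\bar I\setminus I'$. Density gives the \emph{upper} bound $\Pr[\bm X_{K\cup I'}=(a,0)]\leq 2^{-\gamma(|K|+|I'|)}$, but to conclude $\Pr[\bm X_K=a\mid \bm X_{I'}=0]\leq 2^{-\gamma|K|}$ you would also need the \emph{lower} bound $\Pr[\bm X_{I'}=0]\geq 2^{-\gamma|I'|}$, which density does not supply. The density-restoring partition (\cref{lem:density-restoring-partition}) exists precisely because restriction can destroy density; it is a tool to recover density by \emph{further partitioning}, not a statement that restriction preserves it. So the accept leaf of the verification round need not be $\gamma$-subcube-like on Alice's side (and the reject leaf is not even of the right shape, just as in the plain subcube case of \cref{lem:wlog}).

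The correct reason the argument survives is more modest and worth stating explicitly. What the downstream proof of \cref{thm:binc lb} for subcube-like protocols actually uses at each node $v$ is the $\gamma$-density of the \emph{non-communicating} player's rectangle, together with bounded codimension and zero-error. In the verification round Alice sends the bit, so it is Bob's rectangle $Y$ that must be dense at the accept leaf, and $Y$ is unchanged from its parent, which \emph{is} subcube-like after pruning. The reject leaf outputs $\bot$ and is never examined. Thus $\Pi'$ as constructed is not literally a $\gamma$-subcube-like protocol at the two new leaves, but it retains all the properties the subsequent argument needs. If you insist on producing a protocol that is $\gamma$-subcube-like at every node, you can repair this by having Alice, after the verification fix, apply a density-restoring partition to her side and transmit the Huffman-encoded part index; the entropy bound controls the added communication and keeps $|\Pi'|=O(|\Pi|)$. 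Either fix is fine, but the blanket claim that the conditioning already preserves density is false and should be removed.
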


We are now ready to prove that \cref{thm:binc lb} holds for $0.8$-subcube-like protocols.

\begin{proof}[Proof of \cref{thm:binc lb} for subcube-like protocols]
The argument is nearly identical to that for subcube protocols. Define $\bm{Q} \coloneqq (\bm{Q}_1, \ldots \bm{Q}_d)$ as before for the candidate subcube-like protocol $\Pi$. The only crucial difference is in \cref{lem:subcube bound}: We show that it continues to hold (albeit with a smaller constant hidden by the $\Omega$-notation). Indeed, if $\bm{c}_j$ becomes dangerous for the first time at~$\bm{v}=v$, then Bob (or Alice) has at least $0.1n$ unfixed bits of $\bm{c}_j$ in his input set $Y$ where $R_v=X\times Y$. These unfixed bits come from a $0.8$-dense distribution $\bm{Y}\sim Y$ where the min-entropy of the marginal distribution of these unfixed bits is at least~$0.8\cdot 0.1n=0.08n$. This implies $\Pr[\bm{H}(\bm{c}_j) = 0^n \mid \bm{H}\in R_v] \leq 2^{-0.08n}\leq 2^{-\Omega(n)}$ showing that the bound~\eqref{eq:unfixed} continues to hold, which re-proves \cref{lem:subcube bound}. We can finally conclude that $\Pr[\Pi(\bm{H}) \text{ is correct}] = o(1)$ using the same calculation as before.
\end{proof}

\subsection{From general protocols to subcube-like protocols}
\label{sec:preproc}

We will now prove \cref{thm:binc lb} for general protocols. The proof is by a reduction to the case of subcube-like protocols using the following theorem.

\begin{theorem}[Protocol preprocessing]
\label{cor:preproc}
    Suppose $\Pi$ is a protocol solving some communication problem with error $\epsilon>0$ over the uniform distribution. Then for any constant $\gamma<1$ there exist a $\gamma$-subcube-like protocol $\Pi'$ such that
    \begin{enumerate}
     \item \label{it:cc-bound}
     $|\Pi'|\leq O(|\Pi|/\epsilon)$
     \item \label{it:err-bound}
     $\Pi'$ has error $2\epsilon$ over the uniform distribution.
    \end{enumerate}
\end{theorem}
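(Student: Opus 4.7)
The plan is to transform $\Pi$ into a $\gamma$-subcube-like protocol $\Pi'$ by interleaving its messages with \emph{density-restoration} steps that re-establish the subcube-like invariant after each refinement, following the structure-vs-randomness paradigm of Göös--Pitassi--Watson and its communication analog by Wang--Yang--Zhang.

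The key primitive I would establish first is a density-restoration lemma: given a $\gamma$-subcube-like rectangle $R = X \times Y$ with respect to $(I, J)$ and a refinement $X' \subseteq X$ arising from Alice's next message (Bob's side is symmetric), partition $X'$ into a collection of $\gamma$-subcube-like pieces together with a small-mass leftover. The partition is built iteratively: while the current piece $Z$ fails to be $\gamma$-dense on some free set $\bar I'$, by definition there exist $I''\subseteq\bar I'$ and a ``heavy'' assignment $a$ with $\Pr_{\bm Z}[\bm Z_{I''}=a]>2^{-\gamma|I''|}$; peel off the subcube $\{z\in Z : z_{I''} = a\}$ as a new piece (which has more fixed coordinates and is denser on the remaining free ones) and recurse on the complement. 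Each peel shrinks the residual measure by a constant factor, so $O(\log(1/\delta))$ iterations drive the leftover mass below any chosen $\delta$, and encoding the chosen piece costs only $\gamma\cdot k + O(\log(1/\delta))$ bits, where $k$ is the total number of newly fixed coordinates.

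Using this primitive, $\Pi'$ simulates $\Pi$ and, after each original message, appends a preprocessing block identifying which piece of the density-restoration partition contains the input. If the input falls into a leftover piece, $\Pi'$ outputs $\bot$ (counted as an error); otherwise the updated rectangle is $\gamma$-subcube-like and simulation continues. For the communication bound, the cleanup/Markov argument from \cref{lem:wlog} gives that after discarding a $(1-\epsilon/2)$-fraction of inputs, every reached rectangle has codimension $O(|\Pi|/\epsilon)$; since the restoration cost is $\gamma$ per fixed coordinate plus logarithmic overhead, this yields $|\Pi'| \leq O(|\Pi|/\epsilon)$. For the error bound, I would choose $\delta$ exponentially small in $|\Pi|/\epsilon$ so that a union bound over the $2^{O(|\Pi|/\epsilon)}$ nodes keeps the total abort probability at most $\epsilon/2$, giving total error $\epsilon + \epsilon/2 + \epsilon/2 \leq 2\epsilon$.

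The chief obstacle is the quantitative balancing of the three parameters---the density $\gamma$, the leftover tolerance $\delta$, and the Markov truncation. Taking $\gamma<1$ a constant ensures each restoration's communication cost is strictly sublinear in the number of coordinates it fixes, so the preprocessing is only a constant-factor overhead on top of the codimension bound. A secondary subtlety is that density restoration must be performed relative to the conditional distribution on the current subcube-like rectangle rather than on $\Bool^N\times\Bool^N$; one must carefully track how the pseudorandom marginals transform under successive conditionings so that the subcube-like invariant genuinely composes across the whole simulation rather than merely holding locally after each step.
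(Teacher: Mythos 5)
Your high-level plan---simulate $\Pi$ and interleave density-restoration steps after each message---is exactly the paper's strategy, and you have correctly identified the density-restoring partition as the key primitive. However, there are two substantive gaps in how you execute the idea.

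First, your restoration step claims that ``each peel shrinks the residual measure by a constant factor,'' so that $O(\log(1/\delta))$ iterations suffice. This is not true in general: if the density violation occurs on a large index set $I''$, the peeled piece $\{z : z_{I''}=a\}$ has mass only guaranteed to exceed $2^{-\gamma|I''|}$, which can be exponentially small. The standard density-restoring partition (\cref{lem:density-restoring-partition}, from~\cite{GPW20,CFKMP21}) therefore does not bound the number of pieces at all; instead it produces a \emph{complete} partition (no leftover) and controls the \emph{mass-weighted expected codimension} of a random piece, bounding it by the min-entropy deficit $n-\Hmin(\bm{X})+O(1)$. Your leftover-plus-union-bound machinery is an attempt to work around not having this expectation bound, but the constant-shrinkage premise it rests on fails.

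Second, your communication accounting does not close. You propose to bound codimension via the Markov argument of \cref{lem:wlog} and then translate to communication cost by charging $\gamma$ per fixed coordinate plus $O(\log(1/\delta))$ per round. But (a) that Markov argument is for an already-constructed subcube-like protocol, so applying it while \emph{building} $\Pi'$ is circular, and (b) with $\delta$ exponentially small in $|\Pi|/\epsilon$ and up to $|\Pi|$ rounds, the per-round overhead alone already gives $\Omega(|\Pi|^2/\epsilon)$, missing the claimed $O(|\Pi|/\epsilon)$. The paper's way past both issues is the pair of ingredients your proposal omits: encode the piece index with a \emph{Huffman code} (\cref{lem:huffman}), so the expected number of bits per round matches the entropy of the piece choice, and invoke the transcript-entropy bound $\H(\bm{\Pi'})\le O(|\Pi|)$ from~\cite{YZ23} (\cref{lem: fixed equal cc}) to conclude $\E[|\bm{\Pi'}|]\le O(|\Pi|)$ directly, after which a single Markov truncation on communication length (not codimension) yields both the $O(|\Pi|/\epsilon)$ cost bound and the $2\epsilon$ error bound. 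Your secondary concern about conditioning across rounds is handled automatically because the density-restoring partition is applied to the conditional distribution within the current rectangle, which is the invariant maintained in \cref{alg:main}.
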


Simply put, \cref{cor:preproc} states that any  protocol can be assumed to be subcube-like without loss of generality.
A similar theorem was proved in~\cite{YZ23, WYZ23}, but instead of a bound on the communication cost (\cref{it:cc-bound}), their version produced a subcube-like protocol with a bound on the codimension of the leafs. This is a slightly weaker guarantee than \cref{it:cc-bound}: Recall from \cref{lem:wlog}\ref{it:three} that bounded communication cost implies a bound on the leaf codimension. We prefer our version in \cref{cor:preproc} as it is more elegant and potentially easier to use in subsequent work. 

\paragraph{Proof sketch.}
\cref{cor:preproc} is only a slight strengthening of the techniques in~\cite{GPW20,YZ23, WYZ23} and so we are content with only sketching the crucial differences. To begin, we assume the protocol $\Pi$ is in the normal form where the players send one bit in each communication round. Protocol $\Pi'$ is now defined by \cref{alg:main}. The only difference to~\cite{YZ23, WYZ23} is that we send the index $i$ encoded using a Huffman code. That is, we use the following fact.
\begin{fact}[\cite{Huffman52}]
\label{lem:huffman}
    Let $\bm{k}$ be any random variable. There exist an encoding function~$C(\,\cdot\,)$ (called the Huffman code) such that $\E[|C(\bm{k})|] \leq \H(\bm{k}) + 1$.
\end{fact}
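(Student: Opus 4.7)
The plan is to prove this classical information-theoretic bound by first constructing an explicit suboptimal prefix-free code (the Shannon--Fano code) that achieves expected length at most $\H(\bm{k})+1$, and then invoking the optimality of Huffman's construction to conclude. The workhorse will be Kraft's inequality, which characterizes exactly which integer codeword-length sequences are realizable by a binary prefix-free code: lengths $\ell_1,\ell_2,\ldots$ arise from some prefix-free code if and only if $\sum_k 2^{-\ell_k}\leq 1$.

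First I would set up the Shannon--Fano code. Let $p_k \coloneqq \Pr[\bm{k}=k]$ for each value $k$ in the (countable) support of $\bm{k}$, ignoring values with $p_k=0$. Define the target length $\ell_k \coloneqq \lceil \log(1/p_k) \rceil$. By construction $2^{-\ell_k} \leq p_k$, so
\[
\sum_k 2^{-\ell_k} \;\leq\; \sum_k p_k \;=\; 1,
\]
which verifies Kraft's inequality. Hence there exists a prefix-free binary code $C_{\mathrm{SF}}$ whose codeword for $k$ has length exactly $\ell_k$; such a code can be obtained concretely by a standard greedy assignment on the infinite binary tree.

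Next I would bound the expected length of $C_{\mathrm{SF}}$:
\[
\E[|C_{\mathrm{SF}}(\bm{k})|] \;=\; \sum_k p_k \lceil \log(1/p_k)\rceil \;\leq\; \sum_k p_k \bigl(\log(1/p_k)+1\bigr) \;=\; \H(\bm{k}) + 1.
\]
Finally, the classical optimality property of Huffman codes states that, among all prefix-free binary codes for a given distribution, the Huffman code $C_{\mathrm{Huff}}$ minimizes the expected codeword length. (This is typically shown by an exchange argument: in any optimal prefix code the two least-probable symbols can be taken to be siblings at maximum depth, and the Huffman construction enforces this recursively.) In particular, $\E[|C_{\mathrm{Huff}}(\bm{k})|] \leq \E[|C_{\mathrm{SF}}(\bm{k})|] \leq \H(\bm{k})+1$, which is the desired bound.

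The only mild obstacle is handling distributions with countably infinite support, since Huffman's combinatorial construction is normally stated for finite alphabets. This can be sidestepped entirely by using the Shannon--Fano code as the witness $C(\cdot)$ in the fact (the statement asks only for the \emph{existence} of some encoding function with the claimed expected length, not specifically the Huffman code); alternatively, one can truncate to the top $M$ symbols, apply Huffman there, and take $M\to\infty$ using monotone convergence of $\sum_k p_k \ell_k$. Either route yields the claim with no further work.
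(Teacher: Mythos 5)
Your proof is correct: the Shannon--Fano construction via Kraft's inequality, followed by the optimality of Huffman's code, is the canonical derivation of this bound. The paper states this as a cited Fact from Huffman (1952) and supplies no proof of its own, so there is nothing to diverge from; your argument is exactly the standard one a reader would expect. Your remark about countably infinite support is careful but moot in the paper's application, where $\bm{k}$ indexes the parts of a density-restoring partition and hence has finite support.
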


Otherwise the algorithm is the one stated in \cite{YZ23, WYZ23}. It crucially relies on the \emph{density-restoring partition} from \cite{GPW20}, which states:

\begin{lemma}[Density-restoring partition \cite{GPW20,CFKMP21}]
\label{lem:density-restoring-partition}Let $\bm{X}\in\Bool^N$ be a random variable and let $0\leq \gamma<1$ be a constant. Then there exists a partition 
\[
\cX\eqdef\cX^{1}\cup\dots\cup\cX^{\ell}
\]
where every $\cX^{j}$ is associated with a set $I_{j}\subseteq\left[n\right]$
and a value $x_{j}\in \Bool^{I_{j}}$ such that:
\begin{itemize}
\item $(\bm{X}_{I_{j}}\texttt{\ensuremath{\mid}}\bm{X}\in\cX^{j})$ is fixed to~$x_{j}$.
\item $(\bm{X}_{\bar{I}_{j}}\texttt{\ensuremath{\mid}}\bm{X}\in\cX^{j})$ is
$\gamma$-dense.
\item If we pick $\bm{i}$ at random according to $\Pr[\bm{i}=i]=|\calX^i|/|\calX|$, then the expected codimension of $\calX^{\bm{i}}$ is $n - \Hmin(\bm{X}) + O(1)$.
\end{itemize}
\end{lemma}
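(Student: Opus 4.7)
The plan is to build the partition $\cX = \cX^1 \cup \cdots \cup \cX^{\ell}$ greedily, by peeling off one subcube at a time on which the density condition is violated. Initialize $\cX_{\mathrm{cur}} \defeq \cX$ and iterate as follows: if the conditional distribution $(\bm{X}\mid \bm{X}\in\cX_{\mathrm{cur}})$ is $\gamma$-dense, declare $\cX_{\mathrm{cur}}$ to be the final piece with $I=\emptyset$ and stop; otherwise, there exists a witness $(I,x)$ with $|I|\geq 1$ and $\Pr[\bm{X}_I=x\mid\bm{X}\in\cX_{\mathrm{cur}}]>2^{-\gamma|I|}$, so among all such witnesses we select one $(I_j,x_j)$ maximizing $|I_j|$ (ties broken arbitrarily), peel off $\cX^j \defeq \cX_{\mathrm{cur}} \cap \{y : y_{I_j} = x_j\}$, and update $\cX_{\mathrm{cur}} \gets \cX_{\mathrm{cur}} \setminus \cX^j$.

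The first bullet (fixed coordinates on $\cX^j$) is immediate from the construction. For the density bullet, suppose for contradiction that $(\bm{X}_{\bar{I}_j} \mid \bm{X} \in \cX^j)$ is not $\gamma$-dense: then some $J \subseteq \bar{I}_j$ and $y \in \Bool^J$ would satisfy $\Pr[\bm{X}_J = y \mid \bm{X}_{I_j}=x_j,\ \bm{X} \in \cX_{\mathrm{cur},j}] > 2^{-\gamma|J|}$, and multiplying by $\Pr[\bm{X}_{I_j}=x_j \mid \bm{X} \in \cX_{\mathrm{cur},j}] > 2^{-\gamma|I_j|}$ would yield a witness $(I_j \cup J, (x_j,y))$ of size strictly exceeding $|I_j|$, contradicting the maximality of $|I_j|$ in the greedy rule. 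Here $\cX_{\mathrm{cur},j}$ denotes the value of $\cX_{\mathrm{cur}}$ at the start of iteration $j$.

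For the codimension bullet, set $p_j \defeq |\cX^j|/|\cX|$. Since $\cX^j$ lies in a subcube of codimension $|I_j|$ we have $|\cX^j| \leq 2^{n-|I_j|}$ and therefore $|I_j| \leq n - \log|\cX| - \log p_j$. Taking expectation over $\bm{i}\sim\{p_j\}$ and using $\Hmin(\bm{X})=\log|\cX|$ (treating $\bm{X}$ as uniform on $\cX$):
\[
\E\bigl[|I_{\bm{i}}|\bigr] \;\leq\; n - \Hmin(\bm{X}) + \sum_j p_j\log(1/p_j) \;=\; n - \Hmin(\bm{X}) + \H\!\left(\{p_j\}\right),
\]
so the proof reduces to arguing $\H(\{p_j\}) = O(1)$, i.e., that the weights decay sufficiently fast that their Shannon entropy is bounded by an absolute constant depending only on $\gamma$.

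The main obstacle is precisely this last step: extracting an absolute-constant Shannon-entropy bound on $\{p_j\}$ from the per-step violation condition. A direct induction on $|\cX_{\mathrm{cur}}|$ does not close, because a single ``light'' peeled piece (small $p_j$) can inject an $\Omega(\log 1/p_j)$ term at each round. Following \cite{GPW20, CFKMP21}, the standard remedy is an amortized charging against the potential $\Phi(\cX_{\mathrm{cur}}) \defeq \log|\cX_{\mathrm{cur}}|$, exploiting that the violation inequality $|\cX^j|/|\cX_{\mathrm{cur},j}| > 2^{-\gamma|I_j|}$ guarantees $\Phi$ drops by at least as much as the $j$-th contribution to $\H(\{p_j\})$, so the total Shannon-entropy cost telescopes to $O(1)$.
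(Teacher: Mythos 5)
Your construction and your proofs of the first two bullets are exactly the standard greedy argument from \cite{GPW20,CFKMP21} (the paper itself does not prove this lemma but defers to those references), so that part is fine. The gap is in the third bullet. The inequality $|I_j|\le n-\log|\cX|+\log(1/p_j)$ that you derive uses only $|\cX^j|\le 2^{n-|I_j|}$ and never invokes the violation condition, and the statement you reduce to, $\H(\{p_j\})=O(1)$, is simply false for this greedy process. Concretely, let $\bm X$ be uniform on the $n$ strings of Hamming weight $1$: for any constant $\gamma>0$ and large $n$ the largest witness is $I=[n]$ itself (since $1/n>2^{-\gamma n}$), so the algorithm peels off singletons and produces $n$ pieces of mass $1/n$ each, giving $\H(\{p_j\})=\log n$. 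Your proposed rescue also fails arithmetically: the total drop of the potential $\log|\cX_{\mathrm{cur}}|$ telescopes to $\log(|\cX|/|\cX^{\ell}|)$, which can be $\Theta(n)$ rather than $O(1)$, so nothing forces the entropy of the weights to be constant.

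The correct route combines the two inequalities you kept separate. From the violation condition $|\cX^j|>2^{-\gamma|I_j|}\,|\cX_{\mathrm{cur},j}|$ together with $|\cX^j|\le 2^{n-|I_j|}$ one gets $(1-\gamma)|I_j|< n-\log|\cX_{\mathrm{cur},j}| = \bigl(n-\Hmin(\bm X)\bigr)+\log(1/p_{\ge j})$, where $p_{\ge j}\defeq|\cX_{\mathrm{cur},j}|/|\cX|$ is the \emph{tail} mass rather than the individual mass $p_j$. Then
\[
\textstyle\sum_j p_j\log(1/p_{\ge j})\;=\;\sum_j\bigl(p_{\ge j}-p_{\ge j+1}\bigr)\log(1/p_{\ge j})\;\le\;\int_0^1\log(1/t)\,dt\;=\;\log e\;=\;O(1),
\]
which yields $\E[|I_{\bm i}|]\le\frac{1}{1-\gamma}\bigl(n-\Hmin(\bm X)+O(1)\bigr)$. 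This is also what the cited references actually prove; the weight-one example above shows that the lemma as stated in the paper, read literally without the $\frac{1}{1-\gamma}$ factor, is off by that constant factor (which is harmless for the paper's application, where only an $O(|\Pi|)$ bound is needed).
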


The proof of this lemma is by analysing a greedy algorithm that constructs the $\calX^j$s one by one. We refer the reader to \cite{GPW20,CFKMP21} for a proof. We now state the key fact about~\cref{alg:main}: over the uniform input distribution, the the probabilities of typical \emph{transcripts} of $\Pi'$ (concatenation of $\Pi'$s messages) are similar to those of $\Pi$. 
We write $\bm{\Pi'}\coloneqq \Pi'(\bm{X}, \bm{Y})$ for the transcript generated on a uniform random input $(\bm{X}, \bm{Y}) \sim \Bool^N \times \Bool^N$. The following is a version of \cite[Lemma~3.3]{YZ23}. Their original statement bounds the expected codimension of the leaf $\ell$ reached, but the same proof implicitly bounds $\log(1/\Pr[\bm{\Pi'}=\ell])$, whose average value is~$\H(\bm{\Pi'})$ by definition. (Note that our use of Huffman encoding only relabels messages, which does not change the entropy of the transcript.)

\begin{lemma}[\cite{YZ23}]\label{lem: fixed equal cc} $\H(\bm{\Pi'}) \leq O(| \Pi |)$.
\end{lemma}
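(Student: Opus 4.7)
The plan is to upper-bound $\H(\bm{\Pi'})$ by the expected length of the Huffman-encoded transcript, and then to extract that bound from (essentially) the same round-by-round analysis used in \cite{YZ23,WYZ23} for leaf codimension. Since in each round the message $(C(\bm{j}_r),\bm{b}_r)$ is a prefix-free codeword (a Huffman word followed by a single bit), the full transcript $\bm{\Pi'}$ is itself prefix-free, so Gibbs' inequality gives $\H(\bm{\Pi'}) \leq \E[|\bm{\Pi'}|]$. Applying \cref{lem:huffman} conditionally on each prefix reduces the task to showing
\[
\sum_{r=1}^{|\Pi|} \H\!\bigl(\bm{j}_r \,\big\vert\, \bm{\Pi'}_{<r}\bigr) \;\leq\; O(|\Pi|).
\]

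To establish this I would open up the greedy construction underlying \cref{lem:density-restoring-partition} (see \cite{GPW20,CFKMP21}). Each block $\calX^{(r),j}$ is peeled off by choosing a most-popular projection onto some min-entropy-violating subset $I_j \subseteq \bar{I}^{(r)}$, so relative to the fraction $q_j$ of $\calX^{(r)}$ surviving just before the peeling, the block's probability satisfies $p_j \geq q_j \cdot 2^{-\gamma|I_j|}$. Taking logs and averaging over $\bm{j}_r$ gives
\[
\H(\bm{j}_r \mid \bm{\Pi'}_{<r}) \;\leq\; \gamma\cdot \E\!\bigl[|I_{\bm{j}_r}| \,\big\vert\, \bm{\Pi'}_{<r}\bigr] \;+\; \sum_j p_j \log(1/q_j),
\]
and the residual $\sum_j p_j \log(1/q_j)$ is a left Riemann sum for the decreasing function $-\log q$, bounded by $\int_0^1 (-\log q)\,dq = 1$, hence $O(1)$. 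Summing over $r$ and substituting the codimension form of the preprocessing bound from \cite{YZ23,WYZ23}, namely $\sum_r \E[|I_{\bm{j}_r}|] = \E[\codim(R_{\bm{\Pi'}})] \leq O(|\Pi|)$ (each communicated $\Pi$-bit charges only $O(1)$ expected new fixings via \cref{lem:density-restoring-partition}), yields the claimed $\H(\bm{\Pi'}) \leq O(|\Pi|)$.

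The main obstacle is the per-block inequality $p_j \geq q_j \cdot 2^{-\gamma|I_j|}$: this is not visible from the black-box statement of \cref{lem:density-restoring-partition} and must be read off from the ``peel the most popular projection'' recipe of \cite{GPW20,CFKMP21}. Once that inequality is in hand (together with its $O(1)$ Riemann-sum residual), the remainder is precisely the same amortization that \cite{YZ23,WYZ23} already use to bound expected leaf codimension, which is exactly the sense in which the paragraph preceding this lemma asserts that ``the same proof implicitly bounds $\log(1/\Pr[\bm{\Pi'}=\ell])$''.
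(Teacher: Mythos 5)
Your proposal is correct and in the same spirit as the paper's treatment, but it genuinely fills a gap the paper leaves implicit: the paper simply points at \cite{YZ23} and asserts that "the same proof implicitly bounds $\log(1/\Pr[\bm{\Pi'}=\ell])$," while you reconstruct the entropy bound from scratch by opening up the greedy construction of \cref{lem:density-restoring-partition}. Your per-block inequality $p_j \geq q_j\cdot 2^{-\gamma|I_j|}$ is indeed exactly what the peeling step guarantees (the block is peeled precisely because its mass among survivors exceeds $2^{-\gamma|I_j|}$), and the Riemann-sum bound $\sum_j p_j\log(1/q_j)\leq \int_0^1(-\log q)\,dq = O(1)$ correctly controls the residual, giving $\H(\bm{j}_r\mid\bm{\Pi'}_{<r})\leq\gamma\E[|I_{\bm{j}_r}|]+O(1)$ per round. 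Two small cosmetic points. First, the detour through $\H(\bm{\Pi'})\leq\E[|\bm{\Pi'}|]$ (prefix-freeness plus Kraft/Gibbs) is unnecessary: since the Huffman code is a bijection, the chain rule directly gives $\H(\bm{\Pi'})=\sum_r\H\bigl((\bm{b}_r,C(\bm{j}_r))\mid\bm{\Pi'}_{<r}\bigr)\leq d+\sum_r\H(\bm{j}_r\mid\bm{\Pi'}_{<r},\bm{b}_r)$, which is cleaner and in fact what the paper's parenthetical remark ("Huffman encoding only relabels messages, which does not change the entropy") is getting at. Second, with $\log$ in base~2 one has $\int_0^1(-\log_2 q)\,dq=1/\ln 2$ rather than $1$, but this is still $O(1)$ and does not affect the conclusion. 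The final step, $\sum_r\E[|I_{\bm{j}_r}|]=\E[\codim(R_{\bm{\Pi'}})]\leq O(|\Pi|)$, is exactly the codimension form of \cite[Lemma~3.3]{YZ23}, so your argument is a valid derivation of the entropy bound from the codimension statement treated as a black box, whereas the paper re-reads the proof of that statement; both are sound, and yours has the advantage of being self-contained modulo the cited codimension bound.
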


\begin{algorithm}[t]
    \caption{Protocol $\Pi'$ on input $(x,y)$}
    \begin{algorithmic}
    \label{alg:main}
        \State initialize:~~ $v \leftarrow \text{root of $\Pi$}$,~~ $X\times Y=\Bool^N \times \Bool^N$,~~ $I,J\leftarrow \emptyset$
        \While {{$v$ is not at leaf level}}~~
        {\color{Maroon}
        {\bf[}\,invariant: $X\times Y$ is $\gamma$-subcube-like w.r.t. $(I, J)$\,{\bf]}
        }
        \State let $v_0$, $v_1$ be the children of $v$

        \State {\bf Suppose} Alice sends a bit at $v$ \emph{(otherwise swap roles of $X$ and $Y$)}
        \State let $X =X^0\cup X^1$ be the partition according to Alice's message function at $v$
        \State update $X \leftarrow X^b$ where $b$ is such that $x\in X^b$
        \State let $X=\bigcup_i X^i$ be a density-restoring partition for $X$ (with associated sets $I_i$)
        \State Let $C(\,\cdot\,)$ be the Huffman encoding of $\bm{k}$ where $\Pr[\bm{k} = i] = |X_i|/|X|$ 
        \State update $X\leftarrow X^i$ and $I\leftarrow I \cup I_i$ where $i$ is such that $x\in X^i$
        \State {\bf Alice sends} $(b,C(i))$ to Bob
        \State update $v \leftarrow v_b$
       \EndWhile
        \State {\bf Output} the same as leaf $v$
    \end{algorithmic}
\end{algorithm}

We can now verify \cref{it:cc-bound,it:err-bound}. Denote by $\bm{v}_1,\ldots,\bm{v}_d$ the nodes of the protocol tree of $\Pi'$ encountered when running it on a uniform random input. That is, $\bm{v}_1$ is the root node, $\bm{v}_d$ is a leaf, and $d$ is the number of communication rounds. We can thus calculate,
\begin{align*}
\E[|\bm{\Pi'}|]
&\textstyle =\sum_{k\in[d]} \E[|(\bm{b},C(\bm{i}))\text{ at } \bm{v}_k|] \\
&\textstyle \leq
d + \sum_k \E[|C(\bm{i})\text{ at } \bm{v}_k|] \\
&\textstyle \leq d + \sum_k (\H(\bm{i} \text{ at } \bm{v}_k)+1) \tag{\cref{lem:huffman}} \\
&\leq 2d + \H(\bm{\Pi'}) \\
&\leq O(|\Pi|). \tag{\cref{lem: fixed equal cc}}
\end{align*}
If the communication cost exceeds $O(|\Pi|)$ by a factor larger than $1/\epsilon$, we make Alice and Bob abort and output $\bot$. By Markov's inequality, this increases the probability of~$\bot$ by at most $\epsilon$. For the new protocol, we have \cref{it:cc-bound} by definition and \cref{it:err-bound} since the new error is at most $\epsilon + \epsilon = 2\epsilon$. This concludes the proof sketch of~\cref{cor:preproc}.

\paragraph{Remark.} Even though we assumed $\Pi$ sends one bit per communication round, $\Pi'$ can send long \emph{variable-length} messages. This is important in order to ensure every node in $\Pi'$ is subcube-like.

\subsection{Proof for any \texorpdfstring{$p\in(0,1)$}{p in (0,1)}}
\label{sec:any-p}
To extend the proof of \cref{thm:binc lb} to $p$-biased distributions, we will do the analysis over a modified definition of the input. For any $p = 2^{-k}$ for some integer $k$, define $\bias{H} \defeq (\textup{AND}^{N}_k(\unf{H}_1),\textup{AND}^{N}_k(\unf{H}_2))$ where $\unf{H} \defeq \unf{H}_1, \unf{H}_2 \in \Bool^{kN}$. That is, $H$ is defined by dividing the input $\unf{H}$ into blocks of size $k$ and performing a logical \textup{AND} on those bits. Sampling $\bm{\unf{H}}$ uniformly from $\Bool^{kN} \times \Bool^{kN}$ then results in a $p$-biased distribution for $\bm{\bias{H}}$. Thus, doing the analysis for this modified problem over the uniform distribution gives us a lower bound for $\bNC^C$ over the $p$-biased distribution.

Since our input is now a uniform distribution, we can assume we have a subcube-like protocol by \cref{cor:preproc}. Note that for any $p_2 > p_1$, \bNC is easier with the $p_1$-biased distributions than $p_2$-biased distributions. If Alice and Bob are  given a sample $H$ from a $p_1$-biased distribution, then Alice and Bob can flip each $0$-bit independently to $1$ with probability $(p_2-p_1)/(1-p_1)$. The result is another random variable $H'$ which is $p_2$-biased. If they run the algorithm on $H'$ and get a solution~$c$, then that $c$ is also valid for $H$. Since for any $p \in (0,1)$ there is an integer $k$ such that $p \geq 2^{-k}$, it suffices to prove the lower bound for such distributions. We claim that \cref{thm:binc lb} also continues to hold for subcube-like protocols when we have a $p$-biased distribution for any constant $p = 2^{-k}$.

\begin{proof}[Proof of \cref{thm:binc lb} for $p$-biased distributions]
We sample $\bm{\unf{H}} \sim \Bool^{kN} \times \Bool^{kN}$, which induces a $p$-biased distribution on ${\bm{H}}$, where $p = 2^{-k}$. We preprocess the protocol over $\bm{\unf{H}}$ using \cref{cor:preproc}. Define ${\cal Q} = \bm{Q}_1, \ldots \bm{Q}_d$ the same way for the candidate subcube-like protocol $\Pi$ over this distribution. Note that \cref{lem:subcube bound} continues to hold.
For any $\bm{c}_j$, if it becomes dangerous for the first time at $\bm{v}$, then Bob has at least $0.1n$ unfixed bits in his input. Since Bob's rectangle is subcube-like, the underlying bits of $\unf{H}$ come from a $0.8$-dense distribution. The min-entropy of the marginal distribution for these bits is at least $0.08kn$.
This implies $\Pr[\bm{\bias{H}}(\bm{c}_j) = 0^n \mid \bm{\unf{H}} \in R_{\bm{v}}] \leq 2^{-0.08kn} = 2^{-\Omega(n)}$.
We can then conclude that $\Pr[\Pi(\bm{\unf{H}}) \text{ is correct}] = o(1)$ using the same calculation as before.
\end{proof}

\section{Converting to a total relation}
\label{sec:total}

So far, we have shown that $\bNC$ achieves an exponential separation between quantum SMP and classical two-way communication \emph{over a biased input distribution}. In this section, we convert $\bNC$ into a total problem and prove our main theorem. 

\ThmMain*

\subsection{Construction} 
Following the approach of Yamakawa and Zhandry~\cite[Section~6]{Yamakawa2022} towards obtaining a separation for \emph{total} problems, we consider a total problem variant of $\bNC$, to which we refer to as Total Bipartite NullCodeword (\TbNC). At a high level, we allow Alice and Bob to pick a common hash function from a $\lambda(n)$-wise independent hash function family, and then $\textup{XOR}$ it with their own input. The quantum easiness still holds for any $\lambda(n)$-wise independent random inputs. However, the freedom of choosing a common hash function also makes the problem too easy, so we further require Alice and Bob to solve $t$ instances of \bNC while using the same common hash function.

To provide a formal definition, we first introduce some notation.
Recall that we can simulate the $p$-biased distribution (\cref{defn: biased oracle}), for $p = 2^{-6}$, by splitting each bit into the conjunction of six uniformly random bits. For any function $\bias{f}\colon \mathcal{X} \rightarrow \Bool^{m}$, where $\mathcal{X}$ is any input domain, we use notation $\unf{f}\colon \mathcal{X} \rightarrow \Bool^{6m}$ for the expanded version. Though, we often give $\unf{f}\colon \mathcal{X} \rightarrow \Bool^{6m}$ first, and then define $\bias{f}\colon \mathcal{X} \rightarrow \Bool^{m}$ by taking the $\textup{AND}$ of every six consecutive bits of the output of $\unf{f}$.

Fix $C$ to be the code defined in \autoref{def: code parameters}.
We abuse notation and write $\unf{h}\colon \Sigma^n \rightarrow \{0,1\}^{6n}$ as the concatenation of $\unf{h}(\cdot, 1), \ldots, \unf{h}(\cdot, n)$.
We take $\lambda(n) = n^2$ and assume $\{\unf{h}_k\}$ is implemented by the standard low-degree polynomial construction. See, e.g., \cite[Section 3.5.5]{Vadhan12psr}. With this construction, $|\mathcal{K}| = 2^{r}$ with $r = \poly(n)$.

\begin{definition}[Total Bipartite NullCodeword, \TbNC]
    \label{defn:tbinc}
    Take $t \coloneqq nr$. Let $\unf{H}^{(1)} \ldots, \unf{H}^{(t)}$ be $t$ mappings $\Sigma^n\to\Bool^{6n}$, where each $\unf{H}^{(i)}$ is the concatenation of $n$ mappings $\unf{H}^{(i)}_1, \ldots, \unf{H}^{(i)}_n$ of type $\Sigma\to\Bool^{6}$.
    Let $\unf{H}^{(i)}_A, \unf{H}^{(i)}_B\colon \Sigma^{n/2} \rightarrow \Bool^{3n}$ be the first half and the second half of $\unf{H}^{(i)}$.

    $\TbNC$ is defined as the following communication problem: Alice and Bob are given $t$ mappings $(\unf{H}^{(1)}_A, \ldots, \unf{H}^{(t)}_A)$ and $(\unf{H}^{(1)}_B, \ldots, \unf{H}^{(t)}_B)$, respectively, and they need to find a key $k \in \mathcal{K}$ and $t$ strings $x^{(1)}, \ldots, x^{(t)}$ such that
    \[
    \forall i \in [t]:\qquad
    x^{(i)} \in C_n \quad \textrm{and} \quad \bias{H}^{(i)}(x^{(i)}) \oplus \bias{h}_k(x^{(i)}) = 0^n.\]
\end{definition}

\paragraph{Remark (totality).}
    The totality of $\TbNC$ is implied by the fact that for any input, the quantum SMP protocol can always find a solution with high probability, which will be proved later. A more direct proof can be shown as follows: Fixing any input $\unf{H}^{(1)}, \ldots, \unf{H}^{(t)}$. For each $i \in [t]$, by the $\lambda(n)$-wise independence of $\unf{h}_k$, it holds that 
    \[ \Pr_{k \sim \mathcal{K}} [\{x \in C_n : \bias{H}^{(i)}(x^{(i)}) \oplus \bias{h}_k(x^{(i)}) = 0^n \} = \emptyset] = 2^{-\Omega(n)}. \] % called Hoeffding-type bound for k-wise independence
    Since $t = \omega(n)$, there exists a key $k$ such that the set of codewords mapped to $0^n$ is non-empty.

\subsection{Worst-case quantum easiness for \TbNC}

We here present a proof sketch for the quantum easiness of \TbNC.

\begin{lemma}\label{lem: quantum easiness of TBNC}
    $\TbNC$ has a quantum SMP protocol using $\poly(n)$ qubits over the worst-case input.
\end{lemma}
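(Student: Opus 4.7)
The plan is to extend the QSMP protocol for $\bNC$ from \cref{thm: quantum easiness} to handle worst-case inputs by deferring the choice of hash key $k$ to Charlie. Recall that in the $\bNC$ protocol, each party sends query states $\ket{\phi_i}\propto\sum_{e:\bias{H}_i(e)=0}\ket{e}$, which Charlie consumes in the query-free Process stage of $\AlgYZ$. For $\TbNC$, the analogous state for instance $m$ is $\ket{\phi^{(m)}_{i,k}}\propto\sum_{e:\bias{H}^{(m)}_i(e)\oplus\bias{h}_{k,i}(e)=0}\ket{e}$, which depends on a hash key $k$ that Alice and Bob cannot jointly sample given the no-shared-randomness constraint of the QSMP model. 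The central obstacle, then, is to send $k$-independent messages from which Charlie can synthesize $\ket{\phi^{(m)}_{i,k}}$ for a $k$ of his own choosing.

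Concretely, I would have Alice and Bob each send, for every $m\in[t]$ and every coordinate $i$ in their half, $T = O(\log(tn))$ independent copies of the raw query state
\[\ket{\psi^{(m)}_i}\;:=\;\tfrac{1}{\sqrt{|\Sigma|}}\sum_{e\in\Sigma}\ket{e}\ket{\bias{H}^{(m)}_i(e)}.\]
Charlie samples $k\sample\mathcal{K}$ using local ancillas; then, for every $(m,i)$ and each copy, he applies the unitary $\ket{e}\ket{b}\mapsto\ket{e}\ket{b\oplus\bias{h}_{k,i}(e)}$ and measures the second register. An outcome of $0$ (which occurs with probability $\geq 1-p$) collapses the first register to the desired $\ket{\phi^{(m)}_{i,k}}$, after which a union bound over all $tn$ positions shows every position has at least one successful copy with probability $1-o(1)$. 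Charlie then runs the Process stage of $\AlgYZ$ on each instance $m$ and outputs $(k,x^{(1)},\ldots,x^{(t)})$. The total communication is $O(t\cdot n\cdot T\cdot\log|\Sigma|)=\poly(n)$ qubits, and all parties can be implemented as $\polylog(N)$-size quantum circuits since evaluating $\bias{h}_{k,i}$ reduces to evaluating a low-degree polynomial.

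For correctness, fix any worst-case input $\unf{H}^{(1)},\ldots,\unf{H}^{(t)}$ and any $m\in[t]$. Charlie's procedure consults $\bias{h}_k$ at only $O(n\log n)$ points during the handling of instance $m$; by the $\lambda(n)$-wise independence of $\{\unf{h}_k\}$ with $\lambda(n)=n^2\gg n\log n$, the induced distribution on those points is identical to the genuine $p$-biased distribution considered in \cref{thm: quantum easiness}. Therefore the \emph{marginal} success probability on instance $m$ over a uniform $k$ matches the success probability of $\AlgYZ$ on a truly $p$-biased input, which is $1-\negl(n)$. Union-bounding over the $t=\poly(n)$ instances yields overall success probability $1-o(1)$. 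The chief subtlety to handle carefully is that the $t$ instances share the same key $k$, introducing correlations across instances; however, the union bound only requires per-instance marginals, so these correlations do not enter the analysis. I expect the bulk of the proof-writing effort to lie precisely in this marginalization argument---verifying that per-instance $\lambda(n)$-wise independence suffices and that the postselection step (which conditions on a global event across copies) does not contaminate the per-instance analysis of \cref{thm: quantum easiness}.
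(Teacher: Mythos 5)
Your algorithm is essentially the paper's \cref{alg:smp_protocol for TbNC}: Alice and Bob send multiple copies of the raw query states $\sum_e\ket{e}\ket{\bias{H}^{(m)}_i(e)}$, Charlie samples $k$ locally, coherently XORs in $\bias{h}_{k}$ and post-selects on a zero outcome to synthesize the $\bNC$-protocol's query states, then runs the query-free processing stage per instance and outputs $(k,x^{(1)},\dots,x^{(t)})$. (The paper sends $n$ copies per position rather than $O(\log(tn))$, an immaterial difference.)

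However, there is a genuine gap in your correctness argument. You claim that Charlie ``consults $\bias{h}_k$ at only $O(n\log n)$ points'' during the handling of instance $m$, and that $\lambda(n)$-wise independence therefore makes ``the induced distribution on those points'' identical to the $p$-biased distribution. That is a classical-query argument, but Charlie's access to $h_k$ is through the unitary $\ket{e}\ket{b}\mapsto\ket{e}\ket{b\oplus\bias{h}_{k,i}(e)}$ applied to a state in superposition over all of $\Sigma$. Its action depends on the \emph{entire} truth table of $h_{k,i}$, so there is no well-defined set of $O(n\log n)$ ``consulted'' points, and $\lambda$-wise independence by itself says nothing about the output statistics of an algorithm that queries in superposition. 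What you actually need is Zhandry's theorem~\cite{Zhandry12} --- this is exactly what the paper invokes via~\cite[Lemma~6.10]{Yamakawa2022} --- which states that a quantum algorithm making $q$ oracle queries produces the identical output distribution whether its oracle is truly random or drawn from a $2q$-wise independent family. Once you count $q=O(n\log n)$ quantum queries to $h_k$ per instance (so that $\lambda(n)=n^2$ amply suffices) and invoke that theorem, your per-instance marginal argument and the union bound over the $t$ instances go through as you describe; without it, the claim that a $\lambda(n)$-wise independent hash ``looks $p$-biased'' to Charlie's quantum procedure is unjustified.
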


\begin{proof}[Proof Sketch]
    As in \cite{Yamakawa2022}, the query algorithm for the total version of the \NC ($\TYZ^C$)
    will first randomly pick a key $k \sim \mathcal{K}$, and then solve $t$ copies of $\YZ^C$, where the $i$-th $\YZ^C$ instance is given by oracle $\bias{H^{(i)}} \oplus \bias{h}_k$. Let us call this algorithm $\AlgT$.
    Similar to the proof of \autoref{thm: quantum easiness}, the algorithm $\AlgT$ can also be captured by the same two-stage framework, \textbf{Query} and \textbf{Process}. 
    Therefore, our SMP protocol for $\TbNC$ can simulate it in the same manner: Alice and Bob simulate the \textbf{Query} stage on their own inputs and send their results to Charlie; Charlie then simulates the \textbf{Process} stage and output the solution. 
    See \autoref{alg:smp_protocol for TbNC} for the details of the quantum SMP protocol for $\TbNC$. 

    Now we have to show that for the code $C$ specified in \autoref{def: code parameters}, the algorithm $\AlgT$ correctly solves $\TYZ^C$. It was shown in \cite[Lemma 6.10 using Lemma 2.5]{Yamakawa2022, Zhandry12} that since $\AlgT$ uses the algorithm $\AlgYZ$ for solving $\YZ^C$ as a sub-procedure, the correctness of $\AlgYZ$ (in the average case) would imply the correctness of $\AlgT$ in the worst case. Recall that the correctness of $\AlgYZ$ for $\YZ^C$ is already proven in \autoref{thm: quantum easiness}, we conclude \autoref{lem: quantum easiness of TBNC}.
\end{proof}

\begin{algorithm}[hp]
\caption{Quantum SMP Protocol for $\TbNC$}\label{alg:smp_protocol for TbNC}
\begin{algorithmic}[0]

\State \textbf{Notation:} 
 
    \State For each $i \in [t], j \in [n]$, define the states
    \[
        \ket{\phi^{(i)}_j} \propto \sum_{e \in \Sigma} \ket{e}\ket{H^{(i)}_j(e)}, \; \ket{\overline{\phi}^{(i)}_j} \propto \sum_{e \in \Sigma} \ket{e}\ket{H^{(i)}_j(e) \oplus h_k(e, j)}
    \]
    and define the set 
    \[T^{(i)}_j = \{e \in \Sigma: H^{(i)}_j(e) \oplus h_k(e, j) = 0 \}.\]
% \EndFor

\State Let $\Uadd, \Udecode$ be unitaries defined as follows:
\[
\ket{x}\ket{e} \xrightarrow{\Uadd} \ket{x}\ket{x+e} \xrightarrow{\Udecode} \ket{x - \decode(x+e)}\ket{x+e}.
\]

\vspace{-1.5mm} % To slightly reduce the overfull
\Statex 
\noindent \rule{\linewidth}{0.4pt}

\State \textbf{Alice:} 
\State For each $i \in [t]$:
\begin{enumerate}
    \item Prepare $n$ copies of $\ket{\phi^{(i)}_1}, \ldots, \ket{\phi^{(i)}_{n/2}}$ using $H^{(i)}_1, \ldots, H^{(i)}_{n/2}$.
    \item Send those copies of $\ket{\phi^{(i)}_1}, \ldots, \ket{\phi^{(i)}_{n/2}}$ to Charlie.
\end{enumerate}

\vspace{-1.5mm}
\Statex

\State \textbf{Bob:}
\State For each $i \in [t]$:
\begin{enumerate}
    \item Prepare $n$ copies of  $\ket{\phi^{(i)}_{n/2+1}}, \ldots, \ket{\phi^{(i)}_n}$ using $H^{(i)}_{n/2+1}, \ldots, H^{(i)}_n$.
    \item Send those copies of $\ket{\phi^{(i)}_{n/2+1}}, \ldots, \ket{\phi^{(i)}_n}$ to Charlie.
\end{enumerate}

\vspace{-1.5mm}
\Statex

\State \textbf{Charlie:}
    \State Randomly draw a key $k \sim \mathcal{K}$ and prepare $t$ copies of the state
        $\ket{\psi} \propto \sum_{x \in C} \ket{x}.$
    \State \For{$i \in [t]$}
    \begin{enumerate}
        \item For each $j \in [n]$, generate the state $\ket{\overline{\phi}^{(i)}_j}$ using $\ket{\phi^{(i)}_j}$ and then measure the second register of it. If the measurement returns $0$, then Charlie successfully generates the state
        \[ \ket{\phi^{k}_j} \propto \sum_{e \in T^{(i)}_j} \ket{e}; \]
        otherwise, retry with another copy of $\ket{\phi^{(i)}_j}$.
        
        \item Apply $\QFT$ to $\ket{\phi^k} \coloneqq \ket{\phi^k_1} \otimes \ldots \otimes \ket{\phi^k_n}$ and $\ket{\psi}$. Now, Charlie holds the state:
        \[
        \ket{\eta} \coloneqq \QFT\ket{\psi} \otimes \QFT\ket{\phi^k}.
        \]
        \item Apply $(I \otimes \QFT^{-1}) \Udecode \Uadd$ to $\ket{\eta}$.
        \item Measure the second register and store the measurement result as $x^{(i)}$.
    \end{enumerate}
    \EndFor
    \State Output the solution $(k, x^{(1)}, \ldots, x^{(t)})$.
\end{algorithmic}
\end{algorithm}

\subsection{Average-case classical lower bound for \TbNC}

The average-case classical lower bound follows via a union bound after applying a direct product theorem. We shall use the following direct product theorem for randomized communication complexity by Braverman, Rao, Weinstein, and Yehudayoff~\cite{braverman2013}.

\begin{theorem}[\cite{braverman2013}]\label{thm: direct product}
    Let $f$ be any communication problem (function or relation), and $\mu$ be a product distribution of $f$'s input. Denote the maximum success probability of a classical two-way communication protocol for solving $f$ by $\suc(f, \mu, C)$.
    Let $f^n$ denote the problem of solving $n$ copies of $f$ and $\mu^n$ denote the product distribution on $n$ inputs.
    
    Then, if $T \log^2 T = o(nC)$ and $\suc(f, \mu, C) \leq 2/3$, it holds that $\suc(f^n, \mu^n, T) \leq 2^{-\Omega(n)}$.
\end{theorem}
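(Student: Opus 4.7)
The plan is to prove the theorem via the information complexity framework, combining a direct-sum argument with a protocol compression theorem. Define the internal information cost of a protocol $\pi$ with respect to input distribution $\mu$ as $IC_\mu(\pi) \coloneqq I(X; \Pi \mid Y) + I(Y; \Pi \mid X)$, where $(X,Y) \sim \mu$ and $\Pi$ is the random transcript. Two classical facts drive the argument: (i) $IC_\mu(\pi) \leq |\pi|$; (ii) Braverman's compression (e.g.\ Barak--Braverman--Chen--Rao): any protocol with information cost $I$ and communication $T$ can be simulated, with small additive error, by a protocol of communication $\tilde{O}(\sqrt{IT})$. Assuming toward contradiction that some $T$-bit protocol $\pi$ solves $f^n$ over $\mu^n$ with success probability $\delta = 2^{-o(n)}$, the goal is to extract a $C$-bit protocol solving one copy of $f$ over $\mu$ with success probability exceeding $2/3$, contradicting the hypothesis $\suc(f,\mu,C) \leq 2/3$.

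First, use direct sum on information. By the chain rule for mutual information, exploiting crucially that $\mu^n$ is a product distribution, one obtains $\sum_{i=1}^n IC_\mu(\pi_i) \leq IC_{\mu^n}(\pi) \leq T$, where $\pi_i$ is the protocol obtained by embedding the real single-copy input at coordinate $i$ and having each player locally sample the other coordinates from their marginals of $\mu$ (doable without interaction because $\mu$ is product). Hence for a uniformly random coordinate $I \in [n]$, $\E_I[IC_\mu(\pi_I)] \leq T/n$.

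Second, boost the success probability on a single embedded copy. A plain embedding yields per-copy success only $\delta$, too small for the contradiction. By the chain rule, $\prod_i p_i = \delta$ where $p_i$ is the conditional probability of success on copy $i$ given success on the preceding ones, so on average a typical coordinate has conditional success at least $\delta^{1/n}$. Realize this conditioning as an actual protocol by resampling the auxiliary coordinates with shared public randomness and rejecting runs that fail on them; one argues that this costs at most a constant factor in information under $\mu$ while pushing the success probability on the embedded coordinate to $\Omega(\delta^{1/n})$.

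Finally, apply the compression theorem to the single-copy protocol from the previous step: it has amortized information cost $O(T/n)$ and communication $O(T)$, compressing to a protocol of communication $\tilde{O}(\sqrt{T \cdot T/n}) = \tilde{O}(T/\sqrt{n})$. Under the hypothesis $T \log^2 T = o(nC)$, this compressed cost is $o(C)$, so the protocol fits in the $C$-bit budget. Taking $\delta \geq 2^{-\varepsilon n}$ for sufficiently small $\varepsilon$ yields success probability above $2/3$, the desired contradiction. The main obstacle is the second step: in the interactive two-way model, conditioning on success of the auxiliary copies correlates the players' private randomness and inputs in subtle ways, and one must argue carefully that the resulting conditioned protocol still has low information cost under $\mu$ (and not, e.g., only under some posterior distribution one cannot sample from). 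This is where the BRWY proof does most of the technical work, via a cut-and-paste/correlation-breaking argument that turns the information-theoretic conditioning into a genuine low-information protocol suitable for compression.
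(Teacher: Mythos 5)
First, note that the paper does not prove this theorem: it is imported wholesale from Braverman--Rao--Weinstein--Yehudayoff, and the only argument the authors supply is a remark explaining why the BRWY proof extends from functions to relations (namely, that for product distributions one extracts a protocol with low \emph{external} information cost and then compresses it \`a la Barak--Braverman--Chen--Rao). Your sketch attempts to reconstruct the BRWY argument itself, and its overall architecture (information-theoretic direct sum over the coordinates, conditioning on success of the auxiliary copies to boost the per-copy success from $\delta$ to $\delta^{1/n}$, then compression to fit the single-copy budget) is the right one.

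However, there is a concrete quantitative gap in your final step. You invoke the \emph{internal}-information compression theorem, which simulates a protocol with information cost $I$ and communication $T$ in $\widetilde{O}(\sqrt{IT})$ bits, and with $I = O(T/n)$ you obtain a single-copy protocol of cost $\widetilde{O}(T/\sqrt{n})$. The hypothesis $T\log^2 T = o(nC)$ does \emph{not} imply $T/\sqrt{n} = o(C)$: take, say, $T = nC/\log^3 T$, which satisfies the hypothesis yet gives $T/\sqrt{n} = \sqrt{n}\,C/\log^3 T \gg C$. So the compressed protocol does not fit in the $C$-bit budget and no contradiction is reached. The fix --- and the reason the theorem is stated for \emph{product} distributions with the hypothesis in exactly this form --- is to use the \emph{external} information cost, which for product $\mu$ coincides with the internal one, together with the external compression theorem of Barak--Braverman--Chen--Rao: a protocol with external information cost $I$ and communication $T$ can be simulated in $O(I\cdot\polylog(T))$ bits. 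With $I = O(T/n)$ this yields a single-copy cost of $O((T/n)\log^2 T)$, which is $o(C)$ precisely under the stated hypothesis. Your remaining concern about the conditioning step is well placed (that is indeed where BRWY do the heavy lifting), but as written the proposal's compression step does not close the argument for the claimed parameters.
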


\paragraph{Remark (functions vs relations).} \autoref{thm: direct product} was originally proven for boolean functions $f$ in \cite{braverman2013}. However, their technique does not depend on whether there is a unique solution or many valid solutions. Specifically for product distributions, the technique is to
\begin{enumerate}[label=(\roman*)]
\item extract a protocol with low \emph{external information} cost, and then
\label{it: direct product 1}
\item compress the protocol using ideas from \cite{BBCR13}.
\label{it: direct product 2}
\end{enumerate}

Step \ref{it: direct product 1} is easily seen to be true regardless of the number of solutions. Step \ref{it: direct product 2} works for relations because the key idea is to simulate the transcript of the protocol, which is oblivious to whether we are solving a function or relation.
%For general distributions, external information is replaced with internal information in Step \ref{it: direct product 1}, which gives worse bounds because compressing protocols to internal information is harder. However, our $p$-biased distribution (\cref{defn: biased oracle}) is a product distribution, so this does not affect our results.

Now we are ready to establish the classical lower bound of $\TbNC$ in the average case.

\begin{lemma}
    \TbNC requires $2^{n^{\Omega(1)}}$ communication for any classical two-way randomized communication protocol when the input is drawn from the uniform distribution.
\end{lemma}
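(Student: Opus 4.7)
The plan is to combine a direct product theorem with a union bound over the hash-key space, following the overview in \cref{sec:overview}. Fix any key $k \in \mathcal{K}$ and consider the $k$-shifted variant of $\bNC$: given uniform $\unf{H}$ (inducing a $p$-biased $\bias{H}$ for $p = 2^{-6}$), find $x \in C_n$ such that $\bias{H}(x) \oplus \bias{h}_k(x) = 0^n$. The first step is to establish that the $k$-shifted problem admits the same $2^{n^{\Omega(1)}}$ classical lower bound as $\bNC$ itself. For this I would revisit the proof of \cref{thm:binc lb}: list-recoverability only depends on which bits of $\bias{H}$ are pinned down by the subcube-like rectangle (not on the target), so the bound on dangerous codewords still applies; and in the dangerous-codeword analysis the required estimate $\Pr[\bias{H}(c) = \bias{h}_k(c) \mid R] \leq 2^{-\Omega(n)}$ follows by the same entropy computation as the $0^n$ case, since $\bias{h}_k(c)$ is just a fixed $n$-bit string and the $\geq 0.1n$ unfixed coordinates of $c$ on one party's side yield the needed bound regardless of which target string one aims at.

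With single-instance hardness in hand, I would apply \cref{thm: direct product} to $t$ independent copies of the $k$-shifted problem. Because $\unf{H}^{(1)}, \ldots, \unf{H}^{(t)}$ are i.i.d.\ uniform, the joint input distribution is a product distribution, so \cref{thm: direct product} applies: for any protocol of cost $T$ with $T \log^2 T = o(t \cdot 2^{n^{\Omega(1)}})$, the success probability on all $t$ copies of the $k$-shifted problem is at most $2^{-\Omega(t)}$.

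Finally, I would combine the pieces via a union bound over the key. Given any protocol $\Pi$ for $\TbNC$ producing output $(\bm{k}, \bm{x}^{(1)}, \ldots, \bm{x}^{(t)})$, one has
\[
\Pr[\Pi \text{ correct}] \;\leq\; \sum_{k \in \mathcal{K}} \Pr\!\big[\forall i\colon \bias{\bm{H}}^{(i)}(\bm{x}^{(i)}) = \bias{h}_k(\bm{x}^{(i)})\big].
\]
Each summand is the success probability of the protocol obtained from $\Pi$ by discarding the key output and retaining only the $\bm{x}^{(i)}$s, viewed as a candidate solver for $t$ copies of the $k$-shifted $\bNC$. By the previous step each summand is at most $2^{-\Omega(t)}$. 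Since $|\mathcal{K}| = 2^r$ and $t = nr$ with $r = \poly(n)$, the total is at most $2^r \cdot 2^{-\Omega(nr)} = o(1)$, giving the desired bound.

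The hardest step will be Step~1: verifying that the single-instance lower bound for $\bNC$ transfers, uniformly in $k$, to every $k$-shifted variant. While the XOR shift by $\bias{h}_k$ is structurally trivial, one must carefully re-examine the entropy calculation in \cref{sec:any-p} and confirm that the $2^{-\Omega(n)}$ bound on $\Pr[\bias{H}(c) = a \mid R]$ holds for \emph{every} fixed target string $a$ (in particular for $a = \bias{h}_k(c)$), not only for $a = 0^n$; the rest of the argument is then a routine invocation of \cref{thm: direct product} combined with a union bound.
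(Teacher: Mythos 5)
Your proposal follows essentially the same route as the paper: fix a key $k$, invoke the direct product theorem of \cite{braverman2013} on the $t$ i.i.d.\ copies, and union bound over $\mathcal{K}$. The only difference is a cosmetic reframing of the XOR shift---you treat it as changing the \emph{target} from $0^n$ to $\bias{h}_k(x)$ and propose to re-check the dangerous-codeword entropy estimate for arbitrary targets, whereas the paper absorbs the XOR into the \emph{input distribution} of $\bias{H}^{(i)}$ and feeds the shifted instance directly to \cref{thm:binc lb}; the two framings are equivalent, and your explicit attention to why the shift preserves the lower bound is, if anything, slightly more careful than the paper's one-line justification.
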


\begin{proof}
    Suppose that the inputs $(\unf{H}^{(1)}_A, \ldots, \unf{H}^{(t)}_A)$ and $(\unf{H}^{(1)}_B, \ldots, \unf{H}^{(t)}_B)$ are drawn from the uniform distribution. By definition, for each $i \in [t]$, $\bias{H}^{(i)}_A$ and $\bias{H}^{(i)}_B$ are drawn from the $p$-biased distribution.
    
    Now consider a fixed hash function $\unf{h}_k$ where $k \in \mathcal{K}$. Note that for each $i \in [t]$, $\bias{H}^{(i)}_A \oplus h_k$ and $\bias{H}^{(i)}_B \oplus h_k$ are also drawn from the $p$-biased distribution, and they can be viewed as $t$ copies of input to \bNC. By \autoref{thm:binc lb}, we know that there exists a number $C = 2^{n^{\Omega(1)}}$ such that $\suc(\bNC, \mu, C) \leq 2/3$, where $\mu$ is the $p$-biased distribution. Then, we apply \autoref{thm: direct product}, it holds that $\suc(\bNC^t, \mu^t, C) \leq 2^{-\Omega(t)}$.
    
    Finally, by a union bound over all $k \in \mathcal{K}$, the success probability of a communication protocol for $\TbNC$ using $C$ bits is at most \[|\mathcal{K}| \cdot 2^{-\Omega(t)} = 2^r \cdot 2^{-\Omega(nr)} = 2^{-\Omega(n)},\]
    since $|\mathcal{K}| = 2^r$ and $t = nr$.
\end{proof}

\section{Future directions}
We highlight three directions for future investigation.

\paragraph{Total boolean functions.}
The most immediate problems left open by our work are showing any exponential separations between classical and quantum communication for \emph{total boolean functions}, or proving that they are always polynomially related. We do not expect any modification of the Yamakawa--Zhandry problem to work since it is a \emph{search} problem with inherently exponential number of solutions.

\paragraph{Multiparty communication.}

Beyond the 2-party model, it would be interesting to show that $k$-party quantum NOF (Number on Forehead) communication can be exponentially separated from $k$-party randomized NOF communication. We expect that a natural 3-party version of the problem we study achieves this separation, where each player gets $2/3$ of the input oracles. However, in the NOF model we lack techniques to lower bound randomized communication which don't also lower bound quantum communication (see \cite{LSS09}). Note that lower bounds in the 2-party model already imply lower bounds in the $k$-party NIH (Number in Hand) model, and it is easy to verify that our quantum upper bound does indeed work in the NIH model when the inputs to \NC are split into $k$ equal parts among the $k$ parties. Hence, our result gives a separation in the $k$-party NIH model.

\paragraph{NISQ separation.} The era of fault-tolerant quantum computers has not arrived yet, but we do have so-called NISQ (noisy intermediate-scale quantum) devices. Chen et al.~\cite{Chen2023} defined NISQ as a computational class and studied its complexity. Can we find a quantum--classical communication separation where the quantum upper bound is in an analogue of their class?

\appendix
\section{Correctness of the quantum SMP protocol for \bNC}\label{sec: full proof of quantum easiness}

% \david{Require polishing.}

In this section, we prove the correctness of our quantum SMP protocol described in \cref{alg:smp_protocol}.

\begin{lemma}\label{lem: correctness of SMP protocol}
    When each $H_i$ is drawn from a $p$-biased distribution distribution, Charlie outputs a valid solution of $\bNC^C$ with high probability.
\end{lemma}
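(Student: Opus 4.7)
The plan is to trace the quantum state through each step of \cref{alg:smp_protocol} and argue that measurement of the second register yields a valid solution with probability $1 - 2^{-\Omega(n)}$. The analysis largely mirrors that of~\cite{Yamakawa2022}; the only part requiring new justification is that the biased input distribution gives rise to an ``effective error'' distribution compatible with the weaker dual-decodability of our code (\cref{lem:code}).

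First I would analyze the two Fourier transforms. Since $C$ is a linear code, standard facts yield $\QFT\ket{\psi} = \frac{1}{\sqrt{|C^{\bot}|}} \sum_{y \in C^{\bot}} \ket{y}$. For the oracle state, $\ket{\phi}$ factorizes, so $\QFT\ket{\phi} = \bigotimes_i \QFT\ket{\phi_i}$. A direct calculation shows that the $\ket{0}$-amplitude of $\QFT\ket{\phi_i}$ has squared magnitude $|T_i|/|\Sigma|$, which concentrates at $1-p$, while the remaining squared weight $p$ is spread across $\Sigma \setminus \{0\}$ with roughly uniform magnitudes and pseudorandom phases. Writing $\QFT\ket{\phi} = \sum_e \hat{f}(e)\ket{e}$, the induced distribution $|\hat{f}(e)|^2$ over $e \in \Sigma^n$ is, by a Chernoff-style bound over the random $H$, within statistical distance $2^{-\Omega(n)}$ of the distribution $\dist_n$ from \cref{lem:code}.

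Next I would push this through $\Uadd$ and $\Udecode$. After $\Uadd$, the state is
\[
\frac{1}{\sqrt{|C^{\bot}|}} \sum_{y \in C^{\bot}} \sum_{e} \hat{f}(e)\,\ket{y}\ket{y+e}.
\]
The key step uses \cref{lem: unique decodability}: when $e \in \gooderrors$, the unique-decoding algorithm satisfies $\decode(y+e) = y$, so $\Udecode$ maps $\ket{y}\ket{y+e}$ to $\ket{0}\ket{y+e}$. Since the total amplitude on bad errors has squared magnitude $2^{-\Omega(n)}$ (by \eqref{equ: good errors} and the matching between $|\hat{f}|^2$ and $\dist_n$), the post-$\Udecode$ state is within trace distance $2^{-\Omega(n)}$ of
\[
\ket{0} \otimes \frac{1}{\sqrt{|C^{\bot}|}} \sum_{y \in C^{\bot}} \sum_{e} \hat{f}(e)\, \ket{y+e}.
\]

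Finally I would apply $I \otimes \QFT^{-1}$ and compute the outcome. Using the character-sum identity $\sum_{y \in C^{\bot}} \omega_p^{-\Tr(y \cdot x)} = |C^{\bot}|\cdot \mathbb{1}[x \in C]$, the second register collapses to $\sqrt{|C^{\bot}|}\sum_{x \in C} f(x)\ket{x}$, which is supported precisely on codewords $x$ with $H(x) = 0^n$. Normalizing and using that a random linear code has $|C^{\bot}|\cdot \sum_{x \in C,\, H(x)=0^n} \prod_i |T_i|^{-1}$ close to $1$ (in the biased regime, a union bound and concentration show this with probability $1-2^{-\Omega(n)}$ over $H$), measuring returns some valid $x^\star$ with the claimed probability. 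The main obstacle is the coupling step: showing that the random oracle $H$, drawn $p$-biased, induces amplitudes $\hat{f}$ whose squared distribution tracks $\dist_n$ closely enough for \cref{lem:code} to apply coherently; this is exactly where the biased-input parameter $p \leq 0.04$ is used, replacing the $p = 1/2$ regime of~\cite{Yamakawa2022}.
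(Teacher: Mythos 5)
Your proposal has the right overall structure — Fourier-transform both registers, add, decode good errors to $\ket{0}$, invert $\QFT$, and attribute the weakening of dual-decodability to the biased input — and in that sense it mirrors the paper, which instantiates Yamakawa--Zhandry's Lemma~5.1 (stated as \cref{lem: main technical} in the appendix). But there are two genuine gaps.

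First, the trace-distance bound you assert does not follow merely from the weight on bad $(y,e)$ pairs being small. After $\Udecode$, the actual state minus your ideal state $\ket{0}\otimes|C^\perp|^{-1/2}\sum_{y,e}\hat f(e)\ket{y+e}$ is
\[
\sum_{(y,e)\in\bad}\hat{V}(y)\hat{W}(e)\bigl(\ket{y-\decode(y+e)}-\ket{0}\bigr)\ket{y+e}.
\]
The first half of this sum has pairwise-orthogonal terms, so its squared norm is $\sum_{\bad}|\hat{V}\hat{W}|^2$, which you bound. The second half, $\sum_{\bad}\hat{V}(y)\hat{W}(e)\ket{0}\ket{y+e}$, is \emph{not} an orthogonal sum: distinct bad pairs with the same $y+e=z$ interfere. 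Its squared norm is $\sum_z\bigl|\sum_{\bad:\,y+e=z}\hat{V}(y)\hat{W}(e)\bigr|^2$, which is exactly the second hypothesis \eqref{eq:hatV_hatW_two} of \cref{lem: main technical} (and the modified \cref{claim 6.4}). Your argument omits this entirely. The paper explicitly verifies both \eqref{eq:hatV_hatW_exp} and \eqref{eq:hatV_hatW_two_exp}; you have only addressed the analogue of the former.

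Second, your coupling step is phrased as a claim about statistical distance of the \emph{per-oracle} distribution $|\hat f(e)|^2$ to $\dist_n$, justified by ``a Chernoff-style bound.'' What the paper actually proves (\cref{claim: 6.6 first,claim: 6.6 second}) is the much weaker but correct statement that $\E_{H}\bigl[|\hat{W}(e)|^2\bigr]=\dist_n(e)$ for every $e$, from which the expected bad-error mass is $\Pr_{e\sim\dist_n}[e\in\baderrors]=\negl(n)$, and then an averaging argument transfers this to a typical $H$. Your stronger per-$H$ statistical-distance claim is not justified and, as far as I can tell, not what concentration gives you here — for a fixed $H$ the spectrum $|\hat{W}_i(e)|^2$ for $e\neq 0$ can be quite unevenly spread. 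You should reformulate this step as a bound on expectations plus Markov, as the paper does.
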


Recall that for each $i \in [n]$,
    \[\ket{\phi_i} \propto \sum_{e_i \in \Sigma: \; H_i(e_i) = 0} \ket{e_i}, \;\ket{\phi} \coloneqq \ket{\phi_1} \otimes \ldots \otimes \ket{\phi_{n}},\]
    and \[
    \ket{\psi} \propto \sum_{x \in C} \ket{x}. \]

After receiving the state $\ket{\phi}$, Charlie is essentially solving the $\YZ^C$ problem without any additional queries to $H_1, \ldots, H_n$. Recall that in the original algorithm $\AlgYZ$ for $\YZ^C$, the first part is preparing the state $\ket{\phi}$ using queries to $H_1, \ldots, H_n$, and the second part is processing $\ket{\phi}$ without queries. Charlie's algorithm is the same as the second part of $\AlgYZ$. Therefore, it suffices to check that the original analysis of the $\AlgYZ$ in \cite{Yamakawa2022} still holds for the new code $C$ and biased input distribution. 

\paragraph{Analysis of the Yamakawa--Zhandry Algorithm.}

We follow the original analysis in \cite{Yamakawa2022} and highlight where it has to be changed. We will use $\negl(n)$ to denote the class of functions which are asymptotically smaller than an inverse-polynomial in $n$, which is the standard notation in cryptographic literature.
The main technical lemma in \cite{Yamakawa2022} is the following.

\begin{lemma}[Lemma 5.1 in \cite{Yamakawa2022}]\label{lem: main technical}

    %%% COPIED FROM YZ PAPER

        Let $\ket{\psi}$ and $\ket{\phi}$ be quantum states on a quantum system over an alphabet $\Sigma=\FF_q^m$ written as 
        \begin{align*}
            &\ket{\psi}=\sum_{\vx \in \Sigma^n}V(\vx)\ket{\vx}, \ket{\phi}=\sum_{\ve \in \Sigma^n}W(\ve)\ket{\ve}.
        \end{align*}
        Let $F\colon\Sigma^n \rightarrow \Sigma^n$ be a function. 
        Let $\good\subseteq \Sigma^n \times \Sigma^n$ be a subset such that for any $(\vx,\ve)\in \good$, we have $F(\vx+\ve)=\vx$. 
        
        Let $\bad$ be the complement of $\good$, i.e., $\bad\defeq(\Sigma^n \times \Sigma^n)\setminus \good$.
        Suppose that we have 
        \begin{align}
        &\sum_{(\vx,\ve)\in \bad}|\hat{V}(\vx)\hat{W}(\ve)|^2\leq \epsilon \label{eq:hatV_hatW}\\
        &\sum_{\vz\in \Sigma^n}\left|\sum_{(\vx,\ve)\in \bad: \vx+\ve=\vz}\hat{V}(\vx)\hat{W}(\ve)\right|^2\leq \delta.
        \label{eq:hatV_hatW_two}
        \end{align}
         
        Let $U_{\mathsf{add}}$ and $U_{F}$ be unitaries defined as follows:
        \begin{align*}
        &\ket{\vx}\ket{\ve}
        \xrightarrow{U_{\mathsf{add}}}
        \ket{\vx}\ket{\vx+\ve}\xrightarrow{U_{F}}
        \ket{\vx-F(\vx+\ve)}\ket{\vx+\ve}.
        %,\\
        %&\ket{\vx}\ket{\ve}
        %\xrightarrow{U_{F}}
        %\ket{\vx-F(\ve)}\ket{\ve}.
        \end{align*}
        Then we have 
        \begin{align*}
           (I\otimes \QFT^{-1})U_{F}U_{\mathsf{add}}(\QFT\otimes \QFT)\ket{\psi}\ket{\phi}) \approx_{\sqrt{\epsilon}+\sqrt{\delta}} |\Sigma|^{n/2}\sum_{\vz \in \Sigma^n}(V\cdot W)(\vz)\ket{0}\ket{\vz}.
        \end{align*}

\end{lemma}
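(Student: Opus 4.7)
The plan is to compute the output state after each unitary in sequence, peel off the \good contribution so that its first register collapses to $\ket{0}$, and identify the ``noiseless'' target as an additive convolution that the closing $\QFT^{-1}$ diagonalizes. The two hypotheses on the bad Fourier mass will then bound the two pieces of the error separately.

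First, $(\QFT\otimes\QFT)\ket{\psi}\ket{\phi} = \sum_{\vx,\ve}\hat V(\vx)\hat W(\ve)\ket{\vx}\ket{\ve}$, and applying $U_{\mathsf{add}}$ then $U_F$ yields the intermediate state
\[
\Phi \;\defeq\; \sum_{\vx,\ve}\hat V(\vx)\hat W(\ve)\,\ket{\vx - F(\vx+\ve)}\ket{\vx+\ve}.
\]
The defining property of \good forces $\vx - F(\vx+\ve) = 0$ on that portion of the sum, so splitting $\Phi = \Phi_{\good} + \Phi_{\bad}$ gives $\Phi_{\good} = \ket{0}\otimes\sum_{(\vx,\ve)\in\good}\hat V(\vx)\hat W(\ve)\ket{\vx+\ve}$. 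I compare this to the hypothetical ``noiseless'' state $\Phi^\star \defeq \ket{0}\otimes\sum_{\vx,\ve}\hat V(\vx)\hat W(\ve)\ket{\vx+\ve}$, so that
\[
\Phi - \Phi^\star \;=\; \Phi_{\bad} \;-\; \ket{0}\otimes\sum_{(\vx,\ve)\in\bad}\hat V(\vx)\hat W(\ve)\ket{\vx+\ve}.
\]
The first term has orthogonal basis vectors because the map $(\vx,\ve)\mapsto(\vx-F(\vx+\ve),\vx+\ve)$ is injective (fixing $\vz=\vx+\ve$ leaves the first coordinate as a shift of $\vx$), hence its squared norm equals $\sum_{\bad}|\hat V(\vx)\hat W(\ve)|^2 \le \epsilon$ by the first hypothesis. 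The second term, regrouped by $\vz=\vx+\ve$, has squared norm $\sum_{\vz}\bigl|\sum_{(\vx,\ve)\in\bad,\ \vx+\ve=\vz}\hat V(\vx)\hat W(\ve)\bigr|^2 \le \delta$ by the second hypothesis. The triangle inequality then gives $\|\Phi - \Phi^\star\| \le \sqrt\epsilon + \sqrt\delta$.

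It remains to identify $(I\otimes\QFT^{-1})\Phi^\star$ with the claimed right-hand side. Regrouping the second register of $\Phi^\star$ by $\vz = \vx+\ve$ rewrites it as $\ket{0}\otimes\sum_{\vz}(\hat V\ast\hat W)(\vz)\ket{\vz}$, where $\ast$ denotes additive convolution on $\Sigma^n = \FF_q^{mn}$. A one-line character-orthogonality calculation yields the convolution--product duality $\hat V\ast\hat W = |\Sigma|^{n/2}\,\widehat{V\cdot W}$, whence $(I\otimes\QFT^{-1})\Phi^\star = |\Sigma|^{n/2}\sum_{\vz}(V\cdot W)(\vz)\ket{0}\ket{\vz}$, matching the claimed right-hand side. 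Since $\QFT^{-1}$ is unitary, the $\sqrt\epsilon+\sqrt\delta$ bound transfers unchanged to the final state.

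The main obstacle is essentially bookkeeping: keeping the $|\Sigma|^{n/2}$ normalization consistent throughout, and verifying the injectivity that makes the basis vectors in $\Phi_{\bad}$ orthogonal so that hypothesis~(5.3) applies cleanly. The underlying ideas---state tracking plus the convolution theorem---are standard; the nontrivial content of the lemma is packaged entirely into the two hypotheses, which isolate the two distinct failure modes (total mass on $\bad$ versus coherent cancellation within a fibre of $\vx+\ve$) that would be visible before and after the final $\QFT^{-1}$, respectively.
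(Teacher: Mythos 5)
The paper does not prove this lemma; it is cited verbatim as Lemma~5.1 of Yamakawa--Zhandry~\cite{Yamakawa2022}. Your reconstruction is correct and follows the same approach as the original: track the state through $\Uadd$ and $U_F$, observe that the $\good$ part collapses to $\ket{0}$ in the first register, compare with the idealized state $\Phi^\star$, bound the two residual pieces by $\sqrt{\epsilon}$ (via injectivity of the unitary relabeling on $\bad$) and $\sqrt{\delta}$ (via regrouping by the fibre $\vz=\vx+\ve$) respectively, and close with the convolution--product duality of the $\QFT$.
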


    Define set $T_i \coloneqq \{e \in \Sigma: H_i(e) = 0 \}$, $T \coloneqq T_1 \times T_2 \times \ldots \times T_n \in \Sigma^n$, one can represent
    \begin{align*}
    &\ket{\psi}=\sum_{\vx \in \Sigma^n}V(\vx)\ket{\vx}, \;
    \ket{\phi}=\sum_{\ve \in \Sigma^n}W(\ve)\ket{\ve}
    \end{align*}
    with
        \begin{align*}
        &V(\vx)=
        \begin{cases}
        \frac{1}{\sqrt{|C|}}& \vx\in C\\
        0& \text{otherwise}
        \end{cases}\\
        &W(\ve)=
        \begin{cases}
        \frac{1}{\sqrt{|T|}}& \ve\in T\\
        0& \text{otherwise}
        \end{cases}
    \end{align*}
    By definition, a string $z$ is a valid solution if and only if $V(z) \cdot W(z) > 0$. Therefore, by taking $F(\cdot) \coloneqq \decode(\cdot)$ and assuming  inequalities~\eqref{eq:hatV_hatW} and \eqref{eq:hatV_hatW_two} are satisfied, \autoref{lem: main technical} implies that Charlie will get a valid solution with high probability when he measures the second register of the resulted quantum state of the step (3).

    It remains to prove inequalities~\eqref{eq:hatV_hatW} and \eqref{eq:hatV_hatW_two} hold with high probability.

    \begin{claim}[Modified from Claim 6.4 in \cite{Yamakawa2022}]\label{claim 6.4}
        With high probability over $H = (H_1, \ldots, H_n)$ drawn from a $p$-biased distribution, there is a subset $\good\subseteq \Sigma^n \times \Sigma^n$ such that $\decode_{C^\perp}(\vx+\ve)=\vx$ for any $(\vx,\ve)\in \good$ and we have
        \begin{align*}
        &\sum_{(\vx,\ve)\in \bad}|\hat{V}(\vx)\hat{W}(\ve)|^2\leq \negl(n),\\
        &\sum_{\vz\in \Sigma^n}\left|\sum_{(\vx,\ve)\in \bad: \vx+\ve=\vz}\hat{V}(\vx)\hat{W}(\ve)\right|^2\leq\negl(n).
        \end{align*}
        where $\bad=(\Sigma^n \times \Sigma^n)\setminus \good$.
    \end{claim}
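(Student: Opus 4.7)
The plan is to lift the argument of Claim 6.4 in \cite{Yamakawa2022} to our setting, where the only new ingredients are (i) the reduced degree $k = 0.1\CN$, which forces the dual decoder to rely on errors of Hamming weight at most $(p+\varepsilon)n$, and (ii) the $p$-biased oracle distribution, under which the effective error distribution coming from $|\hat{W}|^2$ matches the distribution $\dist$ of \cref{lem:code} rather than the uniform one. Concretely, I set $\good \coloneqq C^{\perp}_n \times \gooderrors$, with $\gooderrors = \{\ve \in \Sigma^n : \hw(\ve) \leq (p+\varepsilon)n\}$ as in \autoref{lem: good error}; \autoref{lem: unique decodability} then immediately furnishes $\decode(\vx+\ve) = \vx$ for all $(\vx,\ve) \in \good$, which takes care of the decoding condition required by \autoref{lem: main technical}.

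For the first inequality, observe that $\hat{V}(\vx) = |C^\perp|^{-1/2}\,\mathbb{1}[\vx \in C^\perp]$ (the QFT of the uniform superposition over the linear code $C$), so the sum collapses to $\sum_{\ve \notin \gooderrors}|\hat{W}(\ve)|^2$. Because $\hat{W}(\ve) = \prod_i \hat{W}_i(e_i)$, this is a product distribution on $\ve$; a direct computation gives $\E_{H_i}[|\hat{W}_i(0)|^2] = \E[|T_i|/|\Sigma|] = 1-p$ and $\sum_{z_i \neq 0}|\hat{W}_i(z_i)|^2 = 1 - |\hat{W}_i(0)|^2$, so in expectation over $H$ the marginal closely matches $\dist$. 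A Chernoff bound on $\hw(\ve)$ for $\ve \sim \dist_n$ therefore yields $\E_H[\sum_{\ve \notin \gooderrors}|\hat{W}(\ve)|^2] \leq 2^{-\Omega(n)}$, and Markov's inequality upgrades this to a $\negl(n)$ bound with probability $1-\negl(n)$ over $H$.

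For the second inequality, set $\tilde W(\ve) \coloneqq \hat{W}(\ve)\,\mathbb{1}[\ve \notin \gooderrors]$ so that the inner sum over $(\vx,\ve) \in \bad$ with $\vx+\ve = \vz$ becomes the convolution $(\hat{V} \ast \tilde W)(\vz)$. Expanding the outer square and substituting $\vy \coloneqq \vx_1 - \vx_2 \in C^\perp$ (valid since $C^\perp$ is linear) collapses the expression to
\[
\sum_{\vy \in C^\perp} \sum_{\ve} \tilde W(\ve)\,\overline{\tilde W(\ve+\vy)}.
\]
The $\vy = \vzero$ contribution is $\|\tilde W\|_2^2$, already controlled by the first inequality. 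For $\vy \neq \vzero$, dropping the good-error indicator up to a negligible correction lets the sum factor across coordinates; a direct Fourier computation gives $\sum_{z_i}\hat{W}_i(z_i)\overline{\hat{W}_i(z_i+y_i)} = \frac{1}{|T_i|}\sum_{e \in T_i}\omega_p^{-\Tr(e\,y_i)}$, which in expectation over $H_i$ has magnitude $O(|\Sigma|^{-1/2})$ whenever $y_i \neq 0$ (after a short concentration on $|T_i| \approx (1-p)|\Sigma|$). Because $C^\perp$ is a folded GRS code of degree $d = 0.9\CN - 2$, its folded minimum distance is at least $0.1n$, so every non-zero $\vy$ supplies this cancellation on $\geq 0.1n$ coordinates, contributing at most $|\Sigma|^{-\Omega(n)}$, which overwhelms the $|C^\perp|$ terms in the sum. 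Markov then finishes the proof.

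The hardest step will be the second inequality: while the first is essentially a Chernoff bound for $\hw(\ve)$ under a distribution close to $\dist_n$, the second requires controlling $|C^\perp|$ many cross-terms by per-coordinate Fourier cancellation. The $p$-biased distribution introduces a random denominator $|T_i|$ in these per-coordinate factors, so mild concentration on $|T_i|$ is needed before the cancellation can be cleanly extracted, and the resulting bound must still beat the size of $|C^\perp|$, which is where the folded minimum distance $\geq 0.1n$ of $C^\perp$ is used in a crucial way.
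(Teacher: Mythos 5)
Your choice of $\good = C^\perp_n \times \gooderrors$, the observation that $\hat V$ is supported on $C^\perp$ (which collapses the double sum), the reduction to bounding the expectation over $H$, and the proof of the first expectation bound via $\E_{H_i}\bigl[|\hat W_i(0)|^2\bigr] = 1-p$ and a Chernoff bound are all exactly what the paper does. The paper then stops and says the second expectation bound ``is the same as in \cite{Yamakawa2022}''; you try to supply it, and that attempt has two genuine gaps.

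First, the quantitative calculation does not close. You bound each factor $\tfrac{1}{|T_i|}\sum_{e\in T_i}\omega_p^{-\Tr(e\,y_i)}$ by $O(|\Sigma|^{-1/2})$; a non-zero $\vy \in C^\perp$ has at least $0.1n$ non-zero folded symbols, giving roughly $|\Sigma|^{-0.05n}=q^{-0.05\CN}$. But $|C^\perp| = q^{\CN-k-1} \approx q^{0.9\CN}$, so $|C^\perp|\cdot q^{-0.05\CN}\approx q^{0.85\CN}$, which blows up rather than being negligible --- the cancellation does not overwhelm the number of codewords. (A stronger fact is available: conditioned on $|T_i|$, the set $T_i$ is a uniform subset, so $\E_{H_i}\bigl[\tfrac{1}{|T_i|}\sum_{e\in T_i}\omega_p^{-\Tr(e\,y_i)}\bigr]$ is \emph{exactly} $0$ for $y_i\neq 0$, and by independence of the $H_i$ the expectation of the full product vanishes identically. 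But this only applies once the indicator has been removed.)

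Second, the step ``dropping the good-error indicator up to a negligible correction'' is the crux and is not justified. Expanding
$\mathbb{1}[\ve\in\baderrors]\,\mathbb{1}[\ve+\vy\in\baderrors]=1-\mathbb{1}[\ve\in\gooderrors]-\mathbb{1}[\ve+\vy\in\gooderrors]+\mathbb{1}[\ve\in\gooderrors]\,\mathbb{1}[\ve+\vy\in\gooderrors]$
introduces correction terms that do \emph{not} factor across coordinates (the indicator depends on the global Hamming weight of $\ve$), so neither the per-coordinate cancellation nor the exact vanishing of the expectation applies to them, and Cauchy--Schwarz only gives $O(1)$ for each. This is precisely the delicate part that the Yamakawa--Zhandry proof of the second inequality must handle, and your sketch glosses over it.
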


    To simplify \autoref{claim 6.4}, we reuse the definition of set $\gooderrors$ in \autoref{lem:code},

    \begin{equation*}
     \gooderrors \coloneqq \{e \in \Sigma^n: \hw(e)\leq (p+\varepsilon)n \}.  
    \end{equation*}

    Take $\baderrors \coloneqq \Sigma^n \backslash \gooderrors$. By \autoref{lem: unique decodability}, we can set $\good \coloneqq C^{\bot} \times \gooderrors$, and $\bad \coloneqq (\Sigma^n \times \Sigma^n) \backslash \good$. Plug them into \autoref{claim 6.4}, and note that $\hat{V}(x) = 0$ for all $x \notin C^{\bot}$, we have the following:

    \begin{align*}
        &\sum_{(\vx,\ve)\in \bad}|\hat{V}(\vx)\hat{W}(\ve)|^2
        =\sum_{\ve\in \baderrors}|\hat{W}(\ve)|^2,\\
        &\sum_{\vz\in \Sigma^n}\left|\sum_{(\vx,\ve)\in \bad: \vx+\ve=\vz}\hat{V}(\vx)\hat{W}(\ve)\right|^2
        =\sum_{\vz\in \Sigma^n}\left|\sum_{\substack{\vx\in C^\perp,\ve\in \baderrors\\: \vx+\ve=\vz}}\hat{V}(\vx)\hat{W}(\ve)\right|^2.
    \end{align*}

    Then, by an averaging argument, it suffices bound their expected values to be negligible, i.e., 

    \begin{align}
        &\Ex_{H}\left[\sum_{\ve\in \baderrors}|\hat{W}(\ve)|^2\right]\leq \negl(n),
        \label{eq:hatV_hatW_exp}\\
        &\Ex_{H}\left[\sum_{\vz\in \Sigma^n}\left|\sum_{\substack{\vx\in C^\perp,\ve\in \baderrors\\: \vx+\ve=\vz}}\hat{V}(\vx)\hat{W}(\ve)\right|^2\right]\leq \negl(n).
        \label{eq:hatV_hatW_two_exp}
    \end{align}

    We only prove inequality~\eqref{eq:hatV_hatW_exp} here, while the proof for inequality~\eqref{eq:hatV_hatW_two_exp} is the same as in \cite{Yamakawa2022}. 

    For any $i \in [n]$, let
    \[
        W_i(\ve_i)=
        \begin{cases}
        \frac{1}{\sqrt{|T_i|}}& \ve_i\in T_i\\
        0& \text{otherwise}
        \end{cases}
    \]
    We now prove the following claim.

    \begin{claim}[Modified from the first half of Claim 6.6 in \cite{Yamakawa2022}]\label{claim: 6.6 first}
        For all $i \in [n]$, it holds that
        \begin{align*}
        \Ex_{H_i}\left[|\hat{W}_i(\vzero)|^2\right] = 1- p, 
        \end{align*}
        
    \end{claim}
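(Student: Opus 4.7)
The plan is to compute $\hat{W}_i(0)$ by a direct calculation and then take expectation over the biased oracle. Observe that the quantum Fourier transform evaluated at zero is simply a normalized sum: $\hat{W}_i(0) = \frac{1}{\sqrt{|\Sigma|}}\sum_{e \in \Sigma} W_i(e) \omega_p^{\Tr(e \cdot 0)} = \frac{1}{\sqrt{|\Sigma|}}\sum_{e \in \Sigma} W_i(e)$, since $\Tr(0) = 0$ and $\omega_p^0 = 1$. By the definition of $W_i$, this sum has $|T_i|$ non-zero terms, each equal to $1/\sqrt{|T_i|}$, so $\sum_{e} W_i(e) = \sqrt{|T_i|}$, giving $\hat{W}_i(0) = \sqrt{|T_i|/|\Sigma|}$ whenever $T_i \ne \emptyset$. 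Thus $|\hat{W}_i(0)|^2 = |T_i|/|\Sigma|$.

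Next, I would take expectation over $H_i$ drawn from the $p$-biased distribution. Writing $|T_i| = \sum_{e \in \Sigma} \mathbb{1}[H_i(e) = 0]$ and using linearity of expectation together with the fact that each $H_i(e)$ is independently $0$ with probability $1-p$, one gets $\Ex_{H_i}[|T_i|] = (1-p)|\Sigma|$. Therefore $\Ex_{H_i}[|\hat{W}_i(0)|^2] = (1-p)|\Sigma|/|\Sigma| = 1-p$, as claimed.

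The only subtlety is the degenerate event $T_i = \emptyset$, on which $W_i$ is not defined; for concreteness one can simply declare $\ket{\phi_i}$ to be $\ket{0}$ (or any fixed state) in that case. This event has probability $p^{|\Sigma|}$, which for $p = 2^{-6}$ and $|\Sigma| = 2^{\Theta(n)}$ is doubly-exponentially small and hence contributes negligibly to the expectation, so the identity $\Ex_{H_i}[|\hat{W}_i(0)|^2] = 1-p$ continues to hold up to an additive term that is absorbed by rounding or, if one prefers exact equality, by conditioning on $T_i \ne \emptyset$ (noting that conditioning changes $\Ex[|T_i|]$ by an $O(p^{|\Sigma|})$ amount). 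There is no real obstacle here: the entire claim is just a bookkeeping calculation, and the only thing to be careful about is the edge case where $H_i$ is identically one, which is harmless in our parameter regime.
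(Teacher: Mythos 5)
Your proof is correct and follows essentially the same route as the paper: compute $\hat{W}_i(0) = \frac{1}{\sqrt{|\Sigma|}}\sum_{z} W_i(z)$, observe that this gives $|\hat{W}_i(0)|^2 = |T_i|/|\Sigma|$, and then use $\Ex[|T_i|] = (1-p)|\Sigma|$. The paper does not explicitly address the $T_i = \emptyset$ edge case you flag; note that the cleanest convention there is to take $W_i \equiv 0$ (rather than $\ket{0}$), since then $|\hat{W}_i(0)|^2 = 0 = |T_i|/|\Sigma|$ holds with no error term and the stated equality is exact.
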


    \begin{proof}
        \[\Ex_{H_i}\left[|\hat{W}_i(\vzero)|^2\right]
    =\Ex_{H_i}\left[\left|\frac{1}{\sqrt{|\Sigma|}}\sum_{\vz\in \Sigma}W_i(\vz)\right|^2\right]
    =\frac{\Ex_{H_i}\left[|T_i|\right]}{|\Sigma|} = 1 - p.  \]
    \end{proof}

    The next claim follows from the same proof as in \cite{Yamakawa2022} by exploiting the symmetry.
    \begin{claim}[Second half of Claim 6.6 in \cite{Yamakawa2022}]\label{claim: 6.6 second}
        For all $i\in[n]$ and $\ve,\ve'\in \Sigma\setminus\{0\}$, it holds that
        \begin{align*}
        \Ex_{H_i}\left[|\hat{W}_i(\ve)|^2\right] =\Ex_{H_i}\left[|\hat{W}_i(\ve')|^2\right].
        \end{align*}
    \end{claim}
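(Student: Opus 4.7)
The plan is to exploit the fact that the $p$-biased distribution of $H_i$ is invariant under pre-composition with any bijection of $\Sigma$, combined with the $\FF_q$-linear structure of $\Sigma=\FF_q^m$, which lets us rotate any nonzero vector to any other. Writing the Fourier coefficient from the definition gives
\[
|\hat{W}_i(\ve)|^2 \;=\; \frac{1}{|\Sigma|\cdot |T_i|}\Bigl|\sum_{y\in T_i}\omega_p^{\Tr(y\cdot \ve)}\Bigr|^2,
\]
with the convention $|\hat{W}_i(\ve)|^2=0$ whenever $T_i=\emptyset$ (an event of probability $p^{|\Sigma|}$, which is both negligible and symmetry-preserving, so the convention does not affect the argument).

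The first substantive step is a change of sampling. For any bijection $\sigma\colon \Sigma\to\Sigma$, the random function $H_i\circ \sigma^{-1}$ is distributed identically to $H_i$; hence the random set $\sigma(T_i)=\{\sigma(y):H_i(y)=0\}$ has the same law as $T_i$. Replacing $T_i$ by $\sigma(T_i)$ inside the expectation, and then reindexing the Fourier sum via the substitution $y=\sigma(w)$, yields
\[
\Ex_{H_i}\!\bigl[|\hat{W}_i(\ve)|^2\bigr]
\;=\; \Ex_{H_i}\!\left[\frac{1}{|\Sigma|\cdot |T_i|}\Bigl|\sum_{w\in T_i}\omega_p^{\Tr(\sigma(w)\cdot \ve)}\Bigr|^2\right],
\]
where we also used $|\sigma(T_i)|=|T_i|$. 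Up to this point $\sigma$ is an arbitrary bijection.

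The last step is to choose $\sigma$ cleverly. Take $\sigma$ to be multiplication by a matrix $M\in\mathrm{GL}_m(\FF_q)$; then by the adjoint property of the pairing $(a,b)\mapsto a\cdot b$ on $\FF_q^m$ we have $(Mw)\cdot \ve = w\cdot (M^{\top}\ve)$, and therefore $\Tr(\sigma(w)\cdot \ve)=\Tr(w\cdot M^{\top}\ve)$. Since $\ve,\ve'\in \Sigma\setminus\{0\}$ are nonzero vectors in an $m$-dimensional $\FF_q$-vector space, each extends to a basis of $\Sigma$, so there is an invertible $M$ with $M^{\top}\ve=\ve'$. With this choice, the right-hand side displayed above equals $\Ex_{H_i}\!\bigl[|\hat{W}_i(\ve')|^2\bigr]$, which is the desired identity.

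I do not anticipate a real obstacle here: the argument is a clean symmetrisation, and the only steps needing care are (i) verifying that $\sigma(T_i)$ really has the same distribution as $T_i$ (immediate from product structure of the $p$-biased distribution) and (ii) tracking the transpose convention in $(Mw)\cdot \ve = w\cdot (M^{\top}\ve)$. Note that linearity of $\sigma$ is used only at the very last step to match Fourier phases; the distributional symmetry itself would work for arbitrary bijections.
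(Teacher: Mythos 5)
Your argument is correct and is the same symmetrisation argument the paper invokes (it defers to the proof of Claim~6.6 in Yamakawa--Zhandry, which exploits exactly this invariance of the i.i.d.\ oracle distribution under a linear bijection of $\Sigma$ chosen so that its transpose maps $\ve$ to $\ve'$). You simply spell out the details—including the harmless $T_i=\emptyset$ convention—more explicitly than the paper does.
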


    Note that 
    \[\hat{W}^{H}(\ve)=\prod_{i=1}^{n}\hat{W}_i^{H_i}(\ve_i).\]

    Therefore, by combining \autoref{claim: 6.6 first} and \autoref{claim: 6.6 second}, we get that for all $\ve\in \Sigma^n$,
        \[\Ex_{H}\left[|\hat{W}(\ve)|^2\right] = \mathcal{D}_n(\ve),\]
    where $\mathcal{D}_n$ is the same distribution defined in \autoref{lem:code}---each symbol is $0$ with probability $1 - p$ and otherwise a uniformly random element of $\Sigma \setminus \{0\}$.

    Finally, by the linearity of expectation and \autoref{lem:code}, we conclude that 
    \[ \Ex_{H}\left[\sum_{\ve\in \baderrors}|\hat{W}(\ve)|^2\right] = \Pr_{\ve \sim \mathcal{D}_n}[\ve\in \baderrors]  \leq \negl(n).\]

\bigskip
\subsection*{Acknowledgments}
MG is supported by the Swiss State Secretariat for Education, Research and Innovation (SERI) under contract number MB22.00026.
TG is supported by ERC Starting Grant 101163189 and UKRI Future Leaders Fellowship MR/X023583/1. SJ and JL are supported by Scott Aaronson's Quantum Systems Accelerator grant.

JL thanks Yeyuan Chen and Zihan Zhang for discussion on list-recoverable codes. SJ thanks Dmytro Gavisnky for discussions on the history of quantum-classical communication separations.

% =========================================================================== %
\DeclareUrlCommand{\Doi}{\urlstyle{sf}}
\renewcommand{\path}[1]{\small\Doi{#1}}
\renewcommand{\url}[1]{\href{#1}{\small\Doi{#1}}}
\bibliographystyle{alphaurl}

\begin{thebibliography}{BRWY13}

\bibitem[AA14]{AA14}
Scott Aaronson and Andris Ambainis.
\newblock The need for structure in quantum speedups.
\newblock {\em Theory Comput.}, 10:133--166, 2014.
\newblock \href {https://doi.org/10.4086/toc.2014.v010a006} {\path{doi:10.4086/toc.2014.v010a006}}.

\bibitem[ABK{\etalchar{+}}21]{ABKST21}
Scott Aaronson, Shalev Ben{-}David, Robin Kothari, Shravas Rao, and Avishay Tal.
\newblock Degree vs. approximate degree and quantum implications of {H}uang's sensitivity theorem.
\newblock In {\em Proceedings of the 53rd Symposium on Theory of Computing (STOC)}, pages 1330--1342. ACM, 2021.
\newblock \href {https://doi.org/10.1145/3406325.3451047} {\path{doi:10.1145/3406325.3451047}}.

\bibitem[ABK24]{ABK24}
Scott Aaronson, Harry Buhrman, and William Kretschmer.
\newblock A qubit, a coin, and an advice string walk into a relational problem.
\newblock In {\em Proceedings of the 15th Innovations in Theoretical Computer Science Conference (ITCS)}, volume 287 of {\em LIPIcs}. Schloss Dagstuhl, 2024.
\newblock \href {https://doi.org/10.4230/lipics.itcs.2024.1} {\path{doi:10.4230/lipics.itcs.2024.1}}.

\bibitem[AKK16]{Ambainis16}
Andris Ambainis, Martins Kokainis, and Robin Kothari.
\newblock Nearly optimal separations between communication (or query) complexity and partitions.
\newblock In {\em Proceedings of the 31st Conference on Computational Complexity (CCC)}, volume~50 of {\em LIPIcs}, pages 4:1--4:14. Schloss Dagstuhl, 2016.
\newblock \href {https://doi.org/10.4230/LIPIcs.CCC.2016.4} {\path{doi:10.4230/LIPIcs.CCC.2016.4}}.

\bibitem[BBBV97]{BBBV}
Charles~H. Bennett, Ethan Bernstein, Gilles Brassard, and Umesh Vazirani.
\newblock Strengths and weaknesses of quantum computing.
\newblock {\em SIAM J. Comput.}, 26(5):1510--1523, 1997.
\newblock \href {https://doi.org/10.1137/S0097539796300933} {\path{doi:10.1137/S0097539796300933}}.

\bibitem[BBCR13]{BBCR13}
Boaz Barak, Mark Braverman, Xi~Chen, and Anup Rao.
\newblock How to compress interactive communication.
\newblock {\em SIAM Journal on Computing}, 42(3):1327--1363, 2013.
\newblock \href {https://doi.org/10.1137/100811969} {\path{doi:10.1137/100811969}}.

\bibitem[BCW99]{BCW98}
Harry Buhrman, Richard Cleve, and Avi Wigderson.
\newblock Quantum vs. classical communication and computation.
\newblock In {\em Proceedings of the 30th Symposium on Theory of Computing (STOC)}, pages 63--68. ACM, 1999.
\newblock \href {https://doi.org/10.1145/276698.276713} {\path{doi:10.1145/276698.276713}}.

\bibitem[BJK04]{Yossef04}
Ziv Bar{-}Yossef, T.~S. Jayram, and Iordanis Kerenidis.
\newblock Exponential separation of quantum and classical one-way communication complexity.
\newblock In {\em Proceedings of the 36th Symposium on Theory of Computing (STOC)}, pages 128--137. ACM, 2004.
\newblock \href {https://doi.org/10.1145/1007352.1007379} {\path{doi:10.1145/1007352.1007379}}.

\bibitem[BK24a]{BK24}
Shalev Ben{-}David and Srijita Kundu.
\newblock Oracle separation of {QMA} and {QCMA} with bounded adaptivity.
\newblock In {\em Proceedings of the 51st International Colloquium on Automata, Languages, and Programming (ICALP)}, volume 297 of {\em LIPIcs}, pages Art. No. 21, 18. Schloss Dagstuhl, 2024.
\newblock \href {https://doi.org/10.4230/lipics.icalp.2024.21} {\path{doi:10.4230/lipics.icalp.2024.21}}.

\bibitem[BK24b]{BenDavid24}
Shalev Ben{-}David and Srijita Kundu.
\newblock Separations in query complexity for total search problems.
\newblock Technical report, arXiv, 2024.
\newblock \href {https://doi.org/10.48550/ARXIV.2410.16245} {\path{doi:10.48550/ARXIV.2410.16245}}.

\bibitem[BRWY13]{braverman2013}
Mark Braverman, Anup Rao, Omri Weinstein, and Amir Yehudayoff.
\newblock Direct products in communication complexity.
\newblock In {\em Proceedings of the 54th Symposium on Foundations of Computer Science (FOCS)}, pages 746--755. IEEE, 2013.
\newblock \href {https://doi.org/10.1109/FOCS.2013.85} {\path{doi:10.1109/FOCS.2013.85}}.

\bibitem[BS21]{BS21}
Nikhil Bansal and Makrand Sinha.
\newblock $k$-forrelation optimally separates quantum and classical query complexity.
\newblock In {\em Proceedings of the 53rd Symposium on Theory of Computing (STOC)}, pages 1303--1316. ACM, 2021.
\newblock \href {https://doi.org/10.1145/3406325.3451040} {\path{doi:10.1145/3406325.3451040}}.

\bibitem[CCHL23]{Chen2023}
Sitan Chen, Jordan Cotler, Hsin-Yuan Huang, and Jerry Li.
\newblock The complexity of {NISQ}.
\newblock {\em Nature Communications}, 14(1), 2023.
\newblock \href {https://doi.org/10.1038/s41467-023-41217-6} {\path{doi:10.1038/s41467-023-41217-6}}.

\bibitem[CFK{\etalchar{+}}21]{CFKMP21}
Arkadev Chattopadhyay, Yuval Filmus, Sajin Koroth, Or~Meir, and Toniann Pitassi.
\newblock Query-to-communication lifting using low-discrepancy gadgets.
\newblock {\em SIAM Journal on Computing}, 50(1):171--210, 2021.
\newblock \href {https://doi.org/10.1137/19M1310153} {\path{doi:10.1137/19M1310153}}.

\bibitem[dRGR22]{Rezende2022}
Susanna de~Rezende, Mika G\"{o}\"{o}s, and Robert Robere.
\newblock Proofs, circuits, and communication.
\newblock {\em SIGACT News}, 53(1), 2022.
\newblock \href {https://doi.org/10.1145/3532737.3532745} {\path{doi:10.1145/3532737.3532745}}.

\bibitem[Gav16]{Gav16}
Dmitry Gavinsky.
\newblock Entangled simultaneity versus classical interactivity in communication complexity.
\newblock In {\em Proceedings of the 48th Symposium on Theory of Computing (STOC)}, pages 877--884. ACM, New York, 2016.
\newblock \href {https://doi.org/10.1145/2897518.2897545} {\path{doi:10.1145/2897518.2897545}}.

\bibitem[Gav19]{Gav19}
Dmitry Gavinsky.
\newblock Quantum versus classical simultaneity in communication complexity.
\newblock {\em IEEE Transactions on Information Theory}, 65(10):6466--6483, 2019.
\newblock \href {https://doi.org/10.1109/TIT.2019.2918453} {\path{doi:10.1109/TIT.2019.2918453}}.

\bibitem[Gav21]{Gavinsky21}
Dmitry Gavinsky.
\newblock Bare quantum simultaneity versus classical interactivity in communication complexity.
\newblock {\em IEEE Transactions on Information Theory}, 67(10):6583--6605, 2021.
\newblock \href {https://doi.org/10.1109/TIT.2021.3050528} {\path{doi:10.1109/TIT.2021.3050528}}.

\bibitem[GKK{\etalchar{+}}08]{GKK+08}
Dmitry Gavinsky, Julia Kempe, Iordanis Kerenidis, Ran Raz, and Ronald de~Wolf.
\newblock Exponential separation for one-way quantum communication complexity, with applications to cryptography.
\newblock {\em SIAM Journal on Computing}, 38(5):1695--1708, 2008.
\newblock \href {https://doi.org/10.1137/070706550} {\path{doi:10.1137/070706550}}.

\bibitem[GKRS19]{GKRS19}
Mika G{\"o}{\"o}s, Pritish Kamath, Robert Robere, and Dmitry Sokolov.
\newblock Adventures in monotone complexity and {TFNP}.
\newblock In {\em Proceedings of the 10th Innovations in Theoretical Computer Science Conference (ITCS)}, pages 38:1--38:19, 2019.
\newblock \href {https://doi.org/10.4230/LIPIcs.ITCS.2019.38} {\path{doi:10.4230/LIPIcs.ITCS.2019.38}}.

\bibitem[GLM{\etalchar{+}}16]{Goos2016}
Mika G{\"o}{\"o}s, Shachar Lovett, Raghu Meka, Thomas Watson, and David Zuckerman.
\newblock Rectangles are nonnegative juntas.
\newblock {\em SIAM Journal on Computing}, 45(5):1835--1869, 2016.
\newblock \href {https://doi.org/10.1137/15M103145X} {\path{doi:10.1137/15M103145X}}.

\bibitem[GPW20]{GPW20}
Mika G\"{o}\"{o}s, Toniann Pitassi, and Thomas Watson.
\newblock Query-to-communication lifting for {$\sf{BPP}$}.
\newblock {\em SIAM Journal on Computing}, 49(4):441--461, 2020.
\newblock \href {https://doi.org/10.1137/17M115339X} {\path{doi:10.1137/17M115339X}}.

\bibitem[GR08]{GR08}
Venkatesan Guruswami and Atri Rudra.
\newblock Explicit codes achieving list decoding capacity: Error-correction with optimal redundancy.
\newblock {\em IEEE Transactions on Information Theory}, 54(1):135--150, 2008.
\newblock \href {https://doi.org/10.1109/TIT.2007.911222} {\path{doi:10.1109/TIT.2007.911222}}.

\bibitem[GRT22]{GirishRT22}
Uma Girish, Ran Raz, and Avishay Tal.
\newblock Quantum versus randomized communication complexity, with efficient players.
\newblock {\em Computational Complexity}, 31(2):17, 2022.
\newblock \href {https://doi.org/10.1007/S00037-022-00232-7} {\path{doi:10.1007/S00037-022-00232-7}}.

\bibitem[GS99]{GS99}
Venkatesan Guruswami and Madhu Sudan.
\newblock Improved decoding of reed-solomon and algebraic-geometry codes.
\newblock {\em IEEE Transactions on Information Theory}, 45(6):1757--1767, 1999.
\newblock \href {https://doi.org/10.1109/18.782097} {\path{doi:10.1109/18.782097}}.

\bibitem[Huf52]{Huffman52}
David Huffman.
\newblock A method for the construction of minimum-redundancy codes.
\newblock {\em Proceedings of the IRE}, 40(9):1098--1101, 1952.
\newblock \href {https://doi.org/10.1109/JRPROC.1952.273898} {\path{doi:10.1109/JRPROC.1952.273898}}.

\bibitem[KN97]{Kushilevitz1997}
Eyal Kushilevitz and Noam Nisan.
\newblock {\em Communication Complexity}.
\newblock Cambridge University Press, 1997.
\newblock \href {https://doi.org/10.1017/CBO9780511574948} {\path{doi:10.1017/CBO9780511574948}}.

\bibitem[KR11]{KR11}
Bo'az Klartag and Oded Regev.
\newblock Quantum one-way communication can be exponentially stronger than classical communication.
\newblock In {\em Proceedings of the 43rd Symposium on Theory of Computing (STOC)}, pages 31--40. ACM, New York, 2011.
\newblock \href {https://doi.org/10.1145/1993636.1993642} {\path{doi:10.1145/1993636.1993642}}.

\bibitem[Liu23]{Liu23}
Qipeng Liu.
\newblock Non-uniformity and quantum advice in the quantum random oracle model.
\newblock In {\em Proceedings of the 42nd Conference on the Theory and Applications of Cryptographic Techniques (EUROCRYPT)}, volume 14004 of {\em Lecture Notes in Computer Science}, pages 117--143. Springer, 2023.
\newblock \href {https://doi.org/10.1007/978-3-031-30545-0\_5} {\path{doi:10.1007/978-3-031-30545-0\_5}}.

\bibitem[LLPY24]{li2024classical}
Xingjian Li, Qipeng Liu, Angelos Pelecanos, and Takashi Yamakawa.
\newblock Classical vs quantum advice and proofs under classically-accessible oracle.
\newblock In {\em Proceedings of the 15th Innovations in Theoretical Computer Science Conference (ITCS)}, volume 287:72 of {\em LIPIcs}, pages 1--19. Schloss Dagstuhl, 2024.
\newblock \href {https://doi.org/10.4230/lipics.itcs.2024.72} {\path{doi:10.4230/lipics.itcs.2024.72}}.

\bibitem[LSS09]{LSS09}
Troy Lee, Gideon Schechtman, and Adi Shraibman.
\newblock Lower bounds on quantum multiparty communication complexity.
\newblock In {\em Proceedings of the 24th Conference on Computational Complexity (CCC)}, pages 254--262. IEEE, 2009.
\newblock \href {https://doi.org/10.1109/CCC.2009.24} {\path{doi:10.1109/CCC.2009.24}}.

\bibitem[NC00]{NCbook}
Michael Nielsen and Isaac Chuang.
\newblock {\em Quantum computation and quantum information}.
\newblock Cambridge University Press, 2000.
\newblock \href {https://doi.org/10.1017/CBO9780511976667} {\path{doi:10.1017/CBO9780511976667}}.

\bibitem[Raz99]{Raz99}
Ran Raz.
\newblock Exponential separation of quantum and classical communication complexity.
\newblock In {\em Proceedings of the 31st Symposium on Theory of Computing (STOC)}, page 358–367. ACM, 1999.
\newblock \href {https://doi.org/10.1145/301250.301343} {\path{doi:10.1145/301250.301343}}.

\bibitem[Rud07]{rudra2007list}
Atri Rudra.
\newblock {\em List decoding and property testing of error-correcting codes}.
\newblock PhD thesis, University of Washington, 2007.

\bibitem[RY20]{Rao2020}
Anup Rao and Amir Yehudayoff.
\newblock {\em Communication Complexity: And Applications}.
\newblock Cambridge University Press, 2020.
\newblock \href {https://doi.org/10.1017/9781108671644} {\path{doi:10.1017/9781108671644}}.

\bibitem[SSW23]{SSW23}
Alexander Sherstov, Andrey Storozhenko, and Pei Wu.
\newblock An optimal separation of randomized and quantum query complexity.
\newblock {\em SIAM Journal on Computing}, 52(2):525--567, 2023.
\newblock \href {https://doi.org/10.1137/22M1468943} {\path{doi:10.1137/22M1468943}}.

\bibitem[Vad12]{Vadhan12psr}
Salil Vadhan.
\newblock Pseudorandomness.
\newblock {\em Foundations and Trends in Theoretical Computer Science}, 7(1-3):1--336, 2012.
\newblock \href {https://doi.org/10.1561/0400000010} {\path{doi:10.1561/0400000010}}.

\bibitem[WYZ23]{WYZ23}
Shuo Wang, Guangxu Yang, and Jiapeng Zhang.
\newblock Communication complexity of set-intersection problems and its applications.
\newblock Technical Report TR23-164, Electronic Colloquium on Computational Complexity (ECCC), 2023.
\newblock URL: \url{https://eccc.weizmann.ac.il/report/2023/164/}.

\bibitem[YZ24a]{Yamakawa2022}
Takashi Yamakawa and Mark Zhandry.
\newblock Verifiable quantum advantage without structure.
\newblock {\em Journal of the ACM}, 71(3):1--50, 2024.
\newblock \href {https://doi.org/10.1145/3658665} {\path{doi:10.1145/3658665}}.

\bibitem[YZ24b]{YZ23}
Guangxu Yang and Jiapeng Zhang.
\newblock Communication lower bounds for collision problems via density increment arguments.
\newblock In {\em Proceedings of the 56th Symposium on Theory of Computing (STOC)}, 2024.
\newblock \href {https://doi.org/10.1145/3618260.3649607} {\path{doi:10.1145/3618260.3649607}}.

\bibitem[Zha12]{Zhandry12}
Mark Zhandry.
\newblock Secure identity-based encryption in the quantum random oracle model.
\newblock In {\em Proceedings of the 32nd Cryptology Conference (CRYPTO)}, volume 7417 of {\em Lecture Notes in Computer Science}, pages 758--775. Springer, 2012.
\newblock \href {https://doi.org/10.1007/978-3-642-32009-5\_44} {\path{doi:10.1007/978-3-642-32009-5\_44}}.

\end{thebibliography}
\newcommand{\etalchar}[1]{$^{#1}$}

\end{document}